\definecolor{bluegray}{rgb}{0.4, 0.6, 0.8}
\definecolor{turquoise}{rgb}{0.2, 0.7, 0.6}
\definecolor{hy-green}{rgb}{0.1, 0.5, 0.1}
\definecolor{cyan-custom}{cmyk}{1,0, 0, 0}
\definecolor{red-custom}{cmyk}{0.02 ,1.00, 0.94, 0.00}
\definecolor{gray-custom}{cmyk}{0.00 ,0.00, 0.00, 0.50}
 \newcommand{\suchthat}{\;\ifnum\currentgrouptype=16 \middle\fi|\;}
\title{{The degenerate vertices of the $2$-qubit $\Lambda$-polytope and their update rules}} 
	\author{Selman Ipek\footnote{selman.ipek@bilkent.edu.tr} }
	\author{Cihan Okay\footnote{cihan.okay@bilkent.edu.tr}}
	\affil{Department of Mathematics, Bilkent University, Ankara, Turkey}
\begin{document}
  \maketitle  
  
%\noindent{\textbf{From}: Selman Ipek}\\
%\noindent{\textbf{To}: Cihan Okay}

%\begin{abstract}
%
%\end{abstract}

\begin{abstract}
{Recently, a class of objects, known as $\Lambda$-polytopes, were introduced for classically simulating universal quantum computation with magic states. In $\Lambda$-simulation, the probabilistic update of $\Lambda$ vertices under Pauli measurement {yields} dynamics consistent with quantum mechanics. Thus, an important open problem in the study of $\Lambda$-polytopes is characterizing its vertices and determining their update rules. In this paper, we obtain and describe the update of all degenerate vertices of $\Lambda_{2}$, the $2$-qubit $\Lambda$ polytope. Our approach exploits the fact that $\Lambda_{2}$ projects to a well-understood polytope $\MP$ consisting of distributions on the Mermin square scenario. More precisely, we study the ``classical" polytope $\overline{\MP}$, which is $\MP$ intersected by the polytope defined by a set of Clauser-Horne-Shimony-Holt (CHSH) inequalities. Owing to a duality between CHSH inequalities and vertices of $\MP$ we utilize a streamlined version of the double-description method for vertex enumeration to obtain certain vertices of $\overline{\MP}$.}
\end{abstract}

\tableofcontents

\section{Introduction}\label{sec:intro}

{Quantum computation with magic states (QCM) is a model of universal quantum computation in which the computational primitives include arbitrary quantum states evolving according to update under Clifford unitaries and Pauli measurements {\cite{magic}}. Recently a quasiprobability-based algorithm for classically simulating QCM with qubits was introduced by Zurel, et al. \cite{zurel2020hidden} based on a family of convex polytopes called $\Lambda$-polytopes. In contrast to many quasiprobability-based approaches, in $\Lambda$-simulation all computational primitives of QCM are positively represented, including the representation of all quantum states, which for any number $n$ of qubits always lie inside $\Lambda_{n}$, the $n$-qubit $\Lambda$ polytope.
}

{As part of a broader effort to understand the computational power of quantum computers an open problem in the area of classical simulation of quantum circuits is to characterize the efficiency (or lack thereof) of $\Lambda$-simulation. Progress along these lines was recently made in \cite{zurel2023simulating}. One of the remaining open problems in analyzing the complexity of $\Lambda$-simulation, however, is in characterizing the hardness of the vertex enumeration problem for $\Lambda$-polytopes, which is a necessary precondition for running the simulation and the key bottleneck one encounters in practice. Indeed, there are no known polynomial time algorithms for solving the vertex enumeration problem for general polytopes (see e.g., \cite{kaibel2003some}) and due to a classic result by Khachiyan, et al.  \cite{khachiyan2009generating} the problem is known to be \textbf{$\text{NP}$}-{hard} for (unbounded) polyhedra unless $\mathbf{P}=\mathbf{NP}$.}

{Here we take steps to better understand this problem for $\Lambda$-polytopes by studying the vertices of $\Lambda_{2}$, the $2$-qubit $\Lambda$ polytope.} In previous work \cite{okay2022mermin} it was shown that $\Lambda_{2}$ was a sub-polytope of the nonsignaling polytope corresponding to the $(2,3,2)$ Bell scenario{, which we denote by $\NS$}; see e.g., \cite{jones2005interconversion}. {W}e refine this result by showing that several new types of vertices (under the action of the $2$-qubit Clifford group) can be captured by considering the intersection of $\Lambda_{2}$ with the polytope $\CL$, which is the ``classical" part of $\NS$; i.e., the Bell polytope of the $(2,3,2)$ Bell scenario.
Here is a diagram of the polytopes involved in our {analysis:} 
%lifting construction:
$$
\begin{tikzcd}
 & \NS & \\
\Lambda_2 \arrow[ur] & & \CL \arrow[ul] \\
&\Lambda_2\cap \CL \arrow[ul] \arrow[ur]&
\end{tikzcd}
\begin{tikzcd}\;\;\;\;
\phantom{a}\arrow[r,"\pi"] & \phantom{a}
\end{tikzcd}\;\;\;\;
\begin{tikzcd}
 &  & \\
\MP   & & {\overline\CL}  \\
& {\overline\MP} \arrow[ul] \arrow[ur]&
\end{tikzcd}
$$
The non-signaling polytope $\NS$ is contained in $\RR^{15}$ and we can see $\Lambda_2$ as a subpolytope there by considering the Pauli expectation values. The map $\pi$ projects onto the non-local Pauli operators, that is, onto $\RR^9$. The Mermin polytope $\MP$ is given by $\pi(\Lambda_2)$ and the projected classical polytope is $\overline\CL=\pi(\CL)$. The strategy taken in this paper is to describe vertices of $\Lambda_2$ by lifting the vertices of a polytope in the image of $\pi$. 
It is computationally known that the $2$-qubit $\Lambda$ polytope has $8$ types of vertices ($T_i,\; i=1,2,\cdots,8$) each consisting of the orbits under the action of the $2$-qubit Clifford group.
However, the vertices of $\Lambda_n$ for $n\geq 3$ are not known, even computationally. {Only certain subclasses are known to date. One such class of vertices are called closed noncontextual, or cnc vertices \cite{raussendorf2020phase}.} For $\Lambda_2$ these cnc vertices cover the first two types $T_1$ and $T_2$:
$$  
T_1:
\begin{tabular}{ c|c c c } 
 $1$ & $0$ & $0$ & $0$\\
 \hline 
 $0$ & $1$ & $1$ & $1$ \\ 
 $1$ & $0$ & $0$ & $0$  \\
 $1$ & $0$ & $0$ & $0$
\end{tabular}\notag
\;\;\;\;
\;\;\;\;
\;\;\;\;
\;\;\;\;
T_2:
\begin{tabular}{ c|c c c } 
 $1$ & $0$ & $0$ & $1$\\
 \hline 
 $0$ & $1$ & $1$ & $0$ \\ 
 $0$ & $1$ & $-1$ & $0$  \\
 $1$ & $0$ & $0$ & $1$
\end{tabular}\notag
$$
Another class of vertices valid for all $\Lambda_{n}$ was recently discovered in \cite{zurel2023simulation} based on a connection between line graphs and Majorana Fermions. In the case of $\Lambda_{2}$ this corresponds to a $T_{4}$ vertex. Together with the $T_{8}$ vertices a distinguishing feature of $T_{4}$ vertices for $2$-qubits is that they are non-degenerate \cite{luenberger1984linear}:{
$$  
T_4:
\begin{tabular}{ c|c c c } 
 $1$ & $1/2$ & $-1/2$ & $1/2$\\
 \hline 
 $0$ & $1/2$ & $1/2$ & $1/2$ \\ 
 $-1/2$ & $0$ & $0$ & $0$  \\
 $0$ & $1/2$ & $1/2$ & $1/2$
\end{tabular}\notag
\;\;\;\;
\;\;\;\;
\;\;\;\;
\;\;\;\;
T_8:
\begin{tabular}{ c|c c c } 
 $1$ & $3/4$ & $1/2$ & $1/2$\\
 \hline 
 $3/4$ & $1/2$ & $3/4$ & $1/4$ \\ 
 $1/4$ & $1/2$ & $-1/4$ & $-1/4$  \\
 $3/4$ & $1/2$ & $1/4$ & $3/4$
\end{tabular}\notag
$$ 
}

The vertex enumeration problem for $\MP$ is solved in \cite{okay2022mermin}. There are two types ($\bar T_1$ and $\bar T_2$) of vertices up to the action of the combinatorial automorphism group.
It turns out that $T_1$ and $T_2$ vertices map to $\bar T_1$ and $\bar T_2$ vertices under $\pi$ map. In other words, the vertices of $\MP$ give us information about the first two types of vertices of $\Lambda_2$.  
In this paper, we explore the possibility of whether other types of vertices can be captured by a polytope in $\RR^9$. We explore this question by studying the vertices of the intersection polytope
$$
\overline\MP  =  \MP \cap \overline\CL.
$$
By computer computation, we find that this polytope has $10$ types of vertices ($\tau_i, i=1,2,\cdots,10$).  
Using the combinatorial structure of the Mermin polytope studied in \cite{okay2022mermin} and a version of the Double Description method (DD) of polytope theory \cite{fukuda2005double}, we described four types: $\tau_3,\tau_5,\tau_6,\tau_7$. {Our implementation of the DD method is facilitated by a duality between $\bar T_{2}$ vertices and the facets of $\overline{\text{CL}}$ (see Fig.~(\ref{fig:duality})) which motivated Proposition~\ref{pro:DD for intersection of pair of polytopes}. This duality was, in fact, first observed by Howard and Vala in \cite{howard2012nonlocality} in relation to characterizing nonlocality (violation of a Bell inequality) as a necessary pre-condition for universal quantum computation.
}
\begin{figure}[h!]
\centering
\includegraphics[width = 0.3\linewidth]{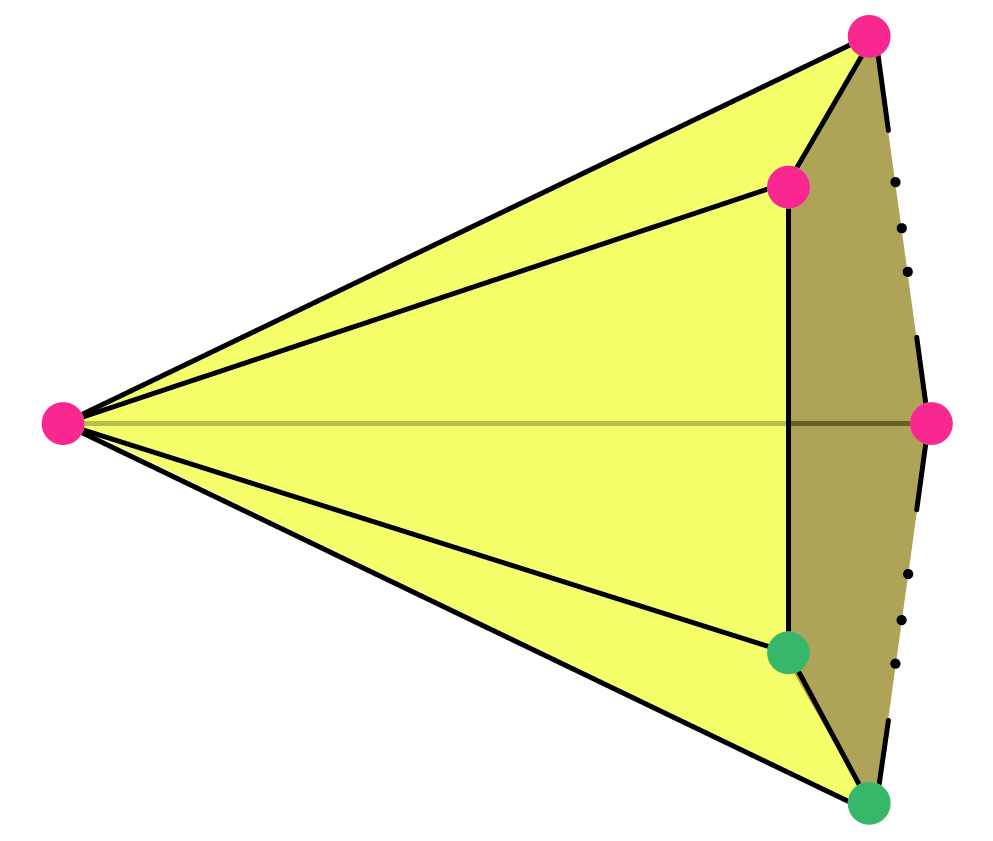}
\caption{{An illustration of the duality between $\bar T_{2}$ vertices (pink) and facets of $\overline{\text{CL}}$ (shaded (hyper)plane). The latter are (essentially) Clauser-Horne-Shimony-Holt (CHSH) facets. Moreover, all $\bar T_{1}$ (green) and $\bar T_{2}$ (pink) neighbors of any $\bar T_{2}$ are tight at the CHSH facet dual to it.}}
\label{fig:duality}
\end{figure}

%%. More explicitly, they work with the Clifford polytope \cite{buhrman2006new}, which is the polar dual \cite{ziegler2012lectures} of $\MP$ and they establish that a single-qubit channel 
%In their setting, they work with the Clifford polytope \cite{buhrman2006new}, which is the polar dual \cite{ziegler2012lectures} of $\MP$ and they establish that

%\comm{connect to Howard et al and talk about the dual relation between CHSH and $\bar T_2$ to motivate our main polytope Proposition}

Having determined some of the vertices of $\overline\MP$ we provide convex decompositions of these vertices in $\overline \CL$. We show that they can be written as uniform mixtures of a small number of $\pi(d)$ where $d$ is a deterministic vertex of $\CL$. Each $\pi(d)$ can lift to a line segment $pd+(p-1)\bar d$ where $p\in [0,1]$ and $\bar d$ is the deterministic vertex whose outcomes are flipped for each party.
For suitable choices of $p\in \set{0,1}$ we obtain the following vertices of $\Lambda_2$:
\begin{equation}\label{eq:vertices-lambda2}
\begin{aligned}
T_3:&
\begin{tabular}{ c|c c c } 
 $1$ & $-1/2$ & $-1/2$ & $1/2$\\
 \hline 
 $-1/2$ & $1$ & $0$ & $0$ \\ 
 $0$ & $1/2$ & $-1/2$ & $-1/2$  \\
 $1$ & $-1/2$ & $-1/2$ & $1/2$
\end{tabular}
&\;\;\;\;\;\;&
T_5:&
\begin{tabular}{ c|c c c } 
 $1$ & $-1/2$ & $-1/2$ & $-1/2$ \\
 \hline 
 $-1/2$ & $1$ & $0$ & $0$ \\ 
 $1/2$ & $0$ & $-1$ & $0$  \\
 $-1/2$ & $0$ & $0$ & $1$
\end{tabular}
\\
\\
T_6:&
\begin{tabular}{ c|c c c } 
 $1$ & $-1$ & $1/2$ & $1/2$ \\
 \hline 
 $-1/2$ & $1/2$ & $0$ & $0$ \\ 
 $-1/2$ & $1/2$ & $-1$ & $0$  \\
 $0$ & $0$ & $-1/2$ & $1/2$
\end{tabular}
&&
T_7:&
\begin{tabular}{ c|c c c } 
 $1$ & $-2/3$ & $1/3$ & $2/3$ \\
 \hline 
 $-1/3$ & $2/3$ & $1/3$ & $0$ \\ 
 $-2/3$ & $1/3$ & $-2/3$ & $-1/3$  \\
 $2/3$ & $-1/3$ & $0$ & $1$
\end{tabular}
\end{aligned}
\end{equation}
Our approach provides a decomposition of these types of vertices as uniform mixtures of deterministic vertices. From this decomposition the update rules can be computed by analyzing the updates of the deterministic vertices.
Together with the $T_1$ and $T_2$ vertices this covers all the degenerate vertices of $\Lambda_2$. {For the remaining non-degenerate types, $T_4$ and $T_8$, although {there are vertices in these orbits that} fall into $\overline\MP$ under the image of $\pi$, they do not map to a vertex, and thus cannot be captured by our method.}

{The structure of the paper is as follows. We introduce the polytopes relevant for our analysis in Section~\ref{sec:lambda}. This includes a description of our vertex enumeration strategy, which builds off of basic polytope-theoretic notions introduced in Appendix~\ref{sec:DDM-VEP}. In Section~\ref{sec:Structure of the Mermin polytope} we review results related to Mermin polytopes that will be needed in our application of the DD method, which we perform in Section~\ref{sec:classical-mermin}. We lift vertices of $\overline{\MP}$ to vertices of $\Lambda_{2}$ in Section~\ref{sec:lifting} and conclude with some closing remarks in Section~\ref{sec:conclusion}.}

%\comm{finish with the organization of the paper}

\paragraph{Acknowledgments.}
This work is supported by the Digital Horizon Europe project FoQaCiA, GA no. 101070558. The authors also want to acknowledge support from the US Air Force Office of Scientific Research under award number FA9550-21-1-0002.

\section{$\mathbf{\Lambda}$-polytopes}\label{sec:lambda}

The $n$-qubit Pauli operators acting on $\hH=(\CC^2)^{\otimes n}$ are given by 
$$
A_1\otimes A_2 \otimes \cdots \otimes A_n,\;\;\;\;A_i\in \set{\one,X,Y,Z}
$$
where $\one,X,Y,Z$ are the $2\times 2$ Pauli matrices: 
%\comm{matrix notation to be unified}:
$$
\one = \begin{bmatrix}
1 & 0\\
0 & 1
\end{bmatrix} \;\;\;\; X =\begin{bmatrix}
0 & 1\\
1 & 0
\end{bmatrix}\;\;\;\; Y = \begin{bmatrix}
0 & -i\\
i & 0
\end{bmatrix}\;\;\;\;
Z=\begin{bmatrix}
1 & 0\\
0 & -1
\end{bmatrix}.  
$$
These operators constitute an operator basis of the space $\Herm(\hH)$ of hermitian operators.
The $n$-qubit Pauli group $P_n$ is defined to be the subgroup of the unitary group $U(\hH)$ generated by the Pauli operators. 
The $n$-qubit Clifford group is the quotient 
$$
\Cl_n =\frac{N(P_n)}{\Span{e^{i\pi \alpha}\one:\, \alpha\in \RR}} 
$$
where $N(P_n)$ is the normalizer of $P_n$ in $U(\hH)$.
We will call a maximal abelian subgroup $S\subset P_n$ a stabilizer subgroup if every element squares to the identity operator.  
A stabilizer state is the common eigenstate of the operators in a stabilizer subgroup{, which we will denote using the associated projector.} 
The $n$-qubit $\Lambda$-polytope is defined by
\begin{equation}\label{eq:Lambda n}
\Lambda_n = \set{{A}\in \Herm(\hH):\, \Tr({A})=1,\;\; \Tr({A}\Pi)\geq 0,\; \forall \text{ $n$-qubit stabilizer state } \Pi}
\end{equation}
%\si{[[@Cihan: Why not use $A$ for Hermitian here?]]}\\
%$\Lambda_n$ consists of $X\in \Herm(\hH)$ such that
% $\Tr(X)=1$ and
 %$\Tr(X\Pi)\geq 0$ for every  stabilizer state $\Pi$.
 For notation and background on polytope theory we refer to Section \ref{sec:polytope-theory}.

\subsection{The $2$-qubit $\Lambda$-polytope}
For $2$-qubits we partition the set of Pauli operators (ignoring the identity $I\otimes I$) into a local and non-local parts: 
\begin{itemize}
\item Local $2$-qubit Pauli operators
\begin{equation}
\label{eq:Pauli local}
\set{X\otimes \one, Y\otimes \one,Z\otimes \one, \one\otimes X, \one\otimes Y,\one\otimes Z}.
\end{equation}
\item Nonlocal $2$-qubit Pauli operators
\begin{equation}
\label{eq:Pauli non-locali}
\set{X\otimes X, X\otimes Y,X\otimes Z,Y\otimes X,Y\otimes Y,Y\otimes Z,Z\otimes X,Z\otimes Y,Z\otimes Z}.
\end{equation}
\end{itemize}
A $2$-qubit stabilizer subgroup consists of $4$ elements.
In total there are $60$ stabilizer subgroups. Ignoring the $\pm1$ signs there are essentially $15$ of them.
That is, a stabilizer subgroup can be written as
$$
S=\set{I,(-1)^{\eta(A)} A, (-1)^{\eta(B)}B,(-1)^{\eta(A)+\eta(B)}AB }.
$$ 
where $\eta:\set{A,B}\to \ZZ_2$ is a function and $\set{I,A,B,AB}$ is one of the following subgroups:
\begin{itemize}
\item Local stabilizer subgroups
\begin{equation}
\label{eq:local}
\begin{aligned}
&\{I\otimes I,X\otimes I,I\otimes X,X\otimes X\}, \{I\otimes I,X\otimes I,I\otimes Y,X\otimes Y\}, \{I\otimes I,X\otimes I,I\otimes Z,X\otimes Z\} 
\\
&\{I\otimes I,Y\otimes I,I\otimes X,Y\otimes X\}, \{I\otimes I,Y\otimes I,I\otimes Y,Y\otimes Y\}, \{I\otimes I,Y\otimes I,I\otimes Z,Y\otimes Z\}.
\\
&\{I\otimes I,Z\otimes I,I\otimes X,Z\otimes X\}, \{I\otimes I,Z\otimes I,I\otimes Y,Z\otimes Y\}, \{I\otimes I,Z\otimes I,I\otimes Z,Z\otimes Z\}.
\end{aligned} 
\end{equation}
\item Non-local stabilizer subgroups
\begin{equation}
\label{eq:non-local}
\begin{aligned}
&\{I\otimes I,X\otimes Y,Y\otimes X,Z\otimes Z\}, \{I\otimes I,Y\otimes Z,Z\otimes Y,X\otimes X\}, \{I\otimes I,Z\otimes X,X\otimes Z,Y\otimes Y\},\\
&\{I\otimes I,X\otimes Y,Z\otimes X,-Y\otimes Z\}, \{I\otimes I,Y\otimes X,X\otimes Z,-Z\otimes Y\}, \{I\otimes I,Z\otimes Z,Y\otimes Y,-X\otimes X\}.
\end{aligned} 
\end{equation}
\end{itemize}
There is a one-to-one correspondence between stabilier states and stabilizer subgroups.
The stabilizer state corresponding to $S$ is given by the projector
$$
\Pi_S =  \frac{1}{4} \sum_{A\in S}A.
$$
The $2$-qubit $\Lambda$-polytope {defined in Eq.~(\ref{eq:Lambda n}), where $n=2$,} can be described as a polytope 
\begin{equation}
\label{eq:H description}
P(M,{-I_{60\times 1}})=\set{x\in \RR^{15}:\, Mx\geq {-I_{60\times 1} }}
\end{equation}
where $I_{60\times 1}$ is the column matrix consisting of $1$'s and 
 $M$ is a $60\times 15$ matrix, whose rows are indexed by stabilizer subgroups (states) and columns indexed by the Pauli operators, defined by
$$
M_{S,A} = \Tr(\Pi_S A).
$$  
It will be convenient to represent a Hermitian operator {$A$} as a tableau
%\begin{eqnarray}
%\begin{tabular}{ c|c c c } 
% $\Span{A}_{II}$ & $\Span{A}_{IX}$ & $\Span{A}_{IY}$ & $\Span{A}_{IZ}$ \\
% \hline 
% $\Span{A}_{XI}$ & $\Span{A}_{XX}$ & $\Span{A}_{XY}$ & $\Span{A}_{XZ}$ \\ 
% $\Span{A}_{YI}$ & $\Span{A}_{YX}$ & $\Span{A}%_{YY}$ & $\Span{A}_{YZ}$  \\
% $\Span{A}_{ZI}$ & $\Span{A}_{ZX}$ & $\Span{A}_{ZY}$ & $\Span{A}_{ZZ}$
%\end{tabular}%
%\label{eq:expectation-tableau}
%\end{eqnarray}
\begin{eqnarray} 
\begin{tabular}{ c|c c c } 
 $x_{II}$ & $x_{IX}$ & $x_{IY}$ & $x_{IZ}$ \\
 \hline 
 $x_{XI}$ & $x_{XX}$ & $x_{XY}$ & $x_{XZ}$ \\ 
 $x_{YI}$ & $x_{YX}$ & $x_{YY}$ & $x_{YZ}$  \\
 $x_{ZI}$ & $x_{ZX}$ & $x_{ZY}$ & $x_{ZZ}$
\end{tabular}%
\label{eq:expectation-tableau}
\end{eqnarray}
where {$x_{BC}$ is the expectation $\Span{A}_{BC}=\Tr(A (B\otimes C))$}.

\subsection{Vertex enumeration strategy} 
 
Our goal is to describe the vertices of $\Lambda_2$ from the $H$-description $P(M,{-}I_{60\times 1})$ given in Eq.~(\ref{eq:H description}).  
Let us write 
\begin{equation}\label{eq:projection}
\pi:\RR^{15}\to \RR^9
\end{equation}
for the map that projects onto the non-local coordinates, i.e., those indexed by $\Span{{A}}_{BC}$ in Eq.~(\ref{eq:expectation-tableau}) where ${B,C}\in \set{X,Y,Z}$.
We begin by decomposing the polytope into two parts:
\begin{itemize}
\item Local $\Lambda_2$-polytope: $\Lambda_2^\loc$ is defined by $P(M^\loc,{-}I_{36\times 1})$  where $M^\loc$ is the $36\times 15$ matrix obtained from $M$ by keeping the rows indexed by local stabilizer states. {As shown in \cite{okay2022mermin} t}his polytope can be identified with the non-signaling polytope $\NS$ of the $(2,3,2)$ Bell scenario \cite{jones2005interconversion}:
\begin{equation}
\label{eq:Lambda local with NS}
\NS\cong \Lambda_2^\loc.
\end{equation}
The identification between $\Lambda^\loc_2$ and $\NS$ is as follows: The two parties Alice and Bob each perform the Pauli measurements $X,Y,Z$ to obtain an outcome in $\ZZ_2{=\set{0,1}}$. Interpreting the expectations describing the points in $\Lambda^\loc_2$ as probability distributions on pairs of measurements $A\otimes I$ for Alice and $I\otimes B$ for Bob, where $A,B\in \set{X,Y,Z}$, specifies a point in $\NS$. %This mapping turns out to be an isomorphism of the polytopes.

\item Non-local $\Lambda_2$-polytope: $\Lambda_2^\nloc$ is defined by $P(M^\nloc,{-}I_{24\times 1})$  where $M^\nloc$ is the $24\times 15$ matrix obtained from $M$ by keeping the rows indexed by non-local stabilizer states. The image of this polytope under the projection map $\pi$ is the Mermin polytope\footnote{{In \cite{okay2022mermin} this polytope is denoted by $\MP_1$.}} $\MP$ studied in \cite{okay2022mermin}:
\begin{equation}
\label{eq:Lambda nonlocal with MP}
\MP \cong \pi(\Lambda_2^\nloc).
\end{equation}
More explicitly, let $\bar M^\nloc$ be the matrix obtained from $M^\nloc$ by keeping the columns indexed by non-local Pauli operators. Then 
\begin{equation}
\label{eq:MP1 polytope}
\MP = P(\bar M^\nloc,{-}I_{24\times 1}).
\end{equation}
We will consider the 
%symmetry 
{subgroup}
\begin{equation}
\label{eq:G1}
{G} = \Span{\Cl_1\times \Cl_1,\SWAP}
\end{equation}
of the $2$-qubit Clifford group   generated by the single qubit Clifford unitaries and the $\SWAP$ operator that permutes the two tensor factors. This group acts on $\MP$ by combinatorial automorphisms.
\end{itemize} 

\begin{figure}
\centering
\begin{subfigure}{.33\textwidth}
\[\begin{tabular}{ c|c c c } 
 $1$ & $1$ & $0$ & $0$ \\
 \hline 
 $1$ & $1$ & $0$ & $0$ \\ 
 $0$ & $0$ & $0$ & $0$  \\
 $0$ & $0$ & $0$ & $0$
\end{tabular},\]
\caption{NN}
\label{fig:inequalities-nonnegative}
\end{subfigure}%
\begin{subfigure}{.33\textwidth}
\[\begin{tabular}{ c|c c c } 
 $2$ & $0$ & $0$ & $0$ \\
 \hline 
 $0$ & $-1$ & $-1$ & $0$ \\ 
 $0$ & $-1$ & $1$ & $0$  \\
 $0$ & $0$ & $0$ & $0$
\end{tabular},\]
\caption{CHSH}
\label{fig:ineq-chsh}
\end{subfigure}%
\begin{subfigure}{.33\textwidth}
\[\begin{tabular}{ c|c c c } 
 $4$ & $1$ & $0$ & $-1$ \\
 \hline 
 $-1$ & $-1$ & $1$ & $1$ \\ 
 $0$ & $-1$ & $0$ & $-1$  \\
 $-1$ & $-1$ & $-1$ & $1$
\end{tabular}\]
\caption{FCG}
\label{fig:inequalities-fcg}
\end{subfigure}
\caption{Representatives of the three types of inequalities describing {$\text{CL}$}}
\label{fig:232-inequalities}
\end{figure}

There are special vertices in the non-signaling polytope {$\NS$} called the deterministic vertices. We can represent them in a tableau:
% as
\begin{eqnarray}
\begin{tabular}{ c|c c c } 
$1$ & $(-1)^{s_{0}}$ & $(-1)^{s_{1}}$ & $(-1)^{s_{2}}$ \\
 \hline 
 $(-1)^{r_{0}}$ & $(-1)^{r_{0}+s_{0}}$ & $(-1)^{r_{0}+s_{1}}$ & $(-1)^{r_{0}+s_{2}}$ \\ 
 $(-1)^{r_{1}}$ & $(-1)^{r_{1}+s_{0}}$ & $(-1)^{r_{1}+s_{1}}$ & $(-1)^{r_{1}+s_{2}}$  \\
 $(-1)^{r_{2}}$ & $(-1)^{r_{2}+s_{0}}$ & $(-1)^{r_{2}+s_{1}}$ & $(-1)^{r_{1}+s_{1}}$
\end{tabular}%
\label{eq:deterministic-vertices}
\end{eqnarray}
where we 
{write}
%use the notation that 
$r_{i}$,  $i=0,1,2$, for the outcome of the Pauli measurements $X\otimes I, Y\otimes I, Z\otimes I$, respectively. {Similarly, we write} 
%for 
$s_i$ {for} the measurement outcomes of Bob.  
Let $D$ denote the set of deterministic vertices. The classical polytope is defined to be the convex hull of the deterministic vertices:
\begin{equation}
\label{eq:def-c232}
{\CL}= \conv(D).
\end{equation}
The $H$-representation of ${\CL}$ will be important. 
Following the  
results of Froissart \cite{froissart1981constructive} as well as Collins and Gisin \cite{collins2004relevant}, there are three orbits of facets under the automorphism group of ${\CL}$: $36$ non-negativity (NN) inequalities; $72$ Clauser--Horne--Shimony--Holt (CHSH) inequalities; $576$ Froissart, Collins--Gisin (FCG) inequalities. See Fig.~(\ref{fig:232-inequalities}).

Now, we introduce a polytope whose $V$-description will play a key role in the vertex enumeration problem for $\Lambda_2$:
\begin{equation}
\label{eq:barMP}
\overline\MP = \MP \cap {\overline \CL}
\end{equation}
where ${\overline \CL}=\pi({\CL})$, the  projection of the classical polytope {under the  map in (\ref{eq:projection})}. The projection of a FCG inequality can be written as a convex mixture of CHSH inequalities \cite{howard2012nonlocality}.  
Therefore the $H$-description of ${\overline \CL}$ consists of {$72$ CHSH inequalities}  together with 
%the 
%non-negativity
{$18$ NN} inequalities.
% ($18$). 
The former set can be obtained by the action of ${G}$ defined in Eq.~(\ref{eq:G1}) on the representative given in Fig.~(\ref{fig:ineq-chsh}). 
The latter are of the form  $-1\leq {\Span{A}_{BC}} \leq 1$.  
To describe vertices of $\overline \MP$ we will use
{modified version of the double description (DD) algorithm \cite{fukuda2005double}
given in Proposition \ref{pro:DD for intersection of pair of polytopes}} with 
$$
P_1={\MP}\;\;\;\text{ and }\;\;\; P_2={\overline \CL}
$$
and then we will lift those vertices to vertices of $\Lambda_2$. First we will show that $(P_1,P_2)$ satisfies the required assumptions of the proposition. For this we need an explicit description of the combinatorial structure of $\MP$, as described in \cite{okay2022mermin}, and how the CHSH inequalities of ${\overline \CL}$ relate to this structure.

\section{Mermin polytope} 
\label{sec:Structure of the Mermin polytope} 
 
We begin with recalling the symplectic structure of the Pauli group. For concreteness we will do this for the $2$-qubit case. 
We 
%will 
denote the Pauli matrices as follows
$$
T_a = \left\lbrace
\begin{array}{ll}
\one & a=(0,0) \\
X  & a=(0,1) \\
Y & a=(1,1) \\
Z & a=(1,0).
\end{array}
\right.
$$
Consider the additive group $E=\ZZ_2^2\times \ZZ_2^2$.
A $2$-qubit Pauli operator, a tensor product of two single qubit Pauli matrices, can be written as
$$
T_a = T_{a_1} \otimes T_{a_2}
$$
where $a=(a_1,a_2)\in E$. These operators satisfy two important relations:
\begin{itemize}
\item The commutation relation:
$$
T_a T_b = (-1)^{\omega(a,b)} T_b T_a
$$
where $\omega:E\times E\to \ZZ_2$ is a symplectic form defined by
$$
\omega =
\begin{bmatrix}
0 & I\\
I & 0
\end{bmatrix}.
$$
{We say $a,b\in E$ (anti-)commute if $\omega(a,b)=0$ ($\omega(a,b)=1$).}

\item The product relation:
$$
T_a T_b = (-1)^{\beta(a,b)} T_{a+b}
$$
for $a,b\in E$ with $\omega(a,b)=0$, where $\beta:E\times E\to \ZZ_2$ is a function.
\end{itemize} 
A subspace $I\subset {E}$ is called {\it isotropic} if $\omega(a,b)=0$ for all $a,b\in I$.
The set $E$ {can be decomposed} into {\it local} and {\it non-local} parts corresponding to the decomposition of Pauli operators given in Eq.~(\ref{eq:local}) and Eq.~(\ref{eq:non-local}):
$$
E = \set{0} \sqcup E^\loc \sqcup E^\nloc.
$$
For the description of the Mermin polytope we need the non-local part. {As observed in \cite{Coho,okay2022mermin}  i}t is convenient to organize them into a torus as in Fig.~(\ref{fig:mermin-scenario}). In this description non-local stabilizer subgroups in Eq.~(\ref{eq:non-local}) correspond to the triangles in the torus. {We will rely on this representation for counting arguments related to non-local isotropic subspaces.} 

 \begin{figure}[h!] 
  \centering
  \includegraphics[width=.3\linewidth]{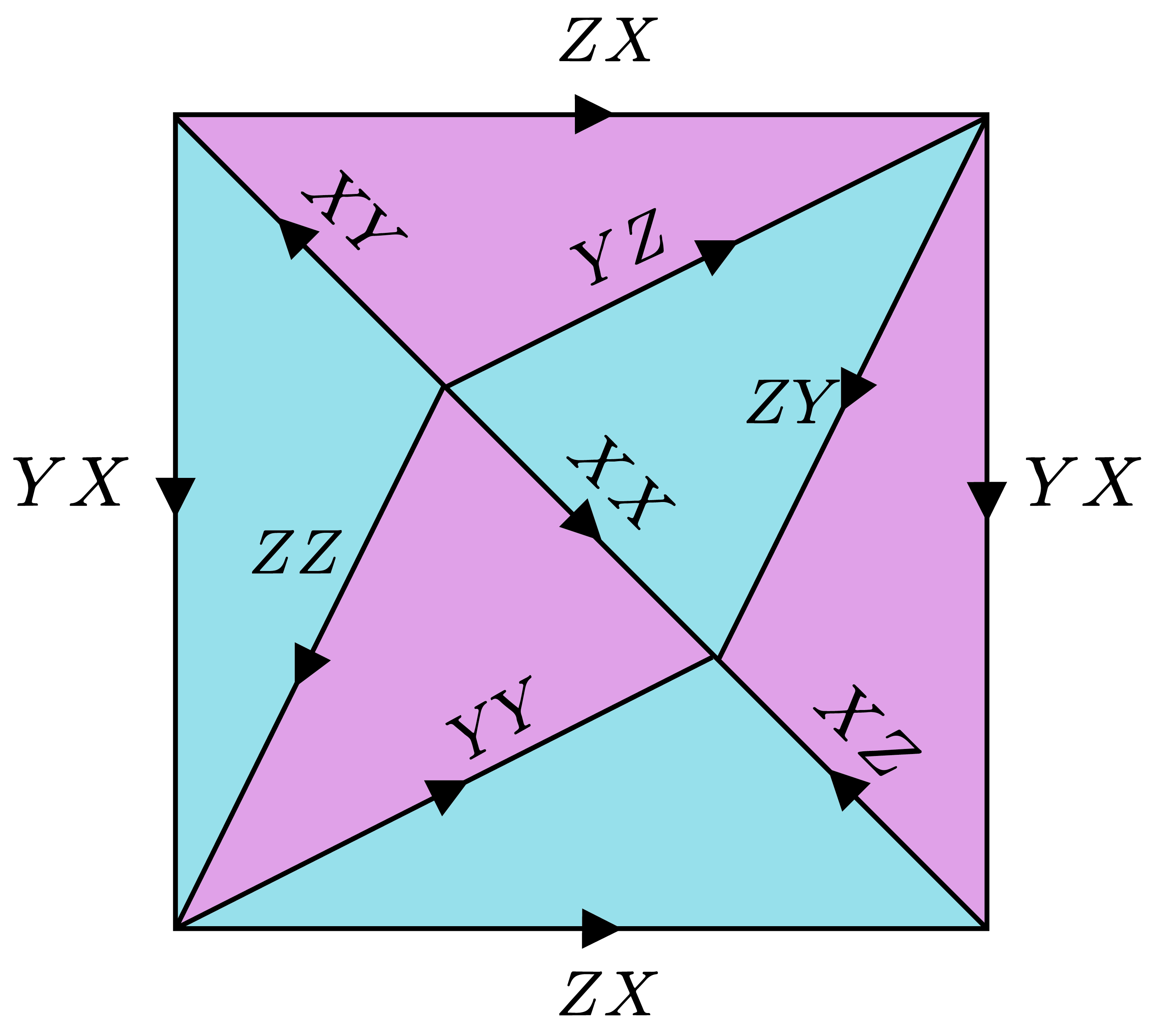}
\caption{ Mermin torus consists of $6$ triangles. The pair of edges on the boundary with the same label are identified. Blue triangles correspond to $\beta=0$ and pink triangles to $\beta=1$.
%Non-local Pauli operators $AB$, where $A,B\in \set{X,Y,Z}$, are used to label the edges. 
%(b) Two neighbors $(\Omega,r)$ of the canonical $T_2$ vertex $q_0$ corresponding to $(\Omega_0,s)$. The value assignment is indicated by a blue ($s(a)=0$) or a red edge $(s(a)=1)$. The neighbors are described by a loop determined by a cnc set. The stabilizer of $q_0$ in $\co{G}$ acts transitively on those neighbors connected via the complement of a $T_1$ (or $T_2$) cnc set. The function $\varphi:l_\Omega \to \ZZ_2$ is given by $\varphi(a)=r(a)$ if $a\in \Omega$ or $\varphi(a)=s(a)+1$ if $a\in \Omega_0$.
}
\label{fig:mermin-scenario}
\end{figure}

\subsection{Vertices}
An {\it outcome assignment} on a subset $\Omega\subset E$ is a function $\gamma:\Omega\to \ZZ_2$ satisfying
$$
\gamma(a)+\gamma(b) = \beta(a,b) +\gamma(a+b)
$$
for all $a,b\in \Omega$ with $\omega(a,b)=0$.  
A subset $\Omega\subset E$ is called {\it closed non-contextual (cnc)} if
\begin{itemize}
\item (closed) $a+b\in \Omega$ for all $a,b\in \Omega$ with $\omega(a,b)=0$, and

\item (non-contextual) there exists an outcome assignment $\gamma:\Omega\to \ZZ_2$.
\end{itemize}
Under the action of ${G}$ there are two orbits of {maximal} cnc sets which we call type $1$ and type $2$ denoted by $\bar{T}_1$ and $\bar{T}_2$. {They are described as follows:
\begin{itemize}
\item A $\bar T_1$ cnc set $\Omega$ consists of $\set{0,a,b,c}$ where $a,b,c\in E^\nloc$ are pairwise anti-commuting elements. %\comm{define anticommuting/commuting}
\item A $\bar T_2$ cnc set $\Omega$ consists of the union of two distinct maximal isotropic subspaces whose non-zero elements belong to $E^\nloc$.
\end{itemize}
There are  $6$
%($6$) 
$\bar{T}_{1}$ and $9$ 
%($9$) 
$\bar{T}_{2}$ cnc sets.  
}

%\comm{we should say maximal cnc when talking about vertices.}

\Thm{[\!\cite{okay2022mermin}]
Vertices of $\MP$ are of the form
$$
v_{(\Omega,\gamma)} = \frac{1}{4} \sum_{a\in \Omega} (-1)^{\gamma(a)} T_a
$$
where $\Omega\subset E$ is a maximal cnc set and $\gamma:\Omega\to \ZZ_2$ is an outcome assignment.
The group ${G}$ acts on the set of vertices, and there are two orbits under this action {corresponding to the two types $\bar T_1$ and $\bar T_2$ of cnc sets}. 
}

%The cnc sets for $\bar{T}_{1}$ vertices consist of the zero element together with three elements $a,b,c \in E^{(\text{nl})}$ which all pairwise anti-commute. 
%The $\bar{T}_{2}$ cnc sets, on the other hand, are given by $\Omega = I\cup I^{\prime}$, where $I,I^{\prime}\subset E^{(\text{nl})}$ {are maximal isot} such that there is a single non-identity element in $I\cap I^{\prime}$. 
%Altogether there are six ($6$) $\bar{T}_{1}$ and nine ($9$) $\bar{T}_{2}$ cnc sets. 

  \begin{figure}[h!]
\centering
\begin{subfigure}{.49\textwidth}
\centering
   \includegraphics[width=.4\linewidth]{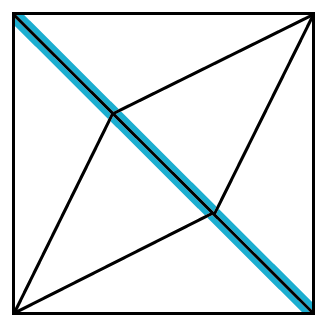}
  \caption{}
  \label{fig:T1}
\end{subfigure}%
\begin{subfigure}{.49\textwidth}
  \centering
   \includegraphics[width=.4\linewidth]{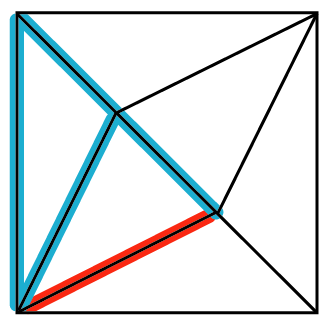}
  \caption{}
  \label{fig:T2}
\end{subfigure}
\caption{(a) A $\bar T_1$ cnc set with an outcome assignment. (b) A $\bar T_2$ cnc set with an outcome assignment. Blue indicates $\gamma(a)=0$ and red indicates $\gamma(a)=1$. 
}
\label{fig:T1-T2}
\end{figure}

The number of outcome assignments is determined by the number of {linearly} independent 
%Pauli 
elements in $\Omega$, which is given by the rank of the parity check matrix whose rows are $a\in \Omega$. For both cnc sets the number of outcome assignments is $8 = 2^{3}$, thus we have:
\begin{itemize}
\item {$\bar T_1$ vertices:}
%Type $1$: 
$48 = 6 \times 8$ pairs $(\Omega,\gamma)$. A representative vertex is 
\begin{equation}\label{eq:T1-representative}
  \centering
  \begin{tabular}{ c|c c c } 
 $1$ & ~~ & ~~ & ~~ \\
 \hline 
 ~~ & $1$ & $1$ & $1$ \\ 
 ~~ & $0$ & $0$ & $0$  \\
 ~~ & $0$ & $0$ & $0$
\end{tabular}
\end{equation}
\item {$\bar T_2$ vertices:}
%Type $2$: 
$72 = 6 \times 8$ pairs $(\Omega,\gamma)$. A representative vertex is 
\begin{equation}\label{eq:T2-representative}
 \centering
  \begin{tabular}{ c|c c c } 
 $1$ & ~~ & ~~ & ~~ \\
 \hline 
 ~~ & $1$ & $1$ & $0$ \\ 
 ~~ & $1$ & $-1$ & $0$  \\
 ~~ & $0$ & $0$ & $1$
\end{tabular}
\end{equation}
\end{itemize}
%\comm{add topological representation of cnc's}
{The topological representations of the vertices  in Eq.~(\ref{eq:T1-representative}) and  Eq.~(\ref{eq:T2-representative}) are given in Fig.~(\ref{fig:T1}) and 
 Fig.~(\ref{fig:T2}), respectively.}

\section{Classical part of the Mermin polytope}\label{sec:classical-mermin}

The DD algorithm is a so-called incremental method for solving the vertex enumeration problem that is dual to the method of Fourier-Motzkin elimination; see e.g., \cite{ziegler2012lectures}. (An overview of the DD algorithm can be found in Appendix~\ref{sec:DDM-VEP}; see also \cite{fukuda2005double}.) The basic idea is that, given the $H$-description of a polyhedron $P$ we iteratively construct its $V$-description by inserting the facet-defining inequalities of $P$ one-by-one.

A typical application of the DD method is when a subset of the inequalities of $P$ define a polyhedron $P^{\prime}$ (where $P \subset P^{\prime}$) for which the vertex enumeration problem has already been solved. When a new inequality $h\cdot x \geq 0$ is introduced, if an extreme point $v\in P^{\prime}$ violates this inequality, i.e. $h\cdot v < 0$, then for all other extreme points $v^{\prime} \in P^{\prime}$ where $h\cdot v > 0$ we construct a new point $u = pv + (1-p)v^{\prime}$ as a convex (or positive) combination such that $h\cdot u = 0$ and then we discard the vertex $v$. In fact, as shown in \cite{fukuda2005double}, it suffices to consider only those $v^{\prime} \in P^{\prime}$ that are {adjacent} to $v$ in the sense that there is an {edge} (or one-dimensional face) in $P^{\prime}$ that connects them. The geometric interpretation of the DD algorithm is highlighted in Fig.~{(\ref{fig:rank-inc})}. Observe, in particular, that the successive introduction of new inequalities promotes non-neighbors (i.e. those vertices not connected by an edge) to neighbors from which new vertices can be produced.

In our implementation of the DD algorithm we consider a target polytope $P_{12} = P_{1}\cap P_{2}$. The {set of} facet-defining inequalities of $P_{12}$ (see Appendix~\ref{sec:intersection}) in this case is just the union of the inequalities of $P_{1}$ and $P_{2}$. We assume that the vertex enumeration problem has been solved for $P_{1}$ and that the facets of $P_{2}$ are known.

\Pro{\label{pro:DD for intersection of pair of polytopes}
Let $P_1,P_2\subset \RR^d$ be a pair of full-dimensional polytopes satisfying the following  conditions:
\begin{enumerate}

\item There are subsets $V_1'\subset V_1$, $H_2'\subset H_2$ and a bijection $\phi:H_2'\to V'_1$. {Moreover, for $h\in H_2'$ we have} 
%such that 
$h\cdot v <0$ 
{where}
%for 
%$h\in H_2'$ and 
$v\in V_1$ if and only if $v=\phi(h)$. 

\item For $h\in H_2$ and $v\in V_1-V_1'$ we have $h\cdot v\geq 0$.  

\item For $v,w\in V_1'$ {that are not neighbors} we have
$$
(h_v\cdot w)(h_w\cdot v) \geq (h_v\cdot v)(h_w\cdot w)
$$ 
where $h_v\in H_2'$ such that $\phi(h_v)=v$, similarly for $h_w$.
\end{enumerate}
Then the vertex 
$$
u = p v + (1-p)w,
$$
where $p\in [0,1]$, 
$v\in V_1'$ and $w\in V_1$, 
of the polytope obtained at a finite stage of the DD algorithm by  insertion of the inequality $h_v$, is a vertex of the intersection polytope $P_{12}=P_1\cap P_2$.
}
{
\Proof{Proof is given in Section \ref{sec:intersection}.}
}

We will apply this result to $(P_1,P_2)=(\MP,{\overline \CL})$. The relevant subsets of inequalities and vertices will be as follows:
\begin{itemize}
\item $V_1'$ is the set of $\bar T_2$ vertices  and $V_1-V_1'$ is the set of $\bar T_1$ vertices of {$\MP$.}
%$\overline{\MP}_1$. 
\item $H_2'$ is the set of CHSH inequalities defining ${\overline \CL}$. Then $H_2-H_2'$ is the set of NN inequalities.
\end{itemize}

\begin{figure}[h!]
\centering
\begin{subfigure}{.245\textwidth}
  \centering
  \includegraphics[width=.8\linewidth]{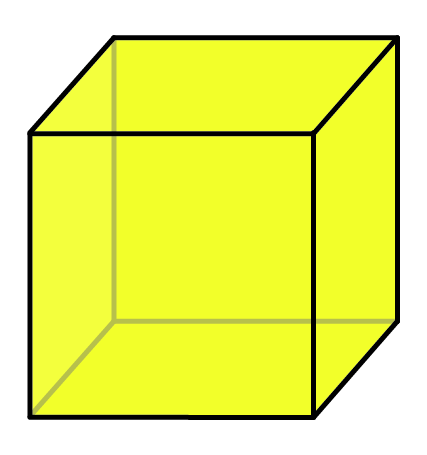}
  \caption{}
  \label{fig:rank-inc-a}
\end{subfigure}%
\begin{subfigure}{.245\textwidth}
  \centering
  \includegraphics[width=.8\linewidth]{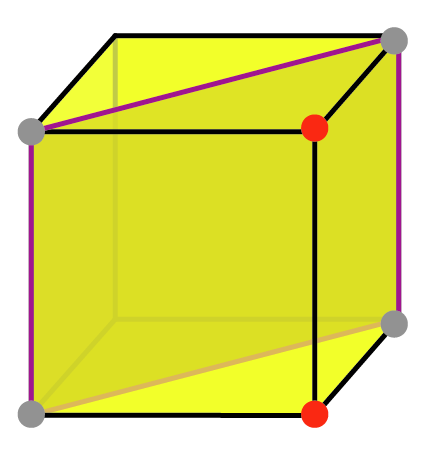}
  \caption{}
  \label{fig:rank-inc-a}
\end{subfigure}%
\begin{subfigure}{.245\textwidth}
  \centering
  \includegraphics[width=.8\linewidth]{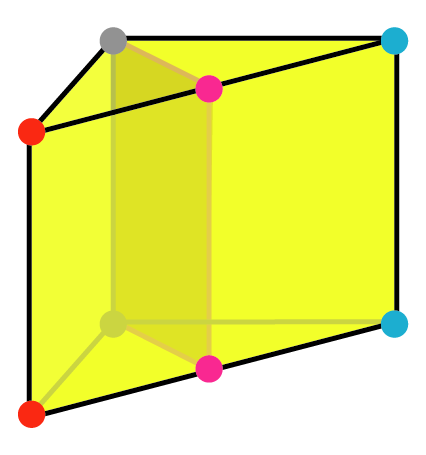}
  \caption{}
  \label{fig:rank-inc-a}
\end{subfigure}%
\begin{subfigure}{.245\textwidth}
  \centering
  \includegraphics[width=.8\linewidth]{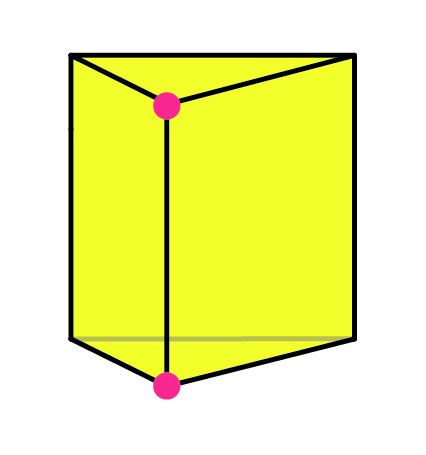}
  \caption{}
  \label{fig:rank-inc-d}
\end{subfigure}%
\caption{{Application of the DD algorithm to the cube.} (a) Our initial polytope, the cube. (b) We cut the cube by a facet. Two vertices (red) violate this facet and four vertices (gray) are tight at this facet. No new vertices are generated, only the two vertices (red) are discarded. However, the joint rank of non-neighbor vertices increases. (c) We cut by an additional facet. The violating vertices (red) are neighbors to vertices (blue) on the other side of the plane. (d) The violating (red) vertices are discarded and two new vertices (pink) are generated.
}
\label{fig:rank-inc}
\end{figure}

\subsection{$\mathbf{\bar{T}_{1}}$ vertices} 
Let us first characterize the CHSH inequalities. Recall that a $\bar T_2$ cnc set $\Omega = I\cup I^{\prime}$ is such that $I\cap I^{\prime} = \{0,a\}$, for some {$a\in E^{(\text{nl})}$}. 
%\comm{note that $E^\nloc$ does not include $0$}
 Let us then attribute to each $\bar{T}_{2}$ cnc set the corresponding set 
\begin{equation}\label{eq:boundary}
\partial\Omega = \Omega - \set{0,a}
\end{equation}
%$\bar{\Omega} = \Omega - I\cap I^{\prime}$ \comm{the $\partial\Omega$ notation for the CHSH is cleaner, imo. The bar notation suggest that $\Omega$ and $\bar \Omega$ are the same kind of objects. I prefer we keep some geometric intuition that is helpful for combinatorial arguments.}
together with a function $\gamma:\partial\Omega\to \ZZ_{2}$ such that $\sum_{a\in\partial\Omega} \gamma(a) = 1~\text{mod}~2$. {The elements of $\partial\Omega$ corresponds to the elements on the boundary of a $\bar T_2$ cnc set; see Fig.~(\ref{fig:T2}).}
%\comm{explain the boundary interpretation}
The CHSH inequalities take the form%
\begin{eqnarray}
h_{\left (\partial\Omega,{\gamma}\right )}\cdot {x}\geq 0,%
\label{eq:chsh-inequality}
\end{eqnarray}
where $x\in \RR^{10}$ ($x_{0}=1$) and where $h=h_{\left (\partial\Omega,{\gamma}\right )}\in \RR^{10}$ is defined by
\begin{eqnarray}
h_a = \left\lbrace
\begin{array}{cc}
2 & a = 0 \\
(-1)^{{\gamma}(a)} & a\in \partial\Omega \\
0 & \text{otherwise.}
\end{array} 
 \right.
\end{eqnarray} 
% One consequence of this result is that convex decompositions are preserved under the induced action of $G_{1}$ on $\overline{\MP}$. \comm{where do we use symmetry argument?}
Let $v$ be a vertex corresponding to $(\Omega,\gamma)$ and $h$ be a CHSH inequality corresponding to $(\partial\Omega',\gamma')$. Then 
 \begin{equation}\label{eq:h dot v}
h\cdot v = 2+\sum_{a\in{\Omega \cap \partial\Omega'}}(-1)^{\gamma(a)+{\gamma}^{\prime}(a)}.
\end{equation}

  \begin{figure}[h!]
\centering
\begin{subfigure}{.49\textwidth}
  \centering
   \includegraphics[width=.8\linewidth]{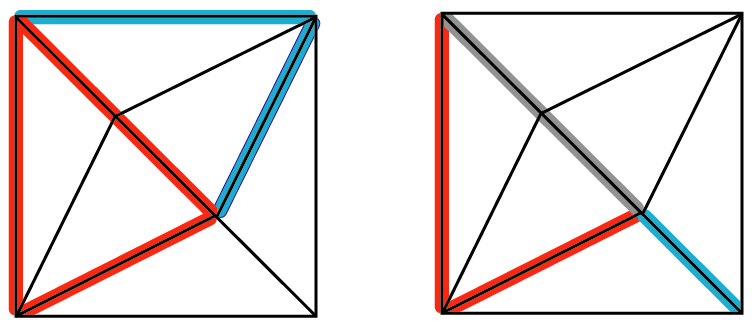}
  \caption{}
  \label{fig:T1 intersection}
\end{subfigure}%
\begin{subfigure}{.49\textwidth}
  \centering
   \includegraphics[width=.8\linewidth]{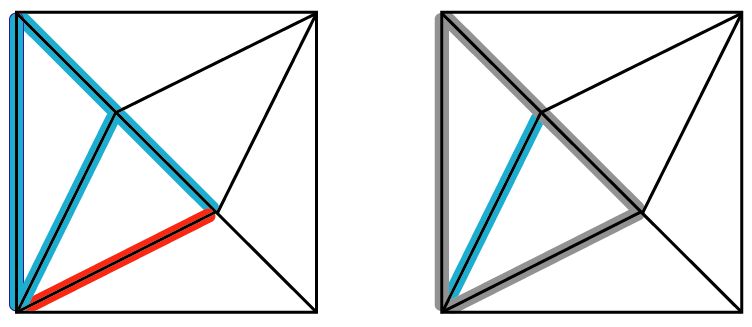}
  \caption{}
  \label{fig:T2 intersection}
\end{subfigure}
\caption{Let $\Omega'$ be a $\bar T_2$ cnc set. (a) Intersection of $\bar T_1$ cnc   with the boundary $\partial \Omega'$. The sizes of the intersections are $0$ and $2$.  
(b) Intersection with a $\bar T_2$ cnc  set. The sizes of the intersections are $2$ and $4$.   
}
\label{fig:T1-T2 intersection}
\end{figure}

{
\Lem{\label{lem:intersection of cnc with boundary cnc}
Let $\Omega$ be a $\bar T_i$ cnc set and $\Omega'$ be a $\bar T_2$ cnc set. Then  
$$
|\Omega\cap \partial \Omega'| =
\left\lbrace
\begin{array}{ll}
0,2 & i=1\\
2,4 & i=2.
\end{array}
\right.
$$
}
\Proof{
We depict elements of a cnc set $\Omega$ (blue), $\partial \Omega^{\prime}$ (red),  and their intersection (gray):
\begin{itemize}
\item For $\bar T_{1}$ the intersection is either empty or contains $2$ elements:
$$
\begin{tabular}{ c|c c c } 
 {\Large $\color{white}{ \bullet}$} & ~~ & ~~ & ~~ \\
 \hline 
 ~ & {\Large $\color{red}{ \bullet}$} & {\Large $\color{red}{ \bullet}$} & {\Large $\color{white}{ \bullet}$} \\ 
 ~ & {\Large $\color{red}{ \bullet}$} & {\Large $\color{red}{ \bullet}$} &  {\Large $\color{white}{ \bullet}$} \\
 ~ & {\Large $\color{cyan-custom}{ \bullet}$} & {\Large $\color{cyan-custom}{ \bullet}$} & {\Large $\color{cyan-custom}{ \bullet}$}
\end{tabular}
\hspace{1 cm}%
\begin{tabular}{ c|c c c } 
 {\Large $\color{white}{ \bullet}$} & ~~ & ~~ & ~~ \\
 \hline 
 ~ & {\Large $\color{gray-custom}{ \bullet}$} & {\Large $\color{gray-custom}{ \bullet}$} & {\Large $\color{cyan-custom}{ \bullet}$} \\ 
 ~ & {\Large $\color{red}{ \bullet}$} & {\Large $\color{red}{ \bullet}$} &  {\Large $\color{white}{ \bullet}$} \\
 ~ & {\Large $\color{white}{ \bullet}$} & {\Large $\color{white}{ \bullet}$} & {\Large $\color{white}{ \bullet}$}
\end{tabular}
$$
\item For $\bar T_{2}$ 
{either the intersection contains $2$ elements or is $\partial \Omega$ itself}:
$$
\begin{tabular}{ c|c c c } 
 {\Large $\color{white}{ \bullet}$} & ~~ & ~~ & ~~ \\
 \hline 
 ~ & {\Large $\color{red}{ \bullet}$} & {\Large $\color{gray-custom}{ \bullet}$} & {\Large $\color{cyan-custom}{ \bullet}$} \\ 
 ~ & {\Large $\color{red}{ \bullet}$} & {\Large $\color{gray-custom}{ \bullet}$} &  {\Large $\color{cyan-custom}{ \bullet}$} \\
 ~ & {\Large $\color{cyan-custom}{ \bullet}$} & {\Large $\color{white}{ \bullet}$} & {\Large $\color{white}{ \bullet}$}
\end{tabular}\hspace{1 cm}%
\begin{tabular}{ c|c c c } 
 {\Large $\color{white}{ \bullet}$} & ~~ & ~~ & ~~ \\
 \hline 
 ~ & {\Large $\color{gray-custom}{ \bullet}$} & {\Large $\color{gray-custom}{ \bullet}$} & {\Large $\color{white}{ \bullet}$} \\ 
 ~ & {\Large $\color{gray-custom}{ \bullet}$} & {\Large $\color{gray-custom}{ \bullet}$} &  {\Large $\color{white}{ \bullet}$} \\
 ~ & {\Large $\color{white}{ \bullet}$} & {\Large $\color{white}{ \bullet}$} & {\Large $\color{cyan-custom}{ \bullet}$}
\end{tabular}
$$
 \end{itemize}
%\caption{We depict elements of a cnc set $\Omega$ (blue), $\bar \Omega^{\prime}$ (red),  and their intersection (gray). (a) For $\bar T_{1}$ the intersection is either empty of contains $2$ elements. (b) For $\bar T_{2}$ the intersection is either $\bar \Omega$ itself or contains $2$ elements.
See Fig.~(\ref{fig:T1-T2 intersection}) for the corresponding topological representations. We observe that these are the only possible sizes of the intersection. 
%\comm{add figure.}
}
}

\begin{pro}\label{pro:type1-classical}
All $\bar{T}_{1}$ vertices of $\MP$ fall within {$\overline\CL$}. 
This implies that all $\bar{T}_{1}$ vertices are also vertices of $\overline{\MP}$.
\end{pro}
\begin{proof}
{Let $\Omega=\set{0,a,b,c}$ be a $\bar T_1$ cnc set.
By Lemma \ref{lem:intersection of cnc with boundary cnc} and Eq.~(\ref{eq:h dot v}) we have 
\begin{equation}\label{eq:h dot v for T1}
h\cdot v = \left\lbrace
\begin{array}{ll}
2 & \Omega\cap\partial\Omega'= \emptyset \\
2+ (-1)^{\gamma(a)+\gamma'(a)}+(-1)^{\gamma(b)+\gamma'(b)} & \Omega\cap\partial\Omega'=\set{a,b}.
\end{array}
\right.
\end{equation}
Therefore $h\cdot v =0,2,4$.
}   
\end{proof}

\subsection{$\mathbf{\bar{T}_{2}}$ vertices}
 
We will write $\bar \gamma$ for the outcome assignment defined by 
$$
\bar\gamma(a') = \left\lbrace
\begin{array}{ll}
 \gamma(a')+1 & a'\in \partial\Omega \\
 \gamma(a') & \text{otherwise}.
\end{array}
\right.
$$ 
Then we can define a function
$$
\phi:H_2'\to V'_1
$$
by sending a CHSH inequality $h_{(\partial \Omega,\gamma)}$ to the $\bar T_2$ vertex $v_{(\Omega,\bar \gamma)}$.  
The pair $(h,\phi(h))$ is called {\it dual}.

\begin{lem}\label{lem:type2-violate} 
Let $v$ be a $\bar{T}_{2}$ vertex represented by $(\Omega,\gamma)$ and $h$ be a CHSH inequality represented by $(\partial\Omega^{\prime},{\gamma}^{\prime})$. Then
\begin{enumerate}
\item $h\cdot v=-2$ if $\Omega\cap \partial\Omega^{\prime} = \partial\Omega^{\prime}$ and  $\gamma(a)={\gamma}^{\prime}(a)+1$ for all $a\in \partial\Omega'$,
\item $h\cdot v=0$ if $\Omega \cap\partial\Omega'=\set{a,b}$, $\gamma(a)=\gamma'(a)+1$, and $\gamma(b)=\gamma'(b)+1$,
\item otherwise $h\cdot v\in \set{2,4,6}$. 
\end{enumerate} 
\end{lem}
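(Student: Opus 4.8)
The plan is to read everything off the inner-product formula in Eq.~(\ref{eq:h dot v}), $h\cdot v = 2+\sum_{a\in\Omega\cap\partial\Omega'}(-1)^{\gamma(a)+\gamma'(a)}$, after pinning down the possible overlaps $\Omega\cap\partial\Omega'$. Since $v$ is a $\bar T_2$ vertex and $h$ comes from a $\bar T_2$ cnc set $\Omega'$, Lemma~\ref{lem:intersection of cnc with boundary cnc} applies with $i=2$ and gives $|\Omega\cap\partial\Omega'|\in\{2,4\}$; its proof moreover shows that in the size-$4$ case the overlap is all of $\partial\Omega'$ and in fact equals $\partial\Omega$, so that $\partial\Omega=\partial\Omega'$. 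Writing $k$ for the number of $a\in\Omega\cap\partial\Omega'$ with $\gamma(a)\neq\gamma'(a)$, the sum of signs is $|\Omega\cap\partial\Omega'|-2k$, hence $h\cdot v = 2 + |\Omega\cap\partial\Omega'| - 2k$. This single identity, together with the admissible ranges of $k$, produces all three items.

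In the size-$2$ case $\Omega\cap\partial\Omega'=\{a,b\}$ we get $h\cdot v = 4-2k$ with $k\in\{0,1,2\}$, hence $h\cdot v\in\{0,2,4\}$, and $h\cdot v=0$ exactly when $k=2$, i.e.\ $\gamma(a)=\gamma'(a)+1$ and $\gamma(b)=\gamma'(b)+1$; this is item~(2). In the size-$4$ case $h\cdot v = 6-2k$, and $h\cdot v=-2$ forces $k=4$, i.e.\ total disagreement $\gamma=\gamma'+1$ on $\partial\Omega'$, which is item~(1) and is exactly the dual pair $(h,\phi(h))$. The one thing left to exclude is an odd $k$ in the size-$4$ case, since this would illegally contribute $h\cdot v\in\{0,4\}$; ruling it out is the crux.

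To force $k$ even in the size-$4$ case I would establish the parity identity that every outcome assignment $\gamma$ on a $\bar T_2$ cnc set $\Omega$ satisfies $\sum_{a\in\partial\Omega}\gamma(a)\equiv 1\pmod 2$. Writing $\Omega=I_1\cup I_2$ with $I_1\cap I_2=\{0,a\}$ and boundary pairs $\{b,a+b\}=I_1\setminus\{0,a\}$, $\{c,a+c\}=I_2\setminus\{0,a\}$, the cnc relation gives $\gamma(b)+\gamma(a+b)\equiv\beta(b,a+b)+\gamma(a)$ and $\gamma(c)+\gamma(a+c)\equiv\beta(c,a+c)+\gamma(a)$, so $\sum_{a\in\partial\Omega}\gamma(a)\equiv\beta(b,a+b)+\beta(c,a+c)\equiv\beta(a,b)+\beta(a,c)\pmod 2$, independent of $\gamma$. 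Here $b$ and $c$ necessarily anti-commute: if $\omega(b,c)=0$ then closedness of $\Omega$ would put $b+c\in I_1\cup I_2$, which is impossible since $b+c\notin I_1$ and $b+c\notin I_2$. The odd value of this parity can then be confirmed on the representative of Eq.~(\ref{eq:T2-representative}), where it reads $\beta(ZZ,XY)+\beta(ZZ,XX)=0+1=1$, and it is constant along the single $G$-orbit of $\bar T_2$ cnc sets; it is the same $\beta=1$ data recorded by the pink triangles of Fig.~(\ref{fig:mermin-scenario}). Combining $\sum_{a\in\partial\Omega}\gamma(a)\equiv1$ with the defining parity $\sum_{a\in\partial\Omega'}\gamma'(a)\equiv1$ of a CHSH function and the identification $\partial\Omega'=\partial\Omega$ yields $k\equiv\sum_{a\in\partial\Omega'}(\gamma(a)+\gamma'(a))\equiv 0$, so $k\in\{0,2,4\}$ and $h\cdot v=6-2k\in\{6,2,-2\}$.

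Assembling the two cases, the size-$2$ overlap contributes $\{0,2,4\}$ with $0$ attained only under the hypothesis of item~(2), and the size-$4$ overlap contributes $\{-2,2,6\}$ with $-2$ attained only under the hypothesis of item~(1); every remaining configuration therefore gives $h\cdot v\in\{2,4,6\}$, which is item~(3). The main obstacle is the parity identity above: the arithmetic $h\cdot v=4-2k$ and $h\cdot v=6-2k$ is routine, but without knowing that $\gamma$ has odd boundary parity one cannot separate the $\bar T_2$ vertex that is tight at the dual CHSH facet (value $-2$) from the configurations producing the disallowed values $0$ and $4$, and it is exactly this parity that encodes the Mermin-type non-classicality of $\bar T_2$ cnc sets.
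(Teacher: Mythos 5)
Your proof is correct and takes essentially the same route as the paper's: combine Lemma~\ref{lem:intersection of cnc with boundary cnc} with Eq.~(\ref{eq:h dot v}) and run the case analysis over the overlap size and the number $k$ of disagreements between $\gamma$ and $\gamma'$. The only substantive addition is the parity identity $\sum_{a\in\partial\Omega}\gamma(a)\equiv 1 \pmod 2$, which you correctly identify as the step needed to exclude odd $k$ in the full-overlap case; the paper's two-line proof leaves this implicit, but it is the same fact underlying the proof of Lemma~\ref{lem:extended-function}, and your derivation of it (via $\beta(a,b)+\beta(a,c)=1$, i.e.\ each non-local point lying in one $\beta=0$ and one $\beta=1$ triangle of Fig.~(\ref{fig:mermin-scenario})) is sound.
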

\begin{proof}  
{
This follows from Lemma \ref{lem:intersection of cnc with boundary cnc} and Eq.~(\ref{eq:h dot v}):
\begin{equation}\label{eq:h dot v for T2}
h\cdot v = \left\lbrace
\begin{array}{ll}
-2 & \Omega \cap\partial\Omega'= \partial\Omega,\; \gamma'=\bar\gamma\\
0 & \Omega \cap\partial\Omega'=\set{a,b},\; \gamma(a)=\gamma'(a)+1,\;\gamma(b)=\gamma'(b)+1\\
2,4,6 &\text{otherwise.}   
\end{array}
\right.
\end{equation}
}
\end{proof}

\begin{lem}\label{lem:extended-function}
Consider a pair $(\partial\Omega,\gamma)$ representing a CHSH inequality. The function $\gamma$ can be uniquely extended to an outcome assignment $\tilde \gamma$ on the whole set $\Omega$.
\end{lem}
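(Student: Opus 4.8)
The plan is to exploit the decomposition $\Omega = I\cup I'$ into two maximal isotropic subspaces with $I\cap I'=\set{0,a}$, writing $I=\set{0,a,b,a+b}$, $I'=\set{0,a,c,a+c}$, so that $\partial\Omega=\set{b,a+b,c,a+c}$. First I would record the commutation pattern inside $\Omega$: all pairs within $I$ and within $I'$ commute, while every element of $I\smallsetminus\set{0,a}$ anticommutes with every element of $I'\smallsetminus\set{0,a}$. The latter holds because $a$ lies in the radical of $\omega$ restricted to the three-dimensional space $I+I'$, and in the nondegenerate symplectic quotient $(I+I')/\Span{a}$ any two distinct nonzero classes pair to $1$; hence $\omega(b,c)=\omega(b,a+c)=\omega(a+b,c)=\omega(a+b,a+c)=1$. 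Consequently the defining relations of an outcome assignment on $\Omega$ only couple elements within $I$ and within $I'$, together with the relation (from $\omega(0,0)=0$) forcing $\tilde\gamma(0)=0$.

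Second, I would reduce existence and uniqueness to a single scalar consistency condition. Since $\gamma$ is prescribed on $\partial\Omega$, the only undetermined value is $\tilde\gamma(a)$, and the relations coming from $I$ and from $I'$ demand respectively
$$\tilde\gamma(a)=\gamma(b)+\gamma(a+b)+\beta(a,b)\qquad\text{and}\qquad \tilde\gamma(a)=\gamma(c)+\gamma(a+c)+\beta(a,c).$$
These specify $\tilde\gamma(a)$ uniquely provided they agree, and adding them (over $\ZZ_2$) shows they agree exactly when $\sum_{x\in\partial\Omega}\gamma(x)=\beta(a,b)+\beta(a,c)$. Once $\tilde\gamma(a)$ is fixed this way, $\tilde\gamma$ restricts to a genuine outcome assignment on each of $I$ and $I'$, the remaining relations inside each isotropic subspace being forced by the cocycle identity for $\beta$; by the commutation analysis of the first step there are no cross-relations left to check. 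Uniqueness is then immediate, as every value of $\tilde\gamma$ is forced by the data.

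The crux, and the main obstacle, is to verify $\beta(a,b)+\beta(a,c)=1$, which matches the standing hypothesis $\sum_{x\in\partial\Omega}\gamma(x)=1$ on CHSH data. I would package this via the triple-product invariant $T_aT_bT_{a+b}=(-1)^{\beta(a,b)}\one$ of a maximal isotropic subspace, so that $\beta(a,b)=\beta(I)$ and $\beta(a,c)=\beta(I')$ depend only on $I$ and $I'$; the required identity is then the assertion $\beta(I)\neq\beta(I')$. This is precisely the statement that the two triangles representing $I$ and $I'$ in the Mermin torus of Fig.~(\ref{fig:mermin-scenario}) are edge-adjacent (they share the line $\set{0,a}$) and therefore carry opposite colors, the $\beta$-coloring being a proper $2$-coloring. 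I would establish it directly: each of the nine non-local Pauli elements lies in exactly two of the six isotropic subspaces, producing exactly nine adjacent pairs (the nine $\bar T_2$ cnc sets), and in every such pair the two triangles have opposite $\beta$-invariant. On the representative of Eq.~(\ref{eq:T2-representative}) one has $a=Z\otimes Z$, $\set{b,a+b}=\set{X\otimes X,Y\otimes Y}$ and $\set{c,a+c}=\set{X\otimes Y,Y\otimes X}$, giving $\beta(I)=1$ and $\beta(I')=0$; combining $\beta(a,b)+\beta(a,c)=1$ with the hypothesis yields consistency, and hence the unique extension $\tilde\gamma$.
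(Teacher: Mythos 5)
Your proof is correct and follows essentially the same route as the paper: extend $\gamma$ to the missing element $a$ via the outcome-assignment relation in each of the two isotropic subspaces and check that the two candidate values agree because $\sum_{x\in\partial\Omega}\gamma(x)=1$ and the two subspaces have opposite $\beta$-parity. You additionally justify two facts the paper's proof takes for granted --- that elements of $I\setminus\{0,a\}$ anticommute with those of $I'\setminus\{0,a\}$, so no further constraints arise, and that the two triangles sharing the edge $\{0,a\}$ in the Mermin torus carry opposite $\beta$ --- which makes your argument more self-contained.
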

\begin{proof} 
{The set $\partial\Omega$ contains four elements $b,c,b',c'$ and $a\in \Omega-\partial\Omega$ can be written as $a=b+c=b'+c'$.}
%Recall that for $\Omega = I\cup I^{\prime}$ such that $I\cap I^{\prime} = \{0,a\}$ we have that $\bar \Omega := \Omega - \{0,a\}$. Consider non-identity elements $b,c \in I$ and $b^{\prime},c^{\prime}\in I^{\prime}$ such that $a = b+c = b^{\prime}+c^{\prime}$. 
To define the extended function $\tilde \gamma$ set $\tilde \gamma(0) = 0$ and 
%note that 
%$\tilde \gamma(a)$ is uniquely determined by both 
$\tilde \gamma(a) := \gamma(b)+\gamma(c)+\beta(b,c)$.
{Then this will agree with}
 $\tilde \gamma(a) := \gamma(b^{\prime})+\gamma(c^{\prime})+\beta(b^{\prime},c^{\prime})$ {since $\sum_{a\in \partial\Omega}  \gamma(a)=1~\text{mod}~2$ and $\beta(b,c)+\beta(b',c')=1~\text{mod}~2$.} 
% . Let us check that this is always consistent. Set the two expressions equal to one another and use the fact that if $\beta(b,c) = r \in \ZZ_{2}$, then $\beta(b^{\prime},c^{\prime}) = r+1 \in \ZZ_{2}$. We then have that $\sum_{e\in \bar \Omega}\gamma(e) = 1~\text{mod}~2$, which is always satisfied by $\gamma$.
\end{proof}

\Pro{\label{pro:condition-1}
Let $H_2'\subset H_2$ denote the subset consisting of the CHSH inequalities.
Then $h\cdot v\geq 0$ for every $h\in H_2-H_2'$ and $v\in V_1$. 
There is a bijection  
$$
\phi:H_2' \to {V_1'}
$$
that sends a CHSH inequality to a $\bar T_2$ vertex such that $h\cdot v<0$ for $h\in H_2'$ and $v\in V_1$ if and only if $v=\phi(h)$.
} 
\Proof{ 
The set $H_2-H_2'$ consists of the 
%non-negativity 
{NN}
inequalities, which are satisfied by the points in {$\MP$} since they are already in the range $[-1,1]$.
{The bijection follows from the unique extension constructed in Lemma \ref{lem:extended-function}. The statement about $h\cdot v$ follows from part (1) of Lemma \ref{lem:type2-violate}.} 
}

{
\begin{lem}\label{lem:tight-at-chsh}
There are  $8$ $\bar T_1$ and $16$ $\bar T_{2}$ vertices that are tight at each CHSH inequality.
\end{lem}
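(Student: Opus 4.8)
The plan is to reduce the statement to counting, for a fixed CHSH inequality $h = h_{(\partial\Omega',\gamma')}$, the vertices $v = v_{(\Omega,\gamma)}$ with $h\cdot v = 0$, and to factor this count into a choice of maximal cnc set $\Omega$ and a choice of compatible outcome assignment $\gamma$. The count will come out independent of the chosen CHSH inequality (consistent with $G$ acting transitively on the $72$ of them), so fixing an arbitrary $\partial\Omega'$ loses nothing. Reading off Eq.~(\ref{eq:h dot v for T1}) and Eq.~(\ref{eq:h dot v for T2}), tightness $h\cdot v = 0$ forces, in both the $\bar T_1$ and $\bar T_2$ cases, that $|\Omega\cap\partial\Omega'| = 2$, say $\Omega\cap\partial\Omega' = \set{a,b}$, together with the two flip conditions $\gamma(a) = \gamma'(a)+1$ and $\gamma(b) = \gamma'(b)+1$. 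The empty-intersection $\bar T_1$ case gives $h\cdot v = 2$ and the full-intersection $\bar T_2$ case gives the dual vertex with $h\cdot v = -2$, so neither contributes.

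First I would count the cnc sets $\Omega$ meeting $\partial\Omega'$ in exactly two elements, for each type, by double counting the incidences $\set{(\Omega,x) : x\in \Omega\cap\partial\Omega'}$. Since $G$ acts transitively on the nine non-local elements, each such element lies in the same number of $\bar T_1$ (resp.\ $\bar T_2$) cnc sets; comparing with the totals $6\times 3 = 18$ and $9\times 5 = 45$ incidences shows that each non-local element lies in exactly $2$ of the $\bar T_1$ sets and in exactly $5$ of the $\bar T_2$ sets. As $|\partial\Omega'| = 4$, the number of incidences with $\partial\Omega'$ is $4\times 2 = 8$ for $\bar T_1$ and $4\times 5 = 20$ for $\bar T_2$. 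By Lemma~\ref{lem:intersection of cnc with boundary cnc} every $\bar T_1$ set meets $\partial\Omega'$ in $0$ or $2$ elements, so $8 = 2\cdot\#\set{\Omega : |\Omega\cap\partial\Omega'| = 2}$ gives exactly $4$ such $\bar T_1$ sets; every $\bar T_2$ set meets it in $2$ or $4$ elements and none in $0$, so $4 n_4 + 2 n_2 = 20$ with $n_4 + n_2 = 9$ gives $n_4 = 1$ (the dual set $\Omega'$ itself) and $n_2 = 8$.

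Next I would show that each contributing cnc set supports exactly two tight outcome assignments. There are $2^3 = 8$ outcome assignments on a maximal cnc set, and any two differ by an element $\delta$ of the rank-$3$ space $V$ of functions additive on commuting pairs; so it suffices that the evaluation $\delta\mapsto(\delta(a),\delta(b))$ at the two intersection elements is surjective onto $\ZZ_2^2$, whence exactly $8/4 = 2$ assignments realise the prescribed flips. For a $\bar T_1$ set this is immediate, as $\delta$ is unconstrained on the three pairwise anti-commuting generators. For a $\bar T_2$ set $\Omega = I\cup I'$ with center $a_0$, one checks that $V$ separates the points of $\Omega$ and distinguishes each from $0$ (a $\delta$ linear on each of $I,I'$ and matching on $a_0$ can be prescribed freely on $a_0$ and on one further generator of each of $I,I'$); since $a,b$ are distinct and nonzero, surjectivity follows. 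Combining the two counts, the tight vertices number $4\times 2 = 8$ of type $\bar T_1$ and $8\times 2 = 16$ of type $\bar T_2$, as claimed.

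The main obstacle is the outcome-assignment bookkeeping of the third paragraph: one must confirm that the two flip conditions are genuinely independent, in particular in the case where one of the intersection elements is the center $a_0$ shared by the two isotropic subspaces of a $\bar T_2$ set. This rests on the observation that any cross-isotropic pair $p\in I$, $q\in I'$ with $p,q\neq a_0$ must anti-commute, since otherwise $p+q$ would lie outside $\Omega = I\cup I'$ and violate closure; this isolates $a_0$ as the unique element of $\Omega$ commuting with all others and yields the separation property underlying surjectivity. The incidence counts, by contrast, are routine once transitivity of $G$ on the non-local elements and Lemma~\ref{lem:intersection of cnc with boundary cnc} are in hand.
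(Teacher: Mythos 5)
Your proof is correct and follows essentially the same route as the paper's: both reduce tightness at $h_{(\partial\Omega',\gamma')}$ to the condition $|\Omega\cap\partial\Omega'|=\{a,b\}$ with the two sign flips $\gamma(a)=\gamma'(a)+1$, $\gamma(b)=\gamma'(b)+1$, and then multiply the number of admissible cnc sets ($4$ of type $\bar T_1$, $8$ of type $\bar T_2$) by the number of compatible outcome assignments ($2$ for each set). The only difference is that the paper asserts these counts outright, whereas you derive them — the set counts by double-counting incidences using transitivity of $G$ on the nine non-local elements together with Lemma~\ref{lem:intersection of cnc with boundary cnc}, and the assignment counts by a torsor argument showing that evaluation at the two intersection points is surjective onto $\ZZ_2^2$ — so your write-up supplies justifications the paper leaves implicit.
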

\begin{proof}
%\comm{revise}
%\si{%
{From Eq.~(\ref{eq:h dot v for T1}) and part (2) of Lemma \ref{lem:type2-violate} we observe that $h\cdot v=0$ when $\Omega\cap \partial\Omega'=\set{a,b}$ with $\gamma(a)=\gamma'(a)+1$ and $\gamma(b)=\gamma'(b)+1$.
%Let $\Omega$ be a $\bar T_1$ cnc set.  Eq.~(\ref{eq:h dot v for T1}) implies that $h\cdot v=0$ when $\Omega\cap \partial\Omega'=\set{a,b}$ with $\gamma(a)=\gamma'(a)+1$ and $\gamma(b)=\gamma'(b)+1$. 
For a fixed $\Omega'$ there are $4$ $\bar T_1$ {type} $\Omega$'s and for each such set there are $2$ $\gamma$'s.
Now consider the case where $\Omega$ is a $\bar T_2$ cnc set.
%From Lemma \ref{lem:type2-violate} part(2) we see that $h\cdot v=0$ when $\Omega\cap\partial\Omega'=\set{a,b}$, $\gamma(a)=\gamma'(a)+1$ and $\gamma(b)=\gamma'(b)+1$. 
There are $8$ such $\Omega$'s and $2$ $\gamma$'s for each.
% such cnc set.
}
%The outcome assignment $\gamma$ is fixed on $\Omega\cap $
%This proof follows analogously to that of Lemma~\ref{lem:tight-chsh} except now we fix the CHSH pair $(\partial\Omega,\gamma)$ and vary over cnc vertices {parametrized by} $(\Omega^{\prime},\gamma^{\prime})$.
%}
\end{proof}
}

%\noindent \si{[[This Lemma can probably be combined with Lemma~\ref{lem:chsh-values}.]}\\

\begin{pro}\label{pro:neighbors}
%Let $\bar{v}$ be a $T_{2}$ vertex indexed by $(\Omega_{2},s)$ with neighbors in $\MP$ parameterized by the set $N_{v}$ and let $\bar{B}\bar{x}\geq 0$ be the inequality \textit{dual} to $v$. For all $j\in N_{i}$ we have that $\bar{B}\bar{v}_{j} = 0$. 
Let $h$ be a CHSH inequality and $v=\phi(h)$ be the dual $\bar T_2$ vertex. We have that $h\cdot w=0$ a 
%$T_2$ 
vertex $w$ if and only if 
%$w=\phi(v)$ 
{$w$ is a neighbor of $v$}. 
%\si{[[@Cihan: What is this lemma saying? I am not sure it is correct.]]}
\end{pro}
\begin{proof}
%\comm{revise}
The local structure of $\MP$ was characterized in \cite{okay2022mermin} and summarized in Section~\ref{sec:comb-mp1}. 
{A $\bar T_2$ vertex has $8$ $\bar T_1$ vertices and $16$ $\bar T_2$ vertices.} 
{By Corollary \ref{cor:intersection two elements and outcome assignments coincide} and Eq.~(\ref{eq:h dot v}) we observe that $h\cdot w=0$ whenever $w$ is a neighbor of $v$. Lemma~\ref{lem:tight-at-chsh} implies that these are all the vertices tight at $h$. }
%Each $\bar T_{2}$ vertex has $8$ $\bar T_{1}$ neighbors and $16$ $\bar T_{2}$ neighbors, {whereas a $\bar T_1$ vertex has $12$ $\bar T_2$ neighbors.} \comm{numbers match up but we still need the structure of neighbors to conclude that they are tight.} Comparison with Lemma~\ref{lem:tight-at-chsh} {that all of these neighbors are tight at $h$.} 
%yields the result.
%Converse{ly} 
%follows from  
%Lemmas~\ref{lem:tight-chsh} and \ref{lem:tight-at-chsh}
%{implies that these are the only vertices tight at $h$.}.
\end{proof}

%\si{
%\begin{rem}
%{\rm
See Fig.~(\ref{fig:duality}) for an illustration of the duality between $h$ and $\phi(h)$ as well as the corresponding neighborhood structure.
%}
%\end{rem}
% }

\subsection{Vertex enumeration} 
Proposition \ref{pro:DD for intersection of pair of polytopes} is our main tool to generate the vertices of $\overline{\MP}$.
To apply this result we first check that the Conditions (1)-(3) are satisfied by our pair $(P_1,P_2)=({\MP},{\overline \CL})$:
\begin{itemize}
\item Condition (1) is proved in {Proposition~\ref{pro:condition-1}.}
\item Condition (2) follows from Proposition \ref{pro:type1-classical}.
\item Condition (3) can be proved as follows: By part (1) of Lemma \ref{lem:type2-violate} we have $(h_v\cdot v)(h_w\cdot w)\leq 4$. By parts (2) and (3) of the same lemma and Proposition \ref{pro:neighbors} we have $(h_v\cdot w)(h_w\cdot v) \geq 4$.
\end{itemize}
For our starting point we consider the canonical $\bar T_2$ vertex (see Fig.~\ref{fig:T2}):
%\left (I + XX + XY + YX - YY + ZZ \right ).
\begin{eqnarray}
{v =  
 \centering
  \begin{tabular}{ c|c c c } 
 $1$ & ~~ & ~~ & ~~ \\
 \hline 
 ~~ & $1$ & $1$ & $0$ \\ 
 ~~ & $1$ & $-1$ & $0$  \\
 ~~ & $0$ & $0$ & $1$
\end{tabular}
}
\label{eq:canonical-vertex}
\end{eqnarray}
The DD algorithm will run by inserting a sequence of inequalities $h_1,\cdots,h_k \in H_2$ such that $h_k=h_v$, and at {the} final step we will obtain 
$$
u=pv+(1-p)w
$$ 
where $w$ is a vertex of $\MP$ and
$$
p = \frac{1}{1-\frac{hv\cdot v}{h_v\cdot w}}\;\in [0,1].
$$
Then Proposition \ref{pro:DD for intersection of pair of polytopes} will imply that $u$ is a vertex of $\overline{\MP}$. 
{By Proposition \ref{pro:neighbors} the substitution of a CHSH inequality in the DD algorithm can increase the joint rank $\rank(v,w)$ of a pair of vertices $v,w\in V_1$ if and only if $v$ and $w$ are not neighbors, and they are  neighbors of a vertex $v'$. Therefore we will consider a pair $v,w$ of non-neighbors with $v'$ as their common neighbor.}
%for a pair $v,w\in V_1$ the only way to increase the joint rank $\rank(v,w)$ by the substitution of a CHSH inequality  in the DD algorithm is to choose {two non-neighbor vertices} $v$ and $w$ {that are common neighbors of a vertex $v'$.}
%$2$-neighbors \comm{ref to App} \si{[[@Cihan: Some of the $w$s are $3$-neighbors. Is this a different definition?]]}.
Consider the canonical $\bar T_2$ neighbor of $v$:
%v' = \frac{1}{4}\left (I + XX - YY - YZ - ZY + ZZ \right ).%
\begin{eqnarray}
{v' =  
 \centering
  \begin{tabular}{ c|c c c } 
 $1$ & ~~ & ~~ & ~~ \\
 \hline 
 ~~ & $1$ & $0$ & $0$ \\ 
 ~~ & $0$ & $-1$ & $-1$  \\
 ~~ & $0$ & $-1$ & $1$
\end{tabular}
 }
\label{eq:canonical-vertex}
\end{eqnarray}

%We will use the following common neighbors of $v$ and $v'$:

\begin{pro}\label{pro:wi}
%The neighbors of $v'$ that are not also neighbors of $v$ break into four orbits under the action of the stabilizer of $v$ with the following representative vertices:
The following vertices of $\MP$
%w_{0} &=& \frac{1}{4}\left (I + XX - XY - YZ - ZX - ZY \right ),\label{eq:u0}\\
%w_{1} &=& \frac{1}{4}\left (I + XX - XY - YX - YY + ZZ \right ),\label{eq:u1}\\
%w_{2} &=& \frac{1}{4}\left (I - XY - YY - ZY \right ),\label{eq:u2}\\
%w_{3} &=& \frac{1}{4}\left (I - XY - YX - YZ - ZX + ZZ \right ).\label{eq:u3}
\begin{subequations}
\begin{eqnarray}
w_0 &=&  
 {\centering
  \begin{tabular}{ c|c c c } 
 $1$ & ~~ & ~~ & ~~ \\
 \hline 
 ~~ & $1$ & $-1$ & $0$ \\ 
 ~~ & $0$ & $0$ & $-1$  \\
 ~~ & $-1$ & $-1$ & $0$
\end{tabular}}\label{eq:u0}\\
w_1 &=&  
 {\centering
  \begin{tabular}{ c|c c c } 
 $1$ & ~~ & ~~ & ~~ \\
 \hline 
 ~~ & $1$ & $-1$ & $0$ \\ 
 ~~ & $0$ & $0$ & $-1$  \\
 ~~ & $-1$ & $-1$ & $0$
\end{tabular}}\label{eq:u1}\\
w_2 &=&  
 {\centering
  \begin{tabular}{ c|c c c } 
 $1$ & ~~ & ~~ & ~~ \\
 \hline 
 ~~ & $0$ & $-1$ & $0$ \\ 
 ~~ & $0$ & $-1$ & $0$  \\
 ~~ & $0$ & $-1$ & $0$
\end{tabular}}\label{eq:u2}\\
w_3 &=&  
 {\centering
  \begin{tabular}{ c|c c c } 
 $1$ & ~~ & ~~ & ~~ \\
 \hline 
 ~~ & $0$ & $-1$ & $0$ \\ 
 ~~ & $-1$ & $0$ & $-1$  \\
 ~~ & $-1$ & $0$ & $1$
\end{tabular}}\label{eq:u3}
\end{eqnarray}
\end{subequations}
are neighbors of $v'$ but not neighbors of $v$, and the joint rank {(with respect to the generating matrix $\bar M^\nloc$ of $\MP$)} is given by 
$$
\rank(v,w_i) = \left\lbrace
\begin{array}{ll}
 7  & i=0 \\ 
 6  & \text{otherwise.}
 \end{array}
\right.
$$
%of each $u_{i}$ with $\bar{v}_{0}$ (in order) is $\text{rank}_{\bar{A}}(\bar{v}_{0},\bar{u}_{i}) = 6,7,6,6$.
\end{pro}
\Proof{{See Corollary \ref{cor:2 neighbors}.} 
}

Next we identify the CHSH inequalities that are tight at the pairs $(v,w_i)$.
They will be used to increase the joint rank.

\Lem{\label{lem:tight CHSH}
The CHSH inequalities tight at   both $v$ and $w_i$ are given as follows:
\begin{enumerate}
\item $i=0$
$$
  \centering
  \begin{tabular}{ c|c c c } 
 $2$ & ~~ & ~~ & ~~ \\
 \hline 
 ~~ & $0$ & $0$ & $0$ \\ 
 ~~ & $0$ & $1$ & $1$  \\
 ~~ & $0$ & $1$ & $-1$
\end{tabular}\hspace{0.5 cm}%
\begin{tabular}{ c|c c c } 
 $2$ & ~~ & ~~ & ~~ \\
 \hline 
 ~~ & $0$ & $0$ & $0$ \\ 
 ~~ & $-1$ & $1$ & $0$  \\
 ~~ & $1$ & $1$ & $0$
\end{tabular}\hspace{0.85 cm}%
\begin{tabular}{ c|c c c } 
 $2$ & ~~ & ~~ & ~~ \\
 \hline 
 ~~ & $-1$ & $0$ & $-1$ \\ 
 ~~ & $0$ & $0$ & $0$  \\
 ~~ & $1$ & $0$ & $-1$
\end{tabular}\hspace{0.5 cm}%
\begin{tabular}{ c|c c c } 
 $2$ & ~~ & ~~ & ~~ \\
 \hline 
 ~~ & $-1$ & $0$ & $-1$ \\ 
 ~~ & $-1$ & $0$ & $1$  \\
 ~~ & $0$ & $0$ & $0$
\end{tabular}
$$

\item $n=1$
$$
  \centering
  \begin{tabular}{ c|c c c } 
 $2$ & ~~ & ~~ & ~~ \\
 \hline 
 ~~ & $0$ & $0$ & $0$ \\ 
 ~~ & $0$ & $1$ & $1$  \\
 ~~ & $0$ & $1$ & $-1$
\end{tabular}\hspace{0.5 cm}%
\begin{tabular}{ c|c c c } 
 $2$ & ~~ & ~~ & ~~ \\
 \hline 
 ~~ & $0$ & $0$ & $0$ \\ 
 ~~ & $0$ & $1$ & $-1$  \\
 ~~ & $0$ & $-1$ & $-1$
\end{tabular}\hspace{0.85 cm}%
\begin{tabular}{ c|c c c } 
 $2$ & ~~ & ~~ & ~~ \\
 \hline 
 ~~ & $-1$ & $0$ & $1$ \\ 
 ~~ & $0$ & $0$ & $0$  \\
 ~~ & $-1$ & $0$ & $-1$
\end{tabular}\hspace{0.5 cm}%
\begin{tabular}{ c|c c c } 
 $2$ & ~~ & ~~ & ~~ \\
 \hline 
 ~~ & $-1$ & $0$ & $-1$ \\ 
 ~~ & $0$ & $0$ & $0$  \\
 ~~ & $1$ & $0$ & $-1$
\end{tabular}
$$
\item $i=2$
$$
  \centering
  \begin{tabular}{ c|c c c } 
 $2$ & ~~ & ~~ & ~~ \\
 \hline 
 ~~ & $0$ & $0$ & $0$ \\ 
 ~~ & $0$ & $1$ & $1$  \\
 ~~ & $0$ & $1$ & $-1$
\end{tabular}\hspace{0.5 cm}
\begin{tabular}{ c|c c c } 
 $2$ & ~~ & ~~ & ~~ \\
 \hline 
 ~~ & $0$ & $0$ & $0$ \\ 
 ~~ & $-1$ & $1$ & $0$  \\
 ~~ & $1$ & $1$ & $0$
\end{tabular}
$$

\item $i=3$
$$
\centering
  \begin{tabular}{ c|c c c } 
 $2$ & ~~ & ~~ & ~~ \\
 \hline 
 ~~ & $0$ & $0$ & $0$ \\ 
 ~~ & $0$ & $1$ & $1$  \\
 ~~ & $0$ & $1$ & $-1$
\end{tabular}\hspace{0.5 cm}
\begin{tabular}{ c|c c c } 
 $2$ & ~~ & ~~ & ~~ \\
 \hline 
 ~~ & $-1$ & $0$ & $-1$ \\ 
 ~~ & $0$ & $0$ & $0$  \\
 ~~ & $1$ & $0$ & $-1$
\end{tabular}
$$
\end{enumerate}
}
\Proof{
%For this we follow an approach  similar to the one used in the proof of Lemma~\ref{lem:tight-chsh}. 
%More explicitly, 
{Let} the pairs $(\Omega_{{v}},\gamma_{v})$ and $({\Omega}_{{w}_{i}},\gamma_{w_{i}})$ correspond to vertices $v$ and $w_{i}$, respectively. We run over all CHSH inequalities {parametrized by $(\partial\Omega,\gamma)$} satisfying
\begin{eqnarray}
|\Omega_{v}\cap \partial\Omega | = 2\quad \text{and}\quad%
\gamma_{v}(a) = \gamma(a)+1\quad \text{for all}\quad \Omega_{v}\cap \partial\Omega,
%s_{\bar{v}_{0}}|_{\Omega^{\perp}_{\bar{v}_{0}}} = \gamma|_{\Omega^{\perp}_{\bar{v}_{0}}} +1,\notag
\notag
\end{eqnarray}
simultaneously with analogous relations holding for $w_{i}$; {see part (2) of Lemma \ref{lem:type2-violate}.} One can directly compute tight CHSH inequalities of each vertex and compare, yielding the mutual tight inequalities.
 
}

\Thm{\label{thm:vertices of MP bar}
The convex combinations 
\begin{equation}\label{eq:ui}
u_i=p v + (p-1) w_i
\end{equation}
where
$$
p=\left\lbrace
\begin{array}{ll}
2/3 &i=3\\
1/2 &\text{otherwise,}
\end{array}
\right.
$$
produce the following vertices of  $\overline{\MP}$:
 $$
\begin{aligned}
u_0&=   \begin{tabular}{ c|c c c } 
 $1$ & ~~ & ~~ & ~~ \\
 \hline 
 ~~ & $1$ & $0$ & $0$ \\ 
 ~~ & $1/2$ & $-1/2$ & $-1/2$  \\
 ~~ & $-1/2$ & $-1/2$ & $1/2$
\end{tabular}%
&  
u_1&=     \begin{tabular}{ c|c c c } 
 $1$ & ~~ & ~~ & ~~ \\
 \hline 
 ~~ & $1$ & $0$ & $0$ \\ 
 ~~ & $0$ & $-1$ & $0$  \\
 ~~ & $0$ & $0$ & $1$
\end{tabular}%
\\
u_2&=     \begin{tabular}{ c|c c c } 
 $1$ & ~~ & ~~ & ~~ \\
 \hline 
 ~~ & $1/2$ & $0$ & $0$ \\ 
 ~~ & $1/2$ & $-1$ & $0$  \\
 ~~ & $0$ & $-1/2$ & $1/2$
\end{tabular}
& 
u_3&=   \begin{tabular}{ c|c c c } 
 $1$ & ~~ & ~~ & ~~ \\
 \hline 
 ~~ & $2/3$ & $1/3$ & $0$ \\ 
 ~~ & $1/3$ & $-2/3$ & $-1/3$  \\
 ~~ & $-1/3$ & $0$ & $1$
\end{tabular}
\end{aligned} 
$$
}  
\Proof{%\comm{revise}
We now compute the change in rank. 
{Let $\bar M$ denote the generating matrix of $\overline\MP$.}
Let 
%$\zZ_{\bar{A}}$ 
{$\zZ_{v}$} 
and 
%$\zZ_{\bar{B}}$ 
{$\zZ_{w_i}$} 
index the joint tight inequalities of {$v$} and some {$w_{i}$}. We are interested in computing the rank of the matrix $\bar{M}[\zZ_{v}\cup \zZ_{w_i}]$.
%, where $\bar{M}$ is the generating matrix for $\overline{\MP}$. 
Notice first that since $\text{rank}(\bar{M}[\zZ_{v}]) = k$, we have that the {(augmented)} matrix $\begin{bmatrix} \bar{M}[\zZ_{v}] & \one_{|\zZ_{v}|\times 1}] \end{bmatrix}$ that includes the constant term can be put in reduced row echelon. In particular, the Pauli expectations in the pivot columns (with leading ones) are dependent on the Pauli expectations in the non-pivot columns. In our case it is straightforward to compute the reduced echelon form algebraically. The result is given in Table~\ref{tab:matrix-tight}.%

\begin{table}[h!]
\centering
\begin{subfigure}{.45\textwidth}
  \centering
  {\footnotesize
  \begin{eqnarray}
%&&\begin{matrix}%
%XX & XY &  XZ & YX & YY & YZ & ZX &  ZY &  ZZ & II
%\end{matrix}\\%
\begin{bmatrix}
 ~~1 & ~~ & ~~ & ~~ & ~~ & ~~ & ~~ & ~~ & ~~ & -1~~ \\
 ~~ & 1 & ~~ & ~~ & ~~ & ~~ & ~~ & -1 & -1 & ~~ \\
 ~~ & ~~ & 1 & ~~ & ~~ & ~~ & ~~ & ~~ & ~~ & ~~ \\
 ~~ & ~~ & ~~ & 1 & ~~ & ~~ & ~~ & -1 & ~~ & -1 \\
 ~~ & ~~ & ~~ & ~~ & 1 & ~~ & ~~ & ~~ & 1 & ~~ \\
 ~~ & ~~ & ~~ & ~~ & ~~ & 1 & ~~ & -1 & ~~ & ~~ \\
 ~~ & ~~ & ~~ & ~~ & ~~ & ~~ & 1 & ~~ & -1 & 1
  \end{bmatrix}\notag
  \end{eqnarray}
  }% close font size
  \caption{}\label{tab:rref-u0}
\end{subfigure}\hspace{1.25 cm}%
\begin{subfigure}{.45\textwidth}
  \centering
  {\footnotesize
  \begin{eqnarray}
 \begin{bmatrix}
%  XX & XY &  XZ & YX & YY & YZ & ZX &  ZY &  ZZ & II\\
 ~~1 & ~~ & ~~ & ~~ & ~~ & ~~ & ~~ & ~~ & ~~ & -1~~ \\ 
 ~~ & 1 & ~~ & -1 & ~~ & ~~ & ~~ & ~~ & ~~ & ~~ \\ 
 ~~ & ~~ & 1 & ~~ & ~~ & ~~ & 1 & ~~ & ~~ & ~~ \\ 
 ~~ & ~~ & ~~ & ~~ & 1 & ~~ & ~~ & ~~ & ~~ & 1 \\ 
 ~~ & ~~ & ~~ & ~~ & ~~ & 1 & ~~ & -1 & ~~ & ~~ \\ 
 ~~ & ~~ & ~~ & ~~ & ~~ & ~~ & ~~ & ~~ & 1 & -1
  \end{bmatrix}\notag
  \end{eqnarray}
  }%close font
  \caption{}\label{tab:rref-u1}
\end{subfigure}%
\vspace{1 em}
\begin{subfigure}{.45\textwidth}
  \centering
  {\footnotesize
  \begin{eqnarray}
 \begin{bmatrix}
%  XX & XY &  XZ & YX & YY & YZ & ZX &  ZY &  ZZ & II\\
 ~~1 & ~~ & ~~ & ~~ & ~~ & ~~ & ~~ & ~~ & -1 & ~~~~ \\ 
 ~~ & 1 & ~~ & ~~ & ~~ & ~~ & 1 & -1 & -1 & ~~ \\ 
 ~~ & ~~ & 1 & ~~ & ~~ & ~~ & 1 & ~~ & ~~ & ~~ \\ 
 ~~ & ~~ & ~~ & 1 & ~~ & ~~ & 1 & -1 & ~~ & -1 \\ 
 ~~ & ~~ & ~~ & ~~ & 1 & ~~ & ~~ & ~~ & ~~ & 1 \\ 
 ~~ & ~~ & ~~ & ~~ & ~~ & 1 & ~~ & -1 & 1 & -1
  \end{bmatrix}\notag
  \end{eqnarray}
  }% close font size
  \caption{}\label{tab:rref-u2}
\end{subfigure}\hspace{1.25 cm}%
\begin{subfigure}{.45\textwidth}
  \centering
  {\footnotesize
  \begin{eqnarray}
 \begin{bmatrix}
%  XX & XY &  XZ & YX & YY & YZ & ZX &  ZY &  ZZ & II\\
 ~~1 & ~~ & ~~ & ~~ & ~~ & -1 & ~~ & 1 & ~~ & -1~~ \\ 
 ~~ & 1 & ~~ & ~~ & ~~ & -1 & -1 & ~~ & ~~ & -1 \\ 
 ~~ & ~~ & 1 & ~~ & ~~ & -1 & 1 & 1 & ~~ & ~~ \\ 
 ~~ & ~~ & ~~ & 1 & ~~ & -1 & -1 & ~~ & ~~ & -1 \\ 
 ~~ & ~~ & ~~ & ~~ & 1 & 1 & ~~ & -1 & ~~ & 1 \\ 
 ~~ & ~~ & ~~ & ~~ & ~~ & ~~ & ~~ & ~~ & 1 & -1 
  \end{bmatrix}\notag
  \end{eqnarray}
  }%close font
  \caption{}\label{tab:rref-u3}
\end{subfigure}%
\caption{We label the columns of $\begin{bmatrix} {\bar M}[\zZ_{v}] & I_{|\zZ_{v}|\times 1} \end{bmatrix}$ by Pauli coefficients according to {$XX,XY,XZ,YX,YY,YZ,ZX,ZY, ZZ, II$}. The reduced echelon form (neglecting rows of all zeros) for mutual tight inequalities (in $\MP$) of both $v$ and (a) $w_{0}$, (b) $w_{1}$, (c) $w_{2}$, (d) $w_{3}$. For some $x\in \RR^{10}$ each row represents an equality $\sum_{a\in E^{(\text{nl})}} c_{a}x_{a} = 0$, where $c_{a}$ is the coefficient appearing in the matrix, $x_{a}$ is a Pauli expectation value. The $II$ column represents the constant term in each equality.
}
\label{tab:matrix-tight}
\end{table}

Since for each {$w_i$} 
%$\bar{u}_{i}$ 
there are 
%not too many (at most $4$) 
{at most $4$}
CHSH inequalities to consider we can directly calculate $\text{rank}({\bar M}[\zZ_{v}\cup\zZ_{w_i}])$. An increase of rank occurs whenever there are non-trivial equations to be satisfied between the dependent (or non-pivot) columns in Table~\ref{tab:matrix-tight}. Let us see how this works for {$w_0$.}
%$\bar{u}_{0}$. 
The other cases work similarly. Consider the first inequality from {part (1) of Lemma \ref{lem:tight CHSH}}
%Table~\ref{tab:tight-u0}, 
which imposes the constraint
{
\begin{eqnarray}
2 + (w_0)_{YY}   + (w_0)_{YZ}  + (w_0)_{ZY}  - (w_0)_{ZZ} = 0.\notag 
\end{eqnarray}
}
%\begin{eqnarray}
%2 + \langle YY \rangle  + \langle YZ \rangle  + \langle ZY \rangle  - \langle ZZ \rangle = 0.\notag 
%\end{eqnarray}
From the matrix in Table~\ref{tab:rref-u0} we know that 
$(w_0)_{YY}= - (w_0)_{ZZ}$
%$\langle YY \rangle =  -\langle ZZ \rangle$ 
and that 
$(w_0)_{YZ}=(w_0)_{ZY}$.
%$\langle YZ \rangle = \langle ZY \rangle$.
 Up to an overall constant factor this yields the non-redundant equation
\begin{eqnarray}
1 + (w_0)_{ZY} - (w_0)_{ZZ} = 0.
\end{eqnarray}
%\begin{eqnarray}
%1 + \langle ZY \rangle - \langle ZZ \rangle = 0.
%\end{eqnarray}
Since the joint rank, $\text{rank}_{\bar{M}}(v,w_{0}) = 7$, we now have that $\text{rank}(M[\zZ_{v}\cup\zZ_{w_i}]) \geq 8$. It turns out, however, that the other constraints are redundant. For example we have
\begin{eqnarray}
2 - (w_0)_{YX}  + (w_0)_{YY}   + (w_0)_{ZX}   - (w_0)_{ZY}  = 0.\notag
\end{eqnarray}
%\begin{eqnarray}
%2 - \langle YX \rangle  + \langle YY \rangle  + \langle ZX \rangle  - \langle ZY \rangle = 0.\notag
%\end{eqnarray}
Using that 
$(w_0)_{YX}=(w_0)_{YZ}+1$
%$\langle YX \rangle =  \langle ZY \rangle + 1$
and that 
$(w_0)_{ZX}=(w_0)_{ZZ}-1$
%$\langle ZX \rangle =  \langle ZZ \rangle - 1$
together with 
$(w_0)_{YY}=-(w_0)_{ZZ}$
%$\langle YY \rangle = -\langle ZZ\rangle$ 
we obtain the trivial equation $0 = 0$. Following a similar calculation for the other non-neighbors, the non-redundant new constraints for each {$w_i$}
%$\bar{u}_{i}$ 
($i = 0,1,2,3$) can be summarized as:
\begin{eqnarray}
w_{0}:&\quad & \quad 1 + (w_0)_{ZY} - (w_0)_{ZZ} = 0,\\
w_{1}:&\quad & \quad (w_1)_{ZX} = (w_1)_{ZY} = 0,\\
w_{2}:&\quad & \quad (w_2)_{ZX} = 0, ~~ (w_2)_{ZY} = (w_2)_{ZZ} -1,\\
w_{3}:&\quad & \quad (w_3)_{YZ} = (w_3)_{ZX}, ~ (w_3)_{ZY} = 0.
\end{eqnarray}
%\begin{eqnarray}
%w_{0}:&\quad & \quad 1 + \langle ZY \rangle - \langle ZZ \rangle = 0,\\
%w_{1}:&\quad & \quad \langle ZX \rangle = \langle ZY \rangle = 0,\\
%w_{2}:&\quad & \quad \langle ZX \rangle = 0, ~~ \langle ZY \rangle = \langle ZZ \rangle -1,\\
%w_{3}:&\quad & \quad \langle YZ \rangle = \langle ZX \rangle, ~ \langle ZY \rangle = 0.
%\end{eqnarray}
Notice that since the joint rank of {$v$ and} $w_{i}$ ($i=1,2,3$) is $6$, two non-redundant constraints are needed to sufficiently bump the rank up.

It is now appropriate to implement %Proposition~\ref{pro:dd-lemma-adj} 
{Proposition \ref{pro:DD for intersection of pair of polytopes}}
and thus {Eq.~(\ref{eq:ui})}. {We obtain t}he coefficients of the newly generated DD vertices.
% is given in Table~\ref{tab:mpb-vertices}.%
}

 The group ${G}$ %\comm{I think we can simply write $G$} 
 is a subgroup of $\text{Aut}({\overline \CL})$, the combinatorial automorphism group of {$\overline \CL$} Therefore this group also acts on the intersection $\overline{\MP}$ by combinatorial automorphisms.  
{Using the ${G}$ action} we can generate the orbits of these vertices.\\

{

\Rem{
{\rm
The vertex $u_1$ described in this theorem and its orbits under ${G}$ (which constitute the $\tau_3$ type of vertices of $\overline\MP$) are precisely the images of the non-local stabilizer states under the $\pi$ map. The convex hull of the non-local stabilizer states is known as the Clifford polytope \cite{buhrman2006new}. Then the projection under $\pi$ gives the polytope $\Conv(\tau_3)$. We have 
$$
\Conv(\tau_3) \subset {\overline\MP} \subset {\overline \CL}.
$$
As an immediate consequence of this observation we obtain that if a quantum state $\rho$ violates a CHSH inequality then it violates a facet of the Clifford polytope. This result is first proved in \cite{howard2012nonlocality} to show that non-locality can be used for benchmarking universal quantum computation. 
}
}

}

\section{New vertices of $\mathbf{\Lambda_{2}}$}\label{sec:lifting}

We consider lifts of the vertices $u_{i}\in \overline{\MP}$ {described in Theorem \ref{thm:vertices of MP bar}}, {by which we mean points
$\tilde u_{i}\in \RR^{15}$ satisfying $M^\loc \tilde u_{i}\geq I_{36\times 1}$ and mapping to $u_i$ under the map $\pi:\RR^{15}\to \RR^{9}$.}
%a map $\RR^{9}\to \RR^{15}$ such that the lifted point $\tilde u_{i}\in \RR^{15}$ still satisfies $M^\loc \tilde u_{i}\geq I_{36\times 1}$.
 The fact that $\overline{\MP}$ is contained in $\overline\CL$ allows for a simplified method of lifting that ensures that $\tilde u_{i} \in \Lambda_{2}\cap \CL$ and, in particular, such ``classical" lifts yield vertices of $\Lambda_{2}$.

\subsection{{Vertices the classical polytope}
%The $\mathbf{V}$-representation of $\mathbf{\bar C}$
}

%\si{[[Should we add this information earlier in Section~\ref{sec:lambda}?]]} \comm{In the earlier section our focus was the hyperplanes but now we switch to vertices. I think it is ok.}\\

{Recall that vertices of $\CL$ are described by Eq.~(\ref{eq:deterministic-vertices}). It is determined by the values $s_i,r_i\in \ZZ_2$ where $i=0,1,2$. We begin by an alternative description of these vertices.
Let $\Gamma\subset E$ be a subset.
Let $\chi:\Gamma\to \ZZ_2$ be a function such that $\chi(0)=0$ if $0\in \Gamma$. We will say that $\chi$ is even (odd) if $\chi(a+b)=\chi(a)+\chi(b)$ ($\chi(a+b)=\chi(a)+\chi(b)+1$) for all $a,b\in E$ such that $\omega(a,b)=0$. 
Now, a vertex of $\CL$ is given by $d^\xi$ where $\xi:E\to \ZZ_2$ is a function such that $\xi|_{E^\loc}$ is an even function. The coordinates of the vertex are given by 
$$(d^\xi)_a=(-1)^{\xi(a)}.$$ 
Observe that for such a $\xi$ the restriction $\xi|_{E^\nloc}$ is either even (parity $0$) or odd (parity $1$).
Let $\bar\xi$ denote the function obtained by $s_i\mapsto s_i+1$ and $r_i\mapsto r_i+1$. Then $\bar\xi|_{E^\nloc} = \xi|_{E^\nloc}$. Therefore $d^\xi$ and $d^{\bar \xi}$ are mapped to the same point under $\pi$. We will write $\delta^\xi$ for the point $\pi(d^\xi)$.
}

For any $x \in \overline{\MP}$ we {have} the decomposition
\begin{eqnarray}
{x} = \sum_{\xi}q_{\xi}{\delta}^{\xi}\quad \text{where}\quad q_{\xi}\geq 0,~~\sum_{\xi}q_{\xi} = 1.%
\label{eq:convex-bar_C}
\end{eqnarray}
{where $\xi$ runs over functions whose restriction to the local part is even.}
Supposing that a suitable set of parameters $q^{\xi}$ are given, then there is a rather simple (although not the most general) prescription for lifting ${x}$. Supposing we have {chosen} a lift $\tilde{\delta}^{\xi}$ of ${\delta}^{\xi}$, let us define the \textit{classical} lift of ${x}$ to be
\begin{eqnarray}
 \tilde{x} := \sum_{\xi}q_{\xi}\tilde{\delta}^{\xi}.\label{eq:classical-lift-a}
\end{eqnarray}
This is a convex mixture with the same parameters $q_{\xi}$ as in Eq.~(\ref{eq:convex-bar_C}) but with ${\delta}^{\xi}$ replaced with $\tilde{\delta}^{\xi}$.

\begin{lem}
The lift $\tilde \delta^{\xi}$ of $\delta^{\xi}$ is given by
\begin{eqnarray}
\tilde \delta^{\xi} = \alpha_{\xi}\, d^{\xi}+(1-\alpha_{\xi})\, d^{\bar \xi}\quad \text{where}\quad \alpha_{\xi} \in [0,1].
\end{eqnarray}
\end{lem}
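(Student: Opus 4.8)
The plan is to characterize the full set of classical lifts of $\delta^\xi$ — that is, the points of $\CL$ that project to $\delta^\xi$ under $\pi$ — and to show this set is exactly the segment $\conv\{d^\xi, d^{\bar\xi}\}$. Concretely, I would study the slice $F := \CL \cap \pi^{-1}(\delta^\xi)$ and argue that it is a face of $\CL$, so that it equals the convex hull of the deterministic vertices it contains; identifying those vertices as precisely $d^\xi$ and $d^{\bar\xi}$ then yields the stated form, with $\alpha_\xi\in[0,1]$ the barycentric coordinate along the segment.

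First I would establish that $\delta^\xi$ is a vertex of $\overline\CL$. Since the $H$-description of $\overline\CL$ includes the NN inequalities $-1\le \langle A\rangle_{BC}\le 1$, we have $\overline\CL\subseteq[-1,1]^9$. But $\delta^\xi=\pi(d^\xi)$ has all coordinates $(-1)^{\xi(a)}=\pm 1$, so it is a vertex of the cube $[-1,1]^9$, and a vertex of the cube that belongs to the subpolytope $\overline\CL$ is automatically a vertex of $\overline\CL$ (if $\delta^\xi=\tfrac12(a+b)$ with $a,b\in\overline\CL\subseteq[-1,1]^9$, extremality in the cube forces $a=b=\delta^\xi$). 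Next, because $\delta^\xi$ is a vertex of $\overline\CL=\pi(\CL)$, I would pick a linear functional $c$ on $\RR^9$ that is uniquely maximized over $\overline\CL$ at $\delta^\xi$; pulling $c$ back along the linear map $\pi$, the functional $c\circ\pi$ is maximized over $\CL$ exactly on $F=\CL\cap\pi^{-1}(\delta^\xi)$, exhibiting $F$ as a nonempty face of $\CL$. As a face of the polytope $\CL=\conv(D)$, the set $F$ is the convex hull of the deterministic vertices it contains, namely those $d\in D$ with $\pi(d)=\delta^\xi$.

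The heart of the argument — and the step I expect to be the main obstacle — is proving that the only deterministic vertices projecting to $\delta^\xi$ are $d^\xi$ and $d^{\bar\xi}$. Using the parametrization of Eq.~(\ref{eq:deterministic-vertices}), a deterministic vertex $d^\zeta$ has non-local coordinates $(-1)^{r_i+s_j}$, so $\pi(d^\zeta)=\pi(d^\xi)$ forces $r_i+s_j\equiv r_i'+s_j'\pmod 2$ for all $i,j\in\{0,1,2\}$, where $(r,s)$ and $(r',s')$ are the outcome data of $\xi$ and $\zeta$. Setting $u_i=r_i-r_i'$ and $t_j=s_j-s_j'$, this reads $u_i+t_j\equiv 0$ for all $i,j$, which forces all the $u_i$ and all the $t_j$ to equal one common value in $\ZZ_2$; the two solutions are the trivial shift and the global flip $r_i\mapsto r_i+1,\ s_j\mapsto s_j+1$, i.e. $\zeta\in\{\xi,\bar\xi\}$. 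This is precisely the origin of the identity $\bar\xi|_{E^\nloc}=\xi|_{E^\nloc}$ already noted in the text. Since $d^\xi\neq d^{\bar\xi}$ (they differ in their local coordinates), the face $F=\conv\{d^\xi,d^{\bar\xi}\}$ is a genuine segment, and hence any classical lift $\tilde\delta^\xi\in F$ is uniquely of the form $\alpha_\xi\,d^\xi+(1-\alpha_\xi)\,d^{\bar\xi}$ with $\alpha_\xi\in[0,1]$, as claimed.
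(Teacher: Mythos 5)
Your proposal is correct and follows essentially the same route as the paper, whose proof is the one-line observation that $d^{\xi}$ and $d^{\bar\xi}$ are the two deterministic vertices projecting to $\delta^{\xi}$; you simply supply the details the paper leaves implicit, namely that these are the \emph{only} two preimages in $D$ and that the fiber $\CL\cap\pi^{-1}(\delta^{\xi})$ is a face of $\CL$ and hence equals $\conv\{d^{\xi},d^{\bar\xi}\}$. Both added steps are verified correctly, so this is a fleshed-out version of the paper's argument rather than a different one.
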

\begin{proof}
This follows from the fact that there are two distinct vertices $d^{\xi},d^{\bar \xi}\in D$ such that $\pi(d^{\xi}) = \pi(d^{\bar \xi}) = \delta^{\xi}$.
\end{proof}

\begin{rem}
{\rm
For the pair of vertices $d^{\xi}$ and $d^{\bar \xi}$ we fix the convention that between 
$$z=(r_0,r_1,r_2,s_0,s_1,s_2)\;\text{ and }\;\bar z=(r_0+1,r_1+1,r_2+1,s_0+1,s_1+1,s_2+1)$$  
$z$ is the bit string with least Hamming weight and if %it is 
{the Hamming weights are}
equal then $\bar z$ is assigned to the bit string with the first non-zero entry. {For example, $z=00$ and $\bar z=11$, or when the weights are the same such as $z=01$ and $\bar z =10$.}
}
\end{rem}

The expression in Eq.~(\ref{eq:classical-lift-a})   takes the form
\begin{eqnarray}
 \tilde{x} = \sum_{\xi}q_{\xi}\left (\alpha_{\xi}\, d^{\xi}+(1-\alpha_{\xi})\, d^{\bar \xi}\right ).\label{eq:classical-lift-b}
\end{eqnarray}
Once the $q_{\xi}$'s are fixed by Eq.~(\ref{eq:convex-bar_C}), the only remaining free parameters are the $\alpha_{\xi}$'s which are all in the range $\alpha_{\xi}\in [0,1]$. Thus the space of lifts forms a hypercube in a dimension determined by {the} number of terms appearing in Eq.~(\ref{eq:convex-bar_C}). Furthermore, it is clear that the lift $\tilde{\xi}$ will always, by construction, be in $\Lambda_{2} \cap \CL $, so long as $x\in \overline{\MP}$.

\subsection{Classical lifts}

Finding the coefficients $q_{\xi}$ is itself a {polytope-theoretic} problem. Let us see how this works. Working in homogenized coordinates, first introduce the matrix $R\in \RR^{10\times 32}$ whose columns are the rays ${\delta}^{\xi}$ and let $q \in \RR^{32}$ be a vector. For some $x \in {C(\overline{\MP})}$ we wish to find solutions to the following set of equations
\begin{eqnarray}
x = R\,q\quad\text{subject to}\quad q\geq 0.
\label{eq:convex-decomposition-a}
\end{eqnarray}
The space of feasible vectors $q$ is the positive orthant of $\RR^{32}$ intersected by $10$ constraints, one of which is normalization of the $q_{\xi}$'s, while the others are of the form:
\begin{eqnarray}
(x)_{i} = (R\,q)_{i} =  \sum_{\xi}q_{\xi}({\delta}^{\xi})_{i}\quad (i=1,\cdots,9).
\end{eqnarray}

\begin{table}[h!]
\centering
\begin{subfigure}{.33\textwidth}
  \centering
   {\footnotesize
   \begin{eqnarray}
\begin{tabular}{ c|c c c } 
 $1$ & ~~ & ~~ & ~~ \\
 \hline 
 ~~ & $-1$ & $0$ & $0$ \\ 
 ~~ & $0$ & $0$ & $0$  \\
 ~~ & $0$ & $0$ & $0$
\end{tabular}\notag
\end{eqnarray}
  }%close font
  \caption{}
\end{subfigure}%
\begin{subfigure}{.33\textwidth}
  \centering
   {\footnotesize
   \begin{eqnarray}
\begin{tabular}{ c|c c c } 
 $1$ & ~~ & ~~ & ~~ \\
 \hline 
 ~~ & $0$ & $0$ & $0$ \\ 
 ~~ & $0$ & $1$ & $0$  \\
 ~~ & $0$ & $0$ & $0$
\end{tabular}\notag
\end{eqnarray}
  }%close font
  \caption{}
\end{subfigure}%
\begin{subfigure}{.33\textwidth}
  \centering
   {\footnotesize
   \begin{eqnarray}
\begin{tabular}{ c|c c c } 
 $1$ & ~~ & ~~ & ~~ \\
 \hline 
 ~~ & $0$ & $0$ & $0$ \\ 
 ~~ & $0$ & $0$ & $0$  \\
 ~~ & $0$ & $0$ & $-1$
\end{tabular}\notag
\end{eqnarray}
  }%close font
  \caption{}
\end{subfigure}%

\begin{subfigure}{.99\textwidth}
  \centering
   {\footnotesize
   \begin{eqnarray}
\begin{tabular}{ c|c c c } 
 $1$ & ~~ & ~~ & ~~ \\
 \hline 
 ~~ & $-1$ & $0$ & $0$ \\ 
 ~~ & $0$ & $0$ & $0$  \\
 ~~ & $0$ & $0$ & $0$
\end{tabular}\hspace{2.75 cm}%
\begin{tabular}{ c|c c c } 
 $1$ & ~~ & ~~ & ~~ \\
 \hline 
 ~~ & $0$ & $0$ & $0$ \\ 
 ~~ & $0$ & $1$ & $0$  \\
 ~~ & $0$ & $0$ & $0$
\end{tabular}\hspace{2.75 cm}%
\begin{tabular}{ c|c c c } 
 $1$ & ~~ & ~~ & ~~ \\
 \hline 
 ~~ & $0$ & $0$ & $0$ \\ 
 ~~ & $0$ & $0$ & $0$  \\
 ~~ & $0$ & $0$ & $-1$
\end{tabular}\notag
\end{eqnarray}
  }%close font
  \caption{}
\end{subfigure}%
\caption{Inequalities that enforce the condition that Pauli coefficients satisfy $-1 \leq \langle A \rangle \leq 1 $. We depict the tight inequalities for vertices (a) $u_{0}$,  (b) $u_{2}$,  (c) $u_{3}$,  (d) $u_{1}$. 
}
\label{tab:tight-cube-ui}
\end{table}

This problem can be simplified significantly by recognizing that we are interested in decompositions of vertices ${u}_{i}\in \overline{\MP}$, which by construction are tight at some subset of CHSH facets. More explicitly, if ${u}_{i}$ is tight at some facet ${h}$, then we have that
\begin{eqnarray}
{h}\cdot {u}_{i} = \sum_{z} q_{\xi}({h}\cdot {\delta}^{\xi}) = 0.
\end{eqnarray}
Since $h\cdot \delta^{\xi}\geq 0$ as well as $q_{\xi}\geq 0$, this implies that only rays ${\delta}^{\xi}$ that are also tight at ${h}$ should be involved in the decomposition.

\begin{table}[h!]
\centering
\begin{subfigure}{.39\textwidth}
  \centering
   {\footnotesize
   \begin{eqnarray}
&&\begin{tabular}{ c|c c c } 
 $2$ & ~~ & ~~ & ~~ \\
 \hline 
 ~~ & $-1$ & $-1$ & $0$ \\ 
 ~~ & $-1$ & $1$ & $0$  \\
 ~~ & $0$ & $0$ & $0$
\end{tabular}~~~
\begin{tabular}{ c|c c c } 
$2$ & ~~ & ~~ & ~~ \\
 \hline 
 ~~ & $0$ & $0$ & $0$ \\ 
 ~~ & $0$ & $1$ & $1$  \\
 ~~ & $0$ & $1$ & $-1$
\end{tabular}\notag\\
&&\begin{tabular}{ c|c c c } 
$2$ & ~~ & ~~ & ~~ \\
 \hline 
 ~~ & $0$ & $0$ & $0$ \\ 
 ~~ & $-1$ & $1$ & $0$  \\
 ~~ & $1$ & $1$ & $0$
\end{tabular}~~~%
\begin{tabular}{ c|c c c } 
$2$ & ~~ & ~~ & ~~ \\
 \hline 
 ~~ & $-1$ & $1$ & $0$ \\ 
 ~~ & $0$ & $0$ & $0$  \\
 ~~ & $1$ & $1$ & $0$
\end{tabular}\notag\\
&&\begin{tabular}{ c|c c c } 
$2$ & ~~ & ~~ & ~~ \\
 \hline 
 ~~ & $-1$ & $0$ & $-1$ \\ 
 ~~ & $0$ & $0$ & $0$  \\
 ~~ & $1$ & $0$ & $-1$
\end{tabular}~~~
\begin{tabular}{ c|c c c } 
$2$ & ~~ & ~~ & ~~ \\
 \hline 
 ~~ & $-1$ & $0$ & $-1$ \\ 
 ~~ & $-1$ & $0$ & $1$  \\
 ~~ & $0$ & $0$ & $0$
\end{tabular}\notag
\end{eqnarray}
  }%close font
  \caption{}\label{tab:tight-chsh-u0}
\end{subfigure}%
\begin{subfigure}{.2\textwidth}
  \centering
   {\footnotesize
   \begin{eqnarray}
&&\begin{tabular}{ c|c c c } 
$2$ & ~~ & ~~ & ~~ \\
 \hline 
 ~~ & $-1$ & $\pm 1$ & $0$ \\ 
 ~~ & $\pm1$ & $1$ & $0$  \\
 ~~ & $0$ & $0$ & $0$
\end{tabular}\notag\\
&&\begin{tabular}{ c|c c c } 
$2$ & ~~ & ~~ & ~~ \\
 \hline 
 ~~ & $0$ & $0$ & $0$ \\ 
 ~~ & $0$ & $1$ & $\pm 1$  \\
 ~~ & $0$ & $\pm 1$ & $-1$
\end{tabular}\notag\\
&&\begin{tabular}{ c|c c c } 
$2$ & ~~ & ~~ & ~~ \\
 \hline 
 ~~ & $-1$ & $0$ & $\pm 1$ \\ 
 ~~ & $0$ & $0$ & $0$  \\
 ~~ & $\mp 1 $ & $0$ & $-1$
\end{tabular}\notag
\end{eqnarray}
  }%close font
  \caption{}
\end{subfigure}%
\begin{subfigure}{.2\textwidth}
  \centering
   {\footnotesize
   \begin{eqnarray}
&&\begin{tabular}{ c|c c c } 
$2$ & ~~ & ~~ & ~~ \\
 \hline 
 ~~ & $-1$ & $-1$ & $0$ \\ 
 ~~ & $-1$ & $1$ & $0$  \\
 ~~ & $0$ & $0$ & $0$
\end{tabular}\notag\\
&&\begin{tabular}{ c|c c c } 
$2$ & ~~ & ~~ & ~~ \\
 \hline 
 ~~ & $0$ & $0$ & $0$ \\ 
 ~~ & $0$ & $1$ & $1$  \\
 ~~ & $0$ & $1$ & $-1$
\end{tabular}\notag\\
&&\begin{tabular}{ c|c c c } 
$2$ & ~~ & ~~ & ~~ \\
 \hline 
 ~~ & $0$ & $0$ & $0$ \\ 
 ~~ & $-1$ & $1$ & $0$  \\
 ~~ & $1$ & $1$ & $0$
\end{tabular}\notag
\end{eqnarray}
  }%close font
  \caption{}
\end{subfigure}%
\begin{subfigure}{.2\textwidth}
  \centering
   {\footnotesize
   \begin{eqnarray}
&&\begin{tabular}{ c|c c c } 
$2$ & ~~ & ~~ & ~~ \\
 \hline 
 ~~ & $-1$ & $-1$ & $0$ \\ 
 ~~ & $-1$ & $1$ & $0$  \\
 ~~ & $0$ & $0$ & $0$
\end{tabular}\notag\\
&&\begin{tabular}{ c|c c c } 
$2$ & ~~ & ~~ & ~~ \\
 \hline 
 ~~ & $0$ & $0$ & $0$ \\ 
 ~~ & $0$ & $1$ & $1$  \\
 ~~ & $0$ & $1$ & $-1$
\end{tabular}\notag\\
&&\begin{tabular}{ c|c c c } 
$2$ & ~~ & ~~ & ~~ \\
 \hline 
 ~~ & $-1$ & $0$ & $-1$ \\ 
 ~~ & $0$ & $0$ & $0$  \\
 ~~ & $1$ & $0$ & $-1$
\end{tabular}\notag
\end{eqnarray}
  }%close font
  \caption{}
\end{subfigure}
\caption{CHSH inequalities which are tight for (a) $u_{0}$, (b) $u_{1}$, (c) $u_{2}$, (d) $u_{3}$.
}
\label{tab:tight-chsh-ui}
\end{table}

\begin{lem}\label{lem:tight-ui}
For vertices $u_{i}$ ($i = 0,\cdots,3$) {introduced in Eq.~(\ref{eq:ui})} let $H^{(i)}_{2}\subset H_{2}$ be a subset of facets such that ${h}\in H_{2}^{(i)}$ satisfies $h\cdot u_{i} = 0$. The subsets $H_{2}^{(i)}$ consist of NN inequalities given in Table~\ref{tab:tight-cube-ui} and CHSH inequalities given in Table~\ref{tab:tight-chsh-ui}.
\end{lem}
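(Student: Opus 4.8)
The plan is to verify, for each of the four vertices $u_i$, exactly which facets of $\overline\CL$ are tight, and to organize the computation so that the tightness conditions are read off directly from the structure established in Theorem~\ref{thm:vertices of MP bar}. Since the $H$-description of $\overline\CL$ consists only of the $18$ NN inequalities $-1\leq \langle A\rangle_{BC}\leq 1$ and the $72$ CHSH inequalities, the task splits into two independent checks per vertex. For the NN inequalities I would simply inspect the tableau entries of each $u_i$ as given explicitly in Theorem~\ref{thm:vertices of MP bar}: an NN inequality is tight precisely when the corresponding Pauli coefficient equals $+1$ or $-1$. Reading off the tableaux, $u_0$ has its $\pm1$ entries at $XX$, $YX$, and $ZZ$; $u_2$ at $YY$; $u_3$ at $ZZ$; and $u_1$ at $XX$, $YY$, $ZZ$ — matching Table~\ref{tab:tight-cube-ui}. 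This is routine and purely a matter of extracting entries from the already-computed vertices.

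For the CHSH inequalities I would use Eq.~(\ref{eq:h dot v}) together with the convex decomposition underlying the construction of each $u_i$. Recall that each $u_i = pv + (1-p)w_i$ with $v$ the canonical $\bar T_2$ vertex and $w_i$ a non-neighbor, so that a CHSH inequality $h$ is tight at $u_i$ if and only if $h\cdot v = 0$ and $h\cdot w_i = 0$ simultaneously (since both terms are nonnegative by Condition (2) of Proposition~\ref{pro:DD for intersection of pair of polytopes} and the fact that all $w_i$ lie in $\overline\CL$, being vertices on the correct side). The CHSH inequalities tight at both $v$ and $w_i$ were already enumerated in Lemma~\ref{lem:tight CHSH}. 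Therefore the CHSH inequalities tight at $u_i$ are precisely those listed there, and I would transcribe them into the tableau form of Table~\ref{tab:tight-chsh-ui}, checking that the count matches. The $\pm$ signs appearing in Table~\ref{tab:tight-chsh-ui}(b) for $u_1$ reflect the higher symmetry/degeneracy of that vertex, which I would confirm by noting the additional tight inequalities in part (2) of Lemma~\ref{lem:tight CHSH}.

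The main subtlety, and the step I expect to require the most care, is confirming that the list from Lemma~\ref{lem:tight CHSH} is \emph{complete} for $u_i$ rather than merely a subset. Lemma~\ref{lem:tight CHSH} gives the CHSH inequalities tight at both $v$ and $w_i$; but $u_i$ is a genuinely new vertex, and in principle a CHSH inequality could be tight at the convex combination $u_i$ without being tight at both endpoints only if it were tight at neither while crossing zero — which cannot happen for nonnegative quantities. Since $h\cdot v\geq 0$ and $h\cdot w_i\geq 0$ for every CHSH inequality $h$, the relation $h\cdot u_i = p(h\cdot v)+(1-p)(h\cdot w_i)=0$ forces both $h\cdot v=0$ and $h\cdot w_i=0$ (as $p\in(0,1)$). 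This is exactly the characterization used in Lemma~\ref{lem:tight CHSH}, so completeness is automatic. I would therefore conclude the proof by invoking Eq.~(\ref{eq:h dot v}) and Lemma~\ref{lem:tight CHSH} for the CHSH part, direct inspection of the tableaux in Theorem~\ref{thm:vertices of MP bar} for the NN part, and noting that these exhaust $H_2$, giving the subsets $H_2^{(i)}$ displayed in the two tables.
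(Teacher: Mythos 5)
Your treatment of the NN inequalities matches the paper's (read off the coordinates equal to $\pm 1$), though you misread $u_0$: only its $XX$ entry equals $1$, while $YX$ and $ZZ$ equal $1/2$, consistent with the single NN inequality in Table~\ref{tab:tight-cube-ui}(a).

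The CHSH part contains a genuine gap. Your completeness argument rests on the claim that $h\cdot v\geq 0$ and $h\cdot w_i\geq 0$ for \emph{every} CHSH inequality $h$, so that $h\cdot u_i=0$ forces tightness at both endpoints. That premise is false: $v$ is a $\bar T_2$ vertex of $\MP$, and by part (1) of Lemma~\ref{lem:type2-violate} it violates exactly one CHSH inequality, its dual $h_v$, with $h_v\cdot v=-2$; likewise $w_0,w_1,w_3$ are $\bar T_2$ vertices violating their own duals (only $w_2$ is of type $\bar T_1$ and hence lies in $\overline\CL$ by Proposition~\ref{pro:type1-classical}). Condition (2) of Proposition~\ref{pro:DD for intersection of pair of polytopes}, which you invoke, only covers $V_1-V_1'$, i.e.\ the $\bar T_1$ vertices. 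This is not a corner case: the entire point of the DD step producing $u_i$ in Eq.~(\ref{eq:ui}) is that $p$ is chosen so that $h_v\cdot u_i=0$, so $h_v$ is tight at $u_i$ while being tight at \emph{neither} $v$ nor $w_i$. Your characterization therefore omits $h_v$ from every $H_2^{(i)}$, and also omits $h_{w_i}$ whenever $(h_v\cdot w_i)(h_{w_i}\cdot v)=(h_v\cdot v)(h_{w_i}\cdot w_i)=4$. One can see the discrepancy directly in the tables: Table~\ref{tab:tight-chsh-ui}(a) lists six CHSH inequalities while Lemma~\ref{lem:tight CHSH}(1) lists only four; the two extras are exactly $h_v$ and $h_{w_0}$. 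The paper's proof avoids the issue entirely by enumerating all pairs $(\partial\Omega,\gamma)$ against the explicit coordinates of $u_i$, splitting into the cases $a\in\partial\Omega$ and $a\notin\partial\Omega$ for a coordinate $a$ with $(u_i)_a=\pm 1$. Your route can be repaired: since $h_v$ and $h_{w_i}$ are the \emph{unique} CHSH inequalities violated by $v$ and $w_i$ respectively, your convexity argument is valid for all $h\notin\{h_v,h_{w_i}\}$ and recovers the list of Lemma~\ref{lem:tight CHSH}; you must then adjoin $h_v$ (tight by construction) and check $h_{w_i}$ by a separate direct computation of $h_{w_i}\cdot u_i$.
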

\begin{proof}See Appendix~\ref{sec:proof-Lem-tight}.
\end{proof}

Let us introduce a subset of vertices ${V}^{(i)}_{2} \subset V_{2}$ such that for all $\delta \in {V}^{(i)}_{2}$ and all $h\in H_{2}^{(i)}$ we have that $h\cdot \delta = 0$. Define now the matrix $R_{2}^{(i)}\in \RR^{10\times \kappa_{i}}$ whose columns are the elements of $V_{2}^{(i)}$ and where $\kappa_{i} := |V_{2}^{(i)}|$.

\begin{table}[h!]
\centering
\begin{subfigure}{.49\textwidth}
  \centering
   {\footnotesize
\begin{eqnarray}
% \begin{matrix} II \\ XX \\ XY \\ XZ \\ YX \\ YY \\ YZ \\ ZX \\ ZY \\ ZZ 
%  \end{matrix}\hspace{1 em}%
&&\begin{bmatrix}
 ~1 & ~1 & ~1 & ~1 \\
 ~1 & ~1 & ~1 & ~1 \\
 ~1 & ~1 & -1 & -1 \\
 ~1 & -1 & ~1 & -1 \\
 -1 & ~1 & ~1 & ~1 \\
 -1 & ~1 & -1 & -1 \\
 -1 & -1 & ~1 & -1 \\
 -1 & -1 & ~1 & -1 \\
 -1 & -1 & -1 & ~1 \\
 -1 & ~1 & ~1 & ~1 
  \end{bmatrix}\notag\\%
\text{parity}&&~~~\begin{matrix} ~0 & ~~0 & ~~1 & ~~1 \end{matrix}\notag%
\notag
\end{eqnarray}
  }%close font
  \caption{}
  \label{tab:R-u0}
\end{subfigure}%
\begin{subfigure}{.49\textwidth}
  \centering
   {\footnotesize
\begin{eqnarray}
% \begin{matrix} II \\ XX \\ XY \\ XZ \\ YX \\ YY \\ YZ \\ ZX \\ ZY \\ ZZ 
%  \end{matrix}\hspace{1 em}%
&&\begin{bmatrix}
1 & 1 & 1 & 1 \\ 
 1 & 1 & 1 & 1 \\ 
 1 & -1 & -1 & 1 \\ 
 1 & 1 & -1 & -1 \\ 
 -1 & 1 & 1 & -1 \\ 
 -1 & -1 & -1 & -1 \\ 
 -1 & 1 & -1 & 1 \\ 
 1 & 1 & -1 & -1 \\ 
 1 & -1 & 1 & -1 \\ 
 1 & 1 & 1 & 1 
  \end{bmatrix}\notag\\%
  \text{parity}&&~~~\begin{matrix} ~1 & ~~1 & ~~1 & ~~1 \end{matrix}%
  \notag
\end{eqnarray}
  }%close font
  \caption{}
\end{subfigure}%

\begin{subfigure}{.49\textwidth}
  \centering
   {\footnotesize
\begin{eqnarray}
% \begin{matrix} II \\ XX \\ XY \\ XZ \\ YX \\ YY \\ YZ \\ ZX \\ ZY \\ ZZ 
%  \end{matrix}\hspace{1 em}%
&&\begin{bmatrix}
1 & 1 & 1 & 1 & 1 & 1 & 1 & 1 \\ 
 1 & -1 & -1 & 1 & -1 & 1 & 1 & 1 \\ 
 1 & 1 & 1 & -1 & 1 & -1 & -1 & 1 \\ 
 1 & -1 & 1 & -1 & 1 & 1 & -1 & -1 \\ 
 -1 & 1 & 1 & 1 & 1 & 1 & 1 & -1 \\ 
 -1 & -1 & -1 & -1 & -1 & -1 & -1 & -1 \\ 
 -1 & 1 & -1 & -1 & -1 & 1 & -1 & 1 \\ 
 -1 & 1 & -1 & 1 & 1 & 1 & -1 & -1 \\ 
 -1 & -1 & 1 & -1 & -1 & -1 & 1 & -1 \\ 
 -1 & 1 & 1 & -1 & -1 & 1 & 1 & 1  \notag
  \end{bmatrix}\notag\\%
\text{parity}&&~~~\begin{matrix} ~0 & ~~0 & ~~0 & ~~~0 & ~~1 & ~~1 & ~~1 & ~~1 \end{matrix}%
\notag
\end{eqnarray}
  }%close font
  \caption{}
\end{subfigure}%
\begin{subfigure}{.49\textwidth}
  \centering
   {\footnotesize
\begin{eqnarray}
% \begin{matrix} II \\ XX \\ XY \\ XZ \\ YX \\ YY \\ YZ \\ ZX \\ ZY \\ ZZ 
%  \end{matrix}\hspace{1 em}%
&&\begin{bmatrix}
1 & 1 & 1 & 1 & 1 & 1 \\ 
 -1 & 1 & 1 & 1 & 1 & 1 \\ 
 1 & 1 & 1 & -1 & -1 & 1 \\ 
 1 & -1 & 1 & 1 & -1 & -1 \\ 
 1 & 1 & -1 & 1 & 1 & -1 \\ 
 -1 & 1 & -1 & -1 & -1 & -1 \\ 
 -1 & -1 & -1 & 1 & -1 & 1 \\ 
 -1 & -1 & 1 & 1 & -1 & -1 \\ 
 1 & -1 & 1 & -1 & 1 & -1 \\ 
 1 & 1 & 1 & 1 & 1 & 1 \notag
  \end{bmatrix}\notag\\%
\text{parity}&&~~~\begin{matrix} ~0 & ~~0 & ~~1 & ~~~1 & ~~1 & ~~1 \end{matrix}%
\notag
\end{eqnarray}
  }%close font
  \caption{}
\end{subfigure}%
\caption{The matrices $R_{2}^{(i)}$ for the vertices (a) $u_{0}$,  (b) $u_{1}$,  (c) $u_{2}$,  (d) $u_{3}$. The rows are labeled by Pauli operators in lexicographic order, i.e. 
%$II$, $XX$, $XY$, $\cdots$, $ZZ$. 
{$II,XX, XY, \cdots, ZZ$.}
At the bottom of each column corresponding to $\delta^{\xi}$ we give the parity of  $\xi$.
}
\label{tab:tight-vertices}
\end{table}%

\begin{pro}\label{pro:decomposition}
The following hold:
\begin{enumerate}
\item The matrices $R_{2}^{(i)}$ are given in Table~\ref{tab:tight-vertices}.
\item A subset of solutions to the system of equations
\begin{eqnarray}
u_{i} = R_{2}^{(i)}q_{i} \quad \text{subject to} \quad q_{i}\geq 0,%
\label{eq:convex-decomposition-b}
\end{eqnarray}
where $q_{i}\in \RR^{\kappa_{i}}$, is given by
\begin{subequations}
\begin{eqnarray}
q_{0}^{T} &=& \begin{bmatrix} 1/4 & 1/4 & 1/4 & 1/4\end{bmatrix}\label{eq:decomp-u0}\\%
q_{1}^{T} &=& \begin{bmatrix} 1/4 & 1/4 & 1/4 & 1/4\end{bmatrix}\label{eq:decomp-u1}\\%
q_{2}^{T} &=& \begin{bmatrix} 0 & 0 & 0 & 0 & 1/4 & 1/4 & 1/4 & 1/4 \end{bmatrix}\label{eq:decomp-u2}\\%
q_{3}^{T} &=& \begin{bmatrix} 1/6 & 1/6 & 1/6 & 1/6 & 1/6 & 1/6\end{bmatrix}\label{eq:decomp-u3}.
\end{eqnarray}
\end{subequations}
\end{enumerate}
\end{pro}

\begin{proof}
The proof can be found in Appendix~\ref{sec:proof-Pro-decomp}.
\end{proof}

%\si{[[Sketch of proof to be added in Appendix.]]}

\begin{rem}
{\rm%
Equations (\ref{eq:decomp-u0}),  (\ref{eq:decomp-u1}), and (\ref{eq:decomp-u3}) are the 
%\textit{unique}
unique
 solution to the system of equations in (\ref{eq:convex-decomposition-b}), while Eq.~(\ref{eq:decomp-u2}) is not; although all three solutions are uniform mixtures of four vertices.
}
\end{rem}

\begin{table}[h!]
\centering
\begin{subfigure}{.49\textwidth}
  \centering
   {\footnotesize
\begin{eqnarray}
% \begin{matrix} II \\ XX \\ XY \\ XZ \\ YX \\ YY \\ YZ \\ ZX \\ ZY \\ ZZ 
%  \end{matrix}\hspace{1 em}%
&&\begin{bmatrix}
 ~1 & ~1 & ~1 & ~1 \\
 -1 & -1 & ~1 & -1 \\
 ~1 & -1 & ~1 & -1 \\
 ~1 & ~1 & ~1 & ~1 \\
 -1 & -1 & ~1 & -1 \\
 -1 & -1 & -1 & ~1 \\
 -1 & ~1 & ~1 & ~1  
  \end{bmatrix}\notag\\%
\text{parity}&&~~~\begin{matrix} ~0 & ~~0 & ~~1 & ~~1 \end{matrix}\notag%
\notag
\end{eqnarray}
  }%close font
  \caption{}
  \label{tab:R-u0}
\end{subfigure}%
\begin{subfigure}{.49\textwidth}
  \centering
   {\footnotesize
\begin{eqnarray}
% \begin{matrix} II \\ XX \\ XY \\ XZ \\ YX \\ YY \\ YZ \\ ZX \\ ZY \\ ZZ 
%  \end{matrix}\hspace{1 em}%
&&\begin{bmatrix}
 ~1 & ~1 & ~1 & ~1 \\ 
 -1 & -1 & ~1 & -1 \\ 
 ~1 & -1 & ~1 & ~1 \\ 
 -1 & -1 & -1 & ~1 \\ 
 -1 & -1 & ~1 & -1 \\ 
 -1 & ~1 & -1 & -1 \\ 
 -1 & -1 & -1 & ~1
  \end{bmatrix}\notag\\%
  \text{parity}&&~~~\begin{matrix} ~1 & ~~1 & ~~1 & ~~1 \end{matrix}%
  \notag
\end{eqnarray}
  }%close font
  \caption{}
\end{subfigure}%

\begin{subfigure}{.49\textwidth}
  \centering
   {\footnotesize
\begin{eqnarray}
% \begin{matrix} II \\ XX \\ XY \\ XZ \\ YX \\ YY \\ YZ \\ ZX \\ ZY \\ ZZ 
%  \end{matrix}\hspace{1 em}%
&&\begin{bmatrix}
 ~1 & ~1 & ~1 & ~1 \\ 
 ~1 & -1 & -1 & -1 \\ 
 -1 & -1 & -1 & ~1 \\ 
 -1 & -1 & ~1 & ~1 \\ 
 -1 & -1 & -1 & -1 \\ 
 ~1 & ~1 & ~1 & -1 \\ 
 ~1 & -1 & ~1 & ~1
 \end{bmatrix}\notag\\%
\text{parity}&&~~~\begin{matrix} ~~1 & ~~1 & ~~1 & ~~1 \end{matrix}%
\notag
\end{eqnarray}
  }%close font
  \caption{}
\end{subfigure}%
\begin{subfigure}{.49\textwidth}
  \centering
   {\footnotesize
\begin{eqnarray}
% \begin{matrix} II \\ XX \\ XY \\ XZ \\ YX \\ YY \\ YZ \\ ZX \\ ZY \\ ZZ 
%  \end{matrix}\hspace{1 em}%
&&\begin{bmatrix}
 ~1 & ~1 & ~1 & ~1 & ~1 & ~1 \\ 
 ~1 & -1 & ~1 & -1 & -1 & -1 \\ 
 -1 & -1 & -1 & -1 & -1 & ~1 \\ 
 ~1 & ~1 & ~1 & -1 & ~1 & ~1 \\ 
 -1 & -1 & ~1 & -1 & -1 & -1 \\ 
 ~1 & -1 & ~1 & ~1 & ~1 & -1 \\ 
 ~1 & ~1 & ~1 & -1 & ~1 & ~1
  \end{bmatrix}\notag\\%
\text{parity}&&~~~\begin{matrix} ~0 & ~~0 & ~~1 & ~~~1 & ~~1 & ~~1 \end{matrix}%
\notag
\end{eqnarray}
  }%close font
  \caption{}
\end{subfigure}%
\caption{
%\si{The matrices $R_{2}^{(i)}$ for the vertices (a) $u_{0}$,  (b) $u_{1}$,  (c) $u_{2}$,  (d) $u_{3}$. The rows are labeled by Pauli operators in lexicographic order, i.e. $II$, $XX$, $XY$, $\cdots$, $ZZ$. At the bottom of each column corresponding to $\delta_{z}$ we give the parity of the bit-string $z$.} 
{The local part of the lifts of the functions $\xi$ given by each column  in Table \ref{tab:tight-vertices}. Rows are labeled by 
$II, XI, YI, ZI, IX, IY, IZ$.}
}
\label{tab:lifts}
\end{table}%

Using the decompositions in Eqns.~(\ref{eq:decomp-u0})-(\ref{eq:decomp-u3}) and Eq.~(\ref{eq:convex-decomposition-b}) we can obtain vertices of $\Lambda_{2}$. To do so, it suffices to consider the vertices of the hypercube of classical lifts so that $\alpha_{\xi}\in \{0,1\}$. Compared to a general lifting procedure this is a greatly simplified task as there are only $2^{\kappa_{i}}$ options to check for each $u_{i}$. Representatives of the vertices obtained in this fashion are precisely those given in Eq.~(\ref{eq:vertices-lambda2}) together with the choice of $\alpha = (\alpha_{z_{1}},\cdots,\alpha_{z_{\kappa_{i}}})\in \{0,1\}^{\kappa_{i}}$ for all $\tilde \delta^{\xi}$ appearing in Eq.~(\ref{eq:convex-decomposition-b}) where $q_{\xi}>0$. {We obtain $\tilde u_i$ by lifting
\begin{enumerate}[(a)]
\item $u_{0}$ with $\alpha = \left (0,0,1,1 \right )$,
\item  $u_{1}$ with $\alpha = \left (0,0,1,0 \right )$,
\item $u_{2}$ with $\alpha = \left (1,0,0,0 \right )$,
\item $u_{3}$ with $\alpha = \left (1,0,1,0,0,0 \right )$.
\end{enumerate}}
%Vertices of $\Lambda_{2}$ obtained by a classical lift of (a) $u_{0}$ with $\alpha = \left (0,0,1,1 \right )$ (b) $u_{1}$ with $\alpha = \left (0,0,1,0 \right )$ (c) $u_{2}$ with $\alpha = \left (1,0,0,0 \right )$ (d) $u_{3}$ with $\alpha = \left (1,0,1,0,0,0 \right )$.
{Using the action of the Clifford group $\Cl_2$ we can generate other vertices in the orbit of these four.}

\section{{Update rules}}

In this section we will describe the update rules of the $\Lambda$ vertices given in Eq.~(\ref{eq:vertices-lambda2}) using the convex decompositions {(Proposition \ref{pro:decomposition})} provided in terms of the vertices of $\CL$.  

{More explicitly, let $A_i$ denote the Hermitian operators corresponding to $\tilde u_i$ given in Eq.~(\ref{eq:vertices-lambda2}). Our goal is to compute $\Pi_a^r A_i \Pi_a^r$ where $a\in E-\set{0}$ and $r\in \ZZ_2$.
When $p(a)=\Tr(\Pi_a^r A_i)>0$ we know that the updated vertex $\Pi_a^r A_i \Pi_a^r/p(a)$ falls inside the convex hull of $T_2$ $\Lambda$-vertices. This follows from the $\Phi$-map described in \cite{okay2021extremal}. Furthermore, we know that these $T_2$ cnc vertices are of the form $A_{\Span{a}^\perp}^\gamma$ where  $\Span{a}^\perp=\set{b\in E:\,\omega(b,a)=0}$.
}

For a subset $\Gamma\subset E$ and a function $\chi:\Gamma \to \ZZ_2$
%. If $0\in \Gamma$ 
%we will assume $\chi(0)=0$. 
we will write 
$$
A^\chi_\Gamma = \frac{1}{4}\sum_{a\in \Gamma} (-1)^{\chi(a)} T_a 
$$
for the corresponding Hermitian operator. 
For simplicity of notation we will sometimes write $A^\chi$ omitting   $\Gamma$.
%We say $\chi$ is even (or odd) if for $a,b\in \Gamma$ with $\omega(a,b)=0$ we have $\chi(a+b)=\chi(a)+\chi(b)$ (or $\chi(a+b)=\chi(a)+\chi(b)+1$). A function $\xi:E\to \ZZ_2$ is said to have even (odd) parity if the restriction $\xi|_{E^\nloc}$ is even (odd). 
Recall that a vertex of $\CL$ is given by $d^\xi$ where $\xi:E\to \ZZ_2$ is a function such that $\xi|_{E^\loc}$ is even.
To distinguish these deterministic vertices easily we will write $D^\xi$ for the corresponding operator $A^\xi_E$.
%by the operator $A_E^\xi$ where $\xi$ satisfies the properties that (1) $\xi|_{E^\loc}$ is even and (2) $\xi|_{E^\nloc}$ is either even or odd. The latter property specifies the parity of $\xi$. 
For computing the update rules  we will rely on the following formula:
\begin{equation}\label{eq:update simplified}
\Pi_a^r D^\xi \Pi_a^r = \Pi_a^r A^{\chi} \Pi_a^r
\end{equation}
where $\chi$ is the restriction $\xi|_{\Span{a}^\perp}$.
%That is, given $\xi$ we need to analyze the operator corresponding to the restriction of this function on the subspace $\Span{a}^\perp$.  
%\comm{introduce $\Span{a}^\perp$} 

For $a\in E^\loc$, we have 
\begin{equation}\label{eq:a perp}
\Span{a}^\perp = \Span{a,c} \cup \Span{a,c_1}\cup \Span{a,c_2}
\end{equation}
where $c,c_2,c_2\in E^\loc$. Given a deterministic vertex {$D^\xi$} the restriction $\xi|_{\Span{a}^\perp}$ is always even. Then using Eq.~(\ref{eq:update simplified}) we obtain the following update rule.

\Lem{\label{lem:local update}
Let $a\in E^\loc$ and $D^\xi$ be a deterministic vertex. Then
$$
\Pi_a^r {D^\xi} \Pi_a^r = \left\lbrace
\begin{array}{ll}
A^{\chi} & \xi(a)=r \\
\zero & \xi(a)=r+1
\end{array}
\right.
$$
{where $\chi$ is the restriction $\xi|_{\Span{a}^\perp}$.}
}

The update of $A_i$ under a local Pauli measurement can be computed from the uniform decomposition into the deterministic vertices. For example, for $T_{a}=X\otimes I$ we have
$$
\Pi_a^0 A_i \Pi_a^0 = \left\lbrace
\begin{array}{ll}
\frac{1}{4} A^{\chi_0^{(3)}} & i=0\\
\frac{1}{4} A^{\chi_1^{(3)}} & i=1\\
\frac{1}{4} A^{\chi_2^{(1)}} & i=2\\
\frac{1}{4} (A^{\chi_3^{(1)}}+ A^{\chi_3^{(3)}}) & i=3
\end{array}
\right. 
$$
where $\chi_i^{(j)}:\Span{a}^\perp\to \ZZ_2$ is the even function obtained by restricting the function $\xi_i^{(j)}$ with $\xi_i^{(j)}(a)=0$. Here the index $j$ corresponds to the column number in Table \ref{tab:lifts}.

For $a\in E^\nloc$, the subspaces in Eq.~(\ref{eq:a perp}) satisfy $c\in E^\loc$ and $c_1,c_2\in E^\nloc$. Given a deterministic vertex $D^\xi$ the restriction $\xi|_{\Span{a,c_i}}$ can be either even or odd depending on the parity of $\xi$. The update rules under a non-local Pauli measurement can be described using the following result.

\Lem{\label{lem:local update}
Let $a\in E^\nloc$ and $D^\xi$ be a deterministic vertex. Assume that $\xi$ has parity $t$ and $\beta(a,c_1)=t$. Then
$$
\Pi_a^r D^\xi \Pi_a^r = \left\lbrace
\begin{array}{ll}
\frac{1}{2}(A^{\gamma_0}+A^{\gamma_1} ) & \xi(a)=r \\
\frac{1}{4}((-1)^{\xi(c_2)} T_{c_2} + (-1)^{\xi(a+c_2)}T_{a+c_2}  )  & \xi(a)=r+1
\end{array}
\right.
$$
where $\gamma_i$ is the outcome assignment obtained by extending $\xi|_{\Span{a,c}\cup \Span{a,c_1}}$ to $\Span{a}^\perp$ by setting $\gamma_i(c_2)=i$.
}

For example, for $T_a=X\otimes X$ and $i=0,1$ we have
$$
\begin{aligned}
\Pi_a^0 A_i \Pi_a^0 = \frac{1}{8}\sum_{j=1}^4 (A^{\gamma_0^{(j)}} +  A^{\gamma_1^{(j)}} )
\end{aligned}
$$
since $\xi_i^{(j)}(a)=0$ for all column indices $j$. However, this does not hold for $i=2,3$ and the computation is a bit more subtle in this case.
We have
$$
\begin{aligned}
\Pi_a^0 A_2 \Pi_a^0 &= \frac{1}{4}\sum_{j=1}^4  \Pi_a^0 A^{\xi_2^{(j)}} \Pi_a^0 \\
&=  \frac{1}{4}A^\gamma + \frac{1}{8}(A^{\gamma_0^{(3)}} + A^{\gamma_0^{(3)}})+\frac{1}{8}(A^{\gamma_0^{(4)}} + A^{\gamma_0^{(4)}}).  
\end{aligned}
$$
Note that $A^\gamma$ is obtained by the sum of $\Pi_a^0 A^{\xi_2^{(1)}} \Pi_a^0 +\Pi_a^0 A^{\xi_2^{(2)}} \Pi_a^0$ where the first term corresponds to the $\xi_2(a)=1$ case hence does not split as a uniform mixture of two cnc vertices. The computation is similar for $A_3$. Again the first two updated deterministic vertices combine to give a cnc vertex:
 $$ 
\Pi_a^0 A_3 \Pi_a^0  = \frac{1}{4}A^\gamma + \frac{1}{8}\sum_{j=3}^6( A^{\gamma_0^{(j)}} + A^{\gamma_0^{(j)}} ). 
$$

\section{Conclusion}\label{sec:conclusion}

Currently it is not known whether there exists a vertex enumeration algorithm whose time complexity is polynomial in both the input ($H$-description) and output ($V$-description). While it may be that the structure of $\Lambda$-polytopes lends itself eventually to an efficient characterization there are a few factors which cast some doubt on this possibility. One reason for this is that the dimension of $\Lambda_{n}$, as well as the number of stabilizer states \cite{aaronson2004improved}, scale exponentially in $n$. For example, $\Lambda_{3}$ is a polytope full-dimensional in $\RR^{63}$ and requires $1080$ input stabilizer states to furnish its $H$-representation. The number of input facets together with the high dimensionality render incremental algorithms, such as the DD method, less effective as one encounters an explosion of intermediate vertices to be computed. Indeed, performing a variant of the method explored in this paper we find that within a few iterations of the DD algorithm quickly approaches several tens of thousands of intermediate vertices. At the same time, however, $\Lambda$-polytopes are also highly degenerate, which causes complications for pivot-based enumeration algorithms.

Among the simplifying factors for $\Lambda$ polytopes, and one that we took advantage of here, is that they possess a high degree of symmetry. Future algorithmic explorations of $\Lambda$-polytopes should more systematically exploit this feature; see e.g., \cite{bremner2009polyhedral}. Another piece of structural information that may prove useful is that for all $n$ we have that $\Lambda_{n}$ is a subpolytope of $ \text{NS}_{n32}$, the nonsignaling polytope for the $(n,3,2)$ Bell scenario. For instance, the vertices of $\text{NS}_{322}$ have previously been characterized (see e.g., \cite{pironio2011extremal}) and its vertices can be lifted to those of $\text{NS}_{332}$. This would aid in the construction of an initial DD pair, likely leading to better performance.

   Among the challenging aspects of mathematically characterizing vertices of $\Lambda$ is the eventual need to compute the rank of matrices whose size grows exponentially in $n$. Graph theory provides one avenue for systematically studying this problem. Such methods were, in fact, used in \cite{zurel2023simulation} to capture $T_{4}$ in $\Lambda_{2}$, but are valid for all $n$. However, while their approach proves the existence of such vertices, their proof is not constructive and no prescription is given for computing the rank. {Graph-theoretic} techniques also underlie the computation of joint rank used here (see Appendix~\ref{sec:comb-mp1}), {which fit in the general theory of simplicial distributions \cite{okay2022simplicial},} and one direction of future research is to refine these methods.

\bibliography{bib.bib}

\begin{thebibliography}{10}

\bibitem{magic}
S.~Bravyi and A.~Kitaev, ``Universal quantum computation with ideal clifford
  gates and noisy ancillas,'' {\em Physical Review A}, vol.~71, no.~2,
  p.~022316, 2005.
\newblock doi:
  \href{https://doi.org/10.1103/PhysRevA.71.022316}{10.1103/PhysRevA.71.022316}.
  arXiv: \href{https://arxiv.org/abs/quant-ph/0403025}{quant-ph/0403025}.

\bibitem{zurel2020hidden}
M.~Zurel, C.~Okay, and R.~Raussendorf, ``Hidden variable model for universal
  quantum computation with magic states on qubits,'' {\em Physical Review
  Letters}, vol.~125, no.~26, p.~260404, 2020.
\newblock doi:
  \href{https://doi.org/10.1103/PhysRevLett.125.260404}{10.1103/PhysRevLett.125.260404}.
  arXiv: \href{https://arxiv.org/abs/2004.01992}{2004.01992}.

\bibitem{zurel2023simulating}
M.~Zurel, C.~Okay, and R.~Raussendorf, ``Simulating quantum computation with
  magic states: how many ``bits" for ``it"?,'' {\em arXiv preprint}, 2023.
\newblock arXiv: \href{https://arxiv.org/abs/2305.17287}{2305.17287}.

\bibitem{kaibel2003some}
V.~Kaibel and M.~E. Pfetsch, ``Some algorithmic problems in polytope theory,''
  in {\em Algebra, geometry and software systems}, pp.~23--47, Springer, 2003.
\newblock doi:
  \href{https://doi.org/10.1007/3-540-61576-8\_77}{10.1007/3-540-61576-8\_77}.

\bibitem{khachiyan2009generating}
L.~Khachiyan, E.~Boros, K.~Borys, V.~Gurvich, and K.~Elbassioni, ``Generating
  all vertices of a polyhedron is hard,'' {\em Twentieth Anniversary Volume:
  Discrete \& Computational Geometry}, pp.~1--17, 2009.
\newblock doi:
  \href{https://doi.org/10.1007/978-0-387-87363-3\_17}{10.1007/978-0-387-87363-3\_17}.

\bibitem{okay2022mermin}
C.~Okay, H.~Y. Chung, and S.~Ipek, ``Mermin polytopes in quantum computation
  and foundations,'' {\em Quantum Information \& Computation}, vol.~23,
  no.~9/10, pp.~733--782, 2023.
\newblock doi:
  \href{https://doi.org/10.26421/QIC23.9-10-2}{10.26421/QIC23.9-10-2}. arXiv:
  \href{https://arxiv.org/abs/2210.10186}{2210.10186}.

\bibitem{jones2005interconversion}
N.~Jones and L.~Masanes, ``Interconversion of nonlocal correlations,'' {\em
  Physical Review A}, p.~43–73, 2005.
\newblock doi:
  \href{https://doi.org/10.1103/PhysRevA.72.052312}{10.1103/PhysRevA.72.052312}.
  arXiv: \href{https://arxiv.org/abs/quant-ph/0506182}{quant-ph/0506182}.

\bibitem{raussendorf2020phase}
R.~Raussendorf, J.~Bermejo-Vega, E.~Tyhurst, C.~Okay, and M.~Zurel,
  ``Phase-space-simulation method for quantum computation with magic states on
  qubits,'' {\em Physical Review A}, vol.~101, no.~1, p.~012350, 2020.
\newblock doi:
  \href{https://doi.org/10.1103/PhysRevA.101.012350}{10.1103/PhysRevA.101.012350}.
  arXiv: \href{https://arxiv.org/abs/1905.05374}{1905.05374}.

\bibitem{zurel2023simulation}
M.~Zurel, L.~Z. Cohen, and R.~Raussendorf, ``Simulation of quantum computation
  with magic states via {J}ordan-{W}igner transformations,'' {\em arXiv
  preprint}, 2023.
\newblock arXiv: \href{https://arxiv.org/abs/2307.16034}{2307.16034}.

\bibitem{luenberger1984linear}
D.~G. Luenberger, Y.~Ye, {\em et~al.}, {\em Linear and nonlinear programming},
  vol.~2.
\newblock Springer, 1984.
\newblock doi:
  \href{https://doi.org/10.1007/978-0-387-74503-9}{10.1007/978-0-387-74503-9}.

\bibitem{fukuda2005double}
K.~Fukuda and A.~Prodon, ``Double description method revisited,'' in {\em
  Combinatorics and Computer Science: 8th Franco-Japanese and 4th
  Franco-Chinese Conference Brest, France, July 3--5, 1995 Selected Papers},
  pp.~91--111, Springer, 2005.
\newblock doi:
  \href{https://doi.org/10.1007/978-3-662-05148-1\_2}{10.1007/978-3-662-05148-1\_2}.
  arXiv: \href{https://arxiv.org/abs/math/0202204}{math/0202204}.

\bibitem{howard2012nonlocality}
M.~Howard and J.~Vala, ``Nonlocality as a benchmark for universal quantum
  computation in ising anyon topological quantum computers,'' {\em Physical
  Review A}, vol.~85, no.~2, p.~022304, 2012.
\newblock doi:
  \href{https://doi.org/10.1103/PhysRevA.85.022304}{10.1103/PhysRevA.85.022304}.
  arXiv: \href{https://arxiv.org/abs/1112.1516}{1112.1516}.

\bibitem{froissart1981constructive}
M.~Froissart, ``Constructive generalization of {Bell's} inequalities,'' {\em
  Nuovo Cimento B;(Italy)}, vol.~64, no.~2, 1981.
\newblock doi: \href{https://doi.org/10.1007/BF02903286}{10.1007/BF02903286}.

\bibitem{collins2004relevant}
D.~Collins and N.~Gisin, ``A relevant two qubit {Bell} inequality inequivalent
  to the {CHSH} inequality,'' {\em Journal of Physics A: Mathematical and
  General}, vol.~37, no.~5, p.~1775, 2004.
\newblock doi:
  \href{https://doi.org/10.1088/0305-4470/37/5/021}{10.1088/0305-4470/37/5/021}.
  arXiv: \href{https://arxiv.org/abs/quant-ph/0306129}{quant-ph/0306129}.

\bibitem{Coho}
C.~Okay, S.~Roberts, S.~D. Bartlett, and R.~Raussendorf, ``Topological proofs
  of contextuality in quantum mechanics,'' {\em Quantum Information \&
  Computation}, vol.~17, no.~13-14, pp.~1135--1166, 2017.
\newblock doi:
  \href{https://doi.org/10.26421/QIC17.13-14-5}{10.26421/QIC17.13-14-5}. arXiv:
  \href{https://arxiv.org/abs/1701.01888}{1701.01888}.

\bibitem{ziegler2012lectures}
G.~M. Ziegler, {\em Lectures on polytopes}, vol.~152.
\newblock Springer Science \& Business Media, 2012.
\newblock doi:
  \href{https://doi.org/10.1007/978-1-4613-8431-1}{10.1007/978-1-4613-8431-1}.

\bibitem{buhrman2006new}
H.~Buhrman, R.~Cleve, M.~Laurent, N.~Linden, A.~Schrijver, and F.~Unger, ``New
  limits on fault-tolerant quantum computation,'' in {\em 2006 47th Annual IEEE
  Symposium on Foundations of Computer Science (FOCS'06)}, pp.~411--419, IEEE,
  2006.
\newblock doi:
  \href{https://doi.org/10.1109/FOCS.2006.50}{10.1109/FOCS.2006.50}. arXiv:
  \href{https://arxiv.org/abs/quant-ph/0604141}{quant-ph/0604141}.

\bibitem{okay2021extremal}
C.~Okay, M.~Zurel, and R.~Raussendorf, ``On the extremal points of the
  {$Lambda$}-polytopes and classical simulation of quantum computation with
  magic states,'' {\em arXiv preprint}, 2021.
\newblock arXiv: \href{https://arxiv.org/abs/2104.05822}{2104.05822}.

\bibitem{aaronson2004improved}
S.~Aaronson and D.~Gottesman, ``Improved simulation of stabilizer circuits,''
  {\em Physical Review A}, vol.~70, no.~5, p.~052328, 2004.
\newblock doi:
  \href{https://doi.org/10.1103/PhysRevA.70.052328}{10.1103/PhysRevA.70.052328}.
  arXiv: \href{https://arxiv.org/abs/quant-ph/0406196}{quant-ph/0406196}.

\bibitem{bremner2009polyhedral}
D.~Bremner, M.~D. Sikiric, and A.~Sch{\"u}rmann, ``Polyhedral representation
  conversion up to symmetries,'' in {\em CRM proceedings}, vol.~48, pp.~45--72,
  2009.
\newblock arXiv: \href{https://arxiv.org/abs/math/0702239}{math/0702239}.

\bibitem{pironio2011extremal}
S.~Pironio, J.-D. Bancal, and V.~Scarani, ``Extremal correlations of the
  tripartite no-signaling polytope,'' {\em Journal of Physics A: Mathematical
  and Theoretical}, vol.~44, no.~6, p.~065303, 2011.
\newblock doi:
  \href{https://doi.org/10.1088/1751-8113/44/6/065303}{10.1088/1751-8113/44/6/065303}.
  arXiv: \href{https://arxiv.org/abs/1101.2477}{1101.2477}.

\bibitem{okay2022simplicial}
C.~Okay, A.~Kharoof, and S.~Ipek, ``Simplicial quantum contextuality,'' {\em
  {Quantum}}, vol.~7, p.~1009, May 2023.
\newblock doi:
  \href{https://doi.org/10.22331/q-2023-05-22-1009}{10.22331/q-2023-05-22-1009}.
  arXiv: \href{https://arxiv.org/abs/2204.06648v4}{2204.06648v4}.

\bibitem{zaslavsky2013matrices}
T.~Zaslavsky, ``Matrices in the theory of signed simple graphs,'' {\em arXiv
  preprint}, 2013.
\newblock arXiv: \href{https://arxiv.org/abs/1303.3083}{1303.3083}.

\end{thebibliography}
\bibliographystyle{ieeetr}

\appendix

\section{Double description method}
%\section{Double description method for solving the VEP}
\label{sec:DDM-VEP}

\subsection{Polytope theory}\label{sec:polytope-theory}
First let us recall some basic facts from polytope theory. Let $P\subset \mathbb{R}^{d}$ be a polytope defined in its $H$-representation as the intersection of $m$ half-space inequalities, i.e., $P{(M,b)} = \{{x}\in \mathbb{R}^{d}:\,{M}{x}\geq b\}$. By the Minkowski-Weyl theorem \cite{ziegler2012lectures}, this is equivalent to the so-called $V$-representation of the polytope, given by $P = {\conv(V)}$, where $V$ is {a} matrix whose columns are given by a finite {set of points} $v_{i}\in \RR^{d}$.

In the $H$-representation we index the $m$ inequalities by ${[m]}= \{1,\cdots,m\}$. We adopt a notation where if $\zZ\subset {[m]}$ then ${M}[\zZ]$ is the matrix obtained by keeping only those rows indexed by $\zZ$ and discarding the rest, and similarly for $b[\zZ]$. When adding (subtracting) an element $i$ to (from) a set $U$ we use the notation {$U\cup \set{i}$ ($U- \set{i}$).}
% $S\pm i$. 
We call an inequality $i\in {[m]}$ \textit{tight} at ${x}\in P$ if the inequality is satisfied with equality:
$${M}^{i}{x} = b^{i},$$
where we introduce the shorthand ${M}^{i} = {M}[\{i\}]$ and $b^{i} = b[\{i\}]$. A point ${v}\in P$ is then a vertex if and only if it is the unique solution to $d$ tight inequalities. Let $\zZ_{x},\zZ_{x'}\subset {[m]}$ index the tight inequalities of two points ${x},{x}^{\prime}\in P({M},b)$.%
{
\begin{defn}\label{def:joint-rank}
{\rm
The \textit{joint rank} of two points ${x},{x}^{\prime}\in P({M},b)$, denoted $\text{rank}_{{M}}(x,x')$, is given by the rank of the matrix ${M}[\zZ_{x}\cap \zZ_{x^{\prime}}]$.}
\end{defn}
\noindent Two vertices ${v}, {v'} \in P({M},b)$ are called \textit{neighbors} if $\text{rank}_{{M}}(v,v') = d-1$. More generally, we call two vertices $k$-neighbors if $\text{rank}_{{M}}(v,v') = d-k$.
}

\subsection{Mapping to cones}\label{sec:cones}
It is convenient (and standard) to formulate the {double description (DD)} method using cones rather than polytopes directly. Recall that if $P = P({\bar{M}},\bar{b})$ is a polyhedron in $\RR^{d}$, with ${\bar{M}}\in\RR^{m\times d}$ and $\bar{b}\in\RR^{m}$, then we can construct a cone (see e.g., \cite{ziegler2012lectures}) by introducing an additional homogenizing coordinate $x_{0}$ so that
\begin{eqnarray}
C(P) := P({M},\zero_{m})\quad \text{where} \quad%
{M} = \begin{bmatrix} 1 & \zero_{d}^{T}\\ -\bar{b} & {\bar{M}}\end{bmatrix}\notag
\end{eqnarray}
and $\zero_{k}\in \RR^{k}$ is a vector of all zeros. 
%This is the $H$-description of the cone $P^{c}$.
We see that the original polyhedron $P$ is recovered by restricting $x_{0} = 1$,
\begin{eqnarray}
P = \left \{x\in \RR^{d}~:~\begin{bmatrix}1 \\ x \end{bmatrix}\in C(P)\right \}.\notag
\end{eqnarray}
By the Minkowski-Weyl theorem for cones \cite{ziegler2012lectures}, for each cone ${C(P)} = P({M},\zero)$ there is an equivalent $V$-representation. Letting $R \in \RR^{d\times n}$ be a finite point set, consisting of $n$ column vectors, then such a representation is given by positive combinations of the vectors in $R$,
\begin{eqnarray}
{{\cone}(R) :=}   \left \{x = R\lambda ~:~\lambda\in \RR^{n},~\lambda_{i}\geq 0\right \}.\notag
\end{eqnarray}
%Such finitely generated cones will sometimes be denoted $C = \text{cone}(R)$. In particular, 
%{Vertex enumeration for $P$ is equivalent to extreme ray enumeration for $P^c$.}
%{S}uppose that we have a polytope $P$ for which the vertex enumeration problem is solved, so that $P = \text{conv}(V)$ where the columns of $V\in\RR^{d\times n}$ are the vertices of $P$. We have the following basic fact in convex geometry (see e.g., \cite{ziegler2012lectures}) that
%\begin{eqnarray}
%x \in \text{conv}\left ( V \right )\quad \iff\quad \begin{bmatrix} 1 \\ x \end{bmatrix} \in  \text{cone}\begin{bmatrix} {\onee_n^{T}} \\ V \end{bmatrix},%
%\label{eq:conv-cone}
%\end{eqnarray} 
%where {$\onee_n$} is a \textit{row} vector of all ones.
% {and $P^c$ is the cone}. 
%If we define the cone $P^{c} = \text{cone}\left (( \one, V )^{T}\right )$, 
%{T}hen we recover $P = \left \{x\in\RR^{d}~:~ {(1,x)}\in P^{c}\right \}$ \comm{I think the coordinates are understood without the transpose. Otherwise we should write $(1,x^T)^T$.}. 
%Thus we can easily convert between the description of $P$ as a polytope and the description of $P$ as a slice of a cone. {Accordingly, we refer to the vertices of the polytope $P$ and extreme rays of the cone $P^{c}$ interchangeably.}
The pair of matrices $({M},R)$ which ensure the equality of the $H$ and $V$ representations is called a DD pair. It is standard to refer to ${M}$ as the \textit{generating} matrix, while $R$ is called the \textit{representation} matrix. Constructing the representation matrix $R$ from the generating matrix {$M$} is the {extreme ray enumeration problem}, which is the analogue for cones of the vertex enumeration problem for polytopes.

\subsection{The DD algorithm}\label{sec:dd-algorithm}
The DD method proceeds as follows. Suppose we wish to find a DD pair $({M},R)$ for given generating matrix ${M}\in\RR^{m\times d}$ and unknown representation matrix $R\in \RR^{d\times n}$. Letting the rows of ${M}$ be indexed by ${[m]}$, consider the submatrix ${M}[I_{k}]$ of ${M}$ whose rows are indexed by the subset $I_{k}\subset {[m]}$ which consists of $k\leq m$ elements. For notational convenience we let ${M}_{k} := {M}[I_{k}]$. Suppose now that we have a DD pair $({M}_{k},R^{(k)})$, where $R^{(k)}$ is the solution to the extreme ray enumeration problem for the submatrix ${M}_{k}$. Clearly, if $k$ equals $m$ then we are done. If $k$ is less than $m$, then we wish to construct the DD pair $({M}_{k+1},R^{(k+1)})$ from $({M}_{k},R^{(k)})$, where ${M}_{k+1}$ has rows indexed by $I_{k+1} = I_{k}{\cup \set{i}}$ with $i \in {[m]} - I_{k}$. {We will denote by $P_{k}$ the polytope whose $H$-representation is generated by ${M}_{k}$, or equivalently, whose $V$-representation is generated by the positive hull of $R^{(k)}$.}

%There is a geometrical picture which one can attribute to the DD method. Consider, for instance, that the DD pair $(A_{k},R^{(k)})$ defines a cube. The addition of an additional inequality amounts to intersecting the cube with a hyperplane. In general this will result in new extreme points. The DD method is a scheme for identifying these extreme points.
%\begin{figure}[h!]
%\centering
%\includegraphics[width = .4\linewidth]{dd-method}
%\caption{\comm{I think we can remove this picture. The figure in the main text should suffice.}
%}
%\label{fig:dd-method}
%\end{figure}

We wish find the representation matrix $R^{(k+1)}$ from the known generating matrix ${M}_{k+1}$ and the DD pair $({M}_{k},R^{(k)})$, where we let $J_{k}$ index the $|J_{k}|\in\NN$ columns of $R^{(k)}$. The newly introduced inequality ${M}^{i}\, x\geq 0$ partitions the ambient space $\RR^{d}$ into three subspaces:
\begin{eqnarray}
H_{i}^{+} &=& \left \{x\in \RR^{d}~:~{M}^{i}x > 0\right \},\notag\\
H_{i}^{0} &=& \left \{x\in \RR^{d}~:~{M}^{i}x = 0\right \},\\
H_{i}^{-} &=& \left \{x\in \RR^{d}~:~{M}^{i}x < 0\right \}\notag.
\end{eqnarray}
Let us index the columns $r_{j}$ ($j\in J_{k}$) of $R^{(k)}$ according to which subspace they reside in:
\begin{eqnarray}
J_{k}^{+} &=& \left \{j\in J_{k}~:~r_{j} \in H_{i}^{+}\right \},\notag\\
J_{k}^{0} &=& \left \{j\in J_{k}~:~r_{j} \in H_{i}^{0}\right \},\\
J_{k}^{-} &=& \left \{j\in J_{k}~:~r_{j} \in H_{i}^{-}\right \}\notag.
\end{eqnarray}
New extreme rays of $R^{(k+1)}$ are those which lie in $H_{i}^{0}$, which in turn are generated by taking positive (or conic) combinations of extreme rays indexed by $J_{k}^{+}$ and $J_{k}^{-}$. The negative rays lying in $H_{i}^{-}$ are discarded as they do not satisfy the desired inequality. That this is sufficient for constructing $R^{(k+1)}$ is guaranteed by the following proposition.
\begin{pro}{(\cite{fukuda2005double})}\label{pro:dd-lemma}
Let $({M}_{k},R^{(k)})$ be a DD pair and let $i\in {[m]} - I_{k}$, then $({M}_{k+1},R^{(k+1)})$ is a DD pair where $R^{(k+1)} \in \RR^{d\times |J_{k+1}|}$ with column vectors $r_{j}$ ($j\in J_{k+1}$) defined by
\begin{eqnarray}
J_{k+1} &=& J_{k}^{+}\cup J_{k}^{0}\cup (J^{+}_{k}\times J_{k}^{-})\quad \text{and}\notag\\
r_{jj'} &=& \left ({M}_{i}r_{j}\right )r_{j^{\prime}} - \left ({M}_{i}r_{j^{\prime}}\right )r_{j^{}}\quad\text{for all}\quad (j,j^{\prime})\in J_{k}^{+}\times J_{k}^{-}.
\end{eqnarray}
\end{pro}

A downside of the DD method as implied by Proposition~\ref{pro:dd-lemma} is that  many of the new rays generated turn out to be redundant, however, a refinement is possible. Recalling that $\zZ_{r}\subset {[m]}$ indexes the tight inequalities of a ray $r$, we call a ray $r\in {C(P)}$ extreme if $\text{rank}({M}[\zZ_{r}]) = d-1$. Moreover, two rays $r,r^{\prime}\in {C(P)}$ are adjacent extreme rays if $\text{rank}({M}[\zZ_{r}\cap\zZ_{r^{\prime}}]) = d-2$. Modifying Definition~\ref{def:joint-rank} for cones, two rays are $k$-neighbors if their joint rank is $d-k-1$.

We now have the strengthened proposition
\begin{pro}{(\cite{fukuda2005double})}\label{pro:dd-lemma-adj}
Let $({M}_{k},R^{(k)})$ be a DD pair such that $\text{rank}({M}_{k}) = d$ and let $i\in {[m]} - I_{k}$, then $({M}_{k+1},R^{(k+1)})$ is a DD pair where $R^{(k+1)} \in \RR^{d\times |J_{k+1}|}$ with column vectors $r_{j}$ ($j\in J_{k+1}$) defined by
\begin{subequations}
\begin{eqnarray}
J_{k+1} &=& J_{k}^{+}\cup J_{k}^{0}\cup \Adj\quad \text{and}\\
\Adj &=& \left \{ (j,j^{\prime})\in J_{k}^{+}\times J_{k}^{-}~:~\text{rank}({M}_{k}[\zZ_{r_{j}}\cap\zZ_{r_{j^{\prime}}}]) = d-2\right \}\\
r_{jj'} &=& \left ({M}_{i}r_{j}\right )r_{j^{\prime}} - \left ({M}_{i}r_{j^{\prime}}\right )r_{j^{}}\quad\text{for all}\quad (j,j^{\prime})\in \Adj.\label{eq:dd-vertex}
\end{eqnarray}
\end{subequations}
\end{pro}
With Proposition~\ref{pro:dd-lemma-adj} in hand, to construct $R^{(k+1)}$ it suffices to consider conic combinations only of pairs of adjacent extreme rays lying on either side of the newly inserted inequality ${M}^{i}x\geq 0$, rather than all possible positive combinations of extreme rays. This comes at the cost of checking whether extreme rays in question are indeed adjacent or not.

\subsection{Intersection of polytopes}\label{sec:intersection}

Let us first define the intersection of polytopes. Consider a pair of full-dimensional polytopes $P_{i} \subset \RR^{d}$ ($i=1,2$) given in their $H$-representations by the inequalities ${M}_{i}x\geq b_{i}$. Their intersection ${P_{12}} = P_{1}\cap P_{2}$ can be defined as
\begin{eqnarray}
{P_{12}} = \{x\in\mathbb{R}^{d}~:~{M}x\geq b\}\quad\text{where}\quad%
{M} = \begin{bmatrix}{M}_{1}\\ {M}_{2}\end{bmatrix},~%
b = \begin{bmatrix}b_{1}\\ b_{2}\end{bmatrix}.%
\label{eq:intersection}
\end{eqnarray}
If a point $v\in\mathbb{R}^{d}$ is a vertex of $P_{i}$ and if $v$ satisfies the enlarged set of inequalities ${M}x\geq b$, then it will also be a vertex of {$P_{12}$}.

 To get started with the DD method one requires an initial DD pair $({M}_{k},R^{(k)})$. 
In practice one may choose to start with a single inequality, or with any submatrix of ${M}_{k}$ such that $\text{rank}({M}_{k}) = d$. For our purposes, our target polytope is  
{$P_{12}$}, which is given as the intersection of two other polytopes, {$P_1$ and $P_2$}. 
Thus, using {Eq.~(\ref{eq:intersection})}, it is possible for us to take our initial DD pair to be the generating and representation matrices constructed from the $H$ and $V$ representations of {$P_1$}. New vertices will be generated as we sequentially insert  
{the} inequalities defining the facets of {$P_2$}.

{ 
\Proof{[{\bf Proof of Proposition \ref{pro:DD for intersection of pair of polytopes}}]
{Suppose that at a finite stage of the DD algorithm we introduce the inequality $h_v$ and $u=pv+(1-p)w$ is such that $h_v\cdot u=0$, where $p\in [0,1]$ and $w\in V_1$ is a neighbor of $v$.}
%The point $u$ obtained after the insertion of $h_v$ satisfies $h_v\cdot u=0$. 
From {the} equation {$h_v\cdot u=0$} we find that 
$$
p = \frac{1}{1-\frac{h_v\cdot v}{h_v\cdot w}} \in [0,1].
$$
If $w\in \phi(H_2)$ then 
we have
$$ 
h_w\cdot u = \frac{1}{1-\frac{h_v\cdot v}{h_v\cdot w}} h_{w}\cdot v + \frac{1}{1-\frac{h_v\cdot w}{h_v\cdot v}}h_{w}\cdot w \\
=
\frac{(h_v\cdot w)(h_w\cdot v)-(h_v\cdot v)(h_w\cdot w)}{h_v\cdot w - h_v\cdot v} \geq 0.
$$ 
{Here, conditions (1) and (3) imply that the denominator and the numerator are non-negative, respectively.} 
For $v'\in \phi(H_2)-\set{v,w}$  we have
$$
h_{v'}\cdot u = p h_{v'}\cdot v + (1-p)h_{v'}\cdot w \geq 0
$$
by conditions (1) {and (2)}.
Therefore the vertex $u$ survives to the final stage of the DD algorithm giving a vertex of $P_{12}$. 
}
}

\section{Combinatorial structure of the Mermin polytope}\label{sec:comb-mp1}

\subsection{Neighbors}

Two neighbor vertices specify an edge of the polytope.
For the description of neighbors let $\Omega$ be a cnc set. 
Its complement $\Omega^c{=E-\Omega}$ can be regarded as a loop on the Mermin torus; see Fig.~(\ref{fig:mermin-neighbors}). 
%We will write $l_\Omega$ for this loop, i.e., for the set of edges that the loop intersects. 
Then an edge of the graph of the polytope $\MP$ {can be described} by a function 
$$\varphi:\Omega^c \to \ZZ_2$$ 
(We prefer the additive group instead of $\set{\pm 1}$; see \cite[Section 4.2]{okay2022mermin}.)
{For a maximal non-zero isotropic subspace $I=\set{0,a,b,a+b}$ the non-zero edges can be represented by a triangle $C=\set{a,b,a+b}$ in the Mermin torus.}

\begin{thm}[\!\cite{okay2022mermin}]
\label{thm:MP1 neighbors} 
Let $v,w\in \MP$ be two vertices associated with $(\Omega_v,{\gamma}_v)$ and $(\Omega_w,{\gamma}_w)$.
Then $v$ and $w$ are neighbors if and only if there exists a cnc set $\Omega$ and a function $\varphi:\Omega^c\to \ZZ_2$ such that
$$
w_a = \left\lbrace
\begin{array}{ll}
v_a & a\in \Omega \\
v_a+\frac{(-1)^{\varphi(a)}}{2} & \text{otherwise.}
\end{array}
\right.
$$
Moreover, we have
\begin{enumerate}
\item {$\Omega^c=(\Omega_v-\Omega_w) \cup (\Omega_w-\Omega_v)$,}
%$\Omega = (\Omega_v-\Omega_w) \cup (\Omega_w\cup\Omega_v)^c$,
\item $\varphi(a)={\gamma}_v(a)+1$ for $a\in \Omega_v{\cap \Omega^c}$,
\end{enumerate}
and for a fixed $\Omega$ there are two such functions $\varphi$ corresponding to $\varphi(b)\in \ZZ_2$ for $b\in \Omega^c\cap C$, where $C$ is a triangle with $|C\cap \partial\Omega_v|=1$.
\end{thm}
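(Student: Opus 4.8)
This is a characterization of the edges ($1$-faces) of $\MP$, so my plan is to reduce ``$v,w$ are neighbors'' to the joint-rank criterion of Definition~\ref{def:joint-rank}: that $\rank_{\bar M^\nloc}(v,w)$ equals $d-1$, equivalently that the segment $[v,w]$ is a face of $\MP$. The natural first move is to analyze the edge direction $w-v$. Since $v=v_{(\Omega_v,\gamma_v)}$ and $w=v_{(\Omega_w,\gamma_w)}$ are supported on $\Omega_v$ and $\Omega_w$ respectively, I would first prove that for a genuine edge no coordinate in the common support $\Omega_v\cap\Omega_w$ can have its sign reversed; this is the step that really uses that $[v,w]$ is a face rather than a longer chord, and it forces $w-v$ to be supported exactly on the symmetric difference $(\Omega_v-\Omega_w)\cup(\Omega_w-\Omega_v)$. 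Setting $\Omega:=E-\big((\Omega_v-\Omega_w)\cup(\Omega_w-\Omega_v)\big)$ then makes $\Omega^c$ the support of $w-v$, which is item (1); reading off the signs of $w-v$ defines $\varphi:\Omega^c\to\ZZ_2$, and on the coordinates $a\in\Omega_v\cap\Omega^c$ that switch off as one passes from $v$ to $w$ the sign is forced to be $(-1)^{\gamma_v(a)+1}$, which is item (2).

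Two substantive tasks remain: showing that this $\Omega$ is a cnc set, and showing that $[v,w]$ is an edge precisely when $\Omega^c$ is a single loop on the Mermin torus. For the first, I would check closedness and the existence of an outcome assignment directly: because $w-v$ vanishes on $\Omega$, the assignments $\gamma_v,\gamma_w$ must coincide on $\Omega\cap\Omega_v=\Omega\cap\Omega_w=\Omega_v\cap\Omega_w$, and this common assignment extends to an outcome assignment on $\Omega$ using the product cocycle $\beta$; closedness is most cleanly seen from the torus picture of Fig.~(\ref{fig:mermin-scenario}), where $\Omega^c$ is a $1$-cycle and $\varphi$ is the $\ZZ_2$-labelling of its edges. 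This reproduces the description, promised in Section~\ref{sec:comb-mp1}, of an edge by a pair $(\Omega,\varphi)$.

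The heart of the argument, and the step I expect to be the main obstacle, is the rank computation. I would enumerate the stabilizer inequalities $h_S\cdot x\ge -1$ that are tight at both $v$ and $w$ (equivalently, tight at each), using the formula for $h_S\cdot v_{(\Omega,\gamma)}$ to see that $S$ contributes exactly when the triangle of $S$ meets $\Omega^c$ compatibly, and then argue that $\rank_{\bar M^\nloc}(v,w)=d-1$ iff $\Omega^c$ admits no decomposition into two smaller admissible cycles. Indeed, if $\Omega^c$ split as a disjoint union of two loops then $v$ and $w$ would be joined through an intermediate vertex and their joint rank would drop by one (they would be $2$-neighbors), whereas a single indecomposable loop leaves exactly one free direction. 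Matching ``rank is maximal'' to ``$\Omega^c$ is one loop'' is the combinatorially heaviest part; the simplicial-distribution / homological viewpoint of \cite{okay2022mermin,okay2022simplicial} is what makes the cycle bookkeeping manageable. Finally, the count of two admissible $\varphi$ for fixed $\Omega$ follows once item (2) pins $\varphi$ on $\Omega_v\cap\Omega^c$: the only residual freedom is the single bit $\varphi(b)\in\ZZ_2$ on the edge $b\in\Omega^c\cap C$ of the distinguished triangle $C$ with $|C\cap\partial\Omega_v|=1$, the cocycle and closedness conditions then determining $\varphi$ on the rest of the loop and hence the two neighbors $w$.
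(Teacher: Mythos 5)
The first thing to say is that the paper does not prove this statement at all: Theorem~\ref{thm:MP1 neighbors} is imported verbatim from \cite{okay2022mermin} (note the citation in its header), and the present paper only uses it, via Corollaries~\ref{cor:intersection two elements and outcome assignments coincide} and~\ref{cor:2 neighbors}. So there is no in-paper proof to compare against; what can be assessed is whether your proposal would stand on its own, and whether it is consistent with the joint-rank machinery the paper does set up in Appendix~\ref{sec:comb-mp1}.

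Your outline points in the right direction --- reducing neighborliness to $\rank_{\bar M^\nloc}(v,w)=d-1$ and translating the tight-inequality bookkeeping into cycles on the Mermin torus is exactly the kind of argument the appendix's Lemmas~\ref{lem:separating deterministic part} and~\ref{lem:balanced} are built for --- but as written it has two genuine gaps. First, the opening claim that an edge of $\MP$ cannot reverse a sign on the common support $\Omega_v\cap\Omega_w$ is asserted, not proved; you flag it as ``the step that really uses that $[v,w]$ is a face,'' but no mechanism is given for why a sign flip would force the joint rank below $d-1$. Second, and more seriously, the equivalence ``joint rank is $d-1$ iff $\Omega^c$ is a single admissible loop'' is exactly the content of the theorem, and you explicitly defer it (``the combinatorially heaviest part'') to the homological viewpoint of \cite{okay2022mermin,okay2022simplicial}. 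In the language of the appendix, what must actually be verified is that for $u=\tfrac12(v+w)$ one has $|E_u|+|\hat\zZ_u|-b(G_u)=d-1$ precisely when the symmetric difference is one loop with $\varphi$ satisfying conditions (1)--(2), and that any decomposition into two loops (or a sign flip on the common support) strictly lowers this quantity; none of that computation appears. The closing count of two admissible $\varphi$ per $\Omega$ likewise rests on an unproved claim that the cocycle conditions propagate a single bit $\varphi(b)$ around the loop. A minor but real additional issue: you read off $\varphi$ from ``the signs of $w-v$'' in the $\{\pm1,0\}$ expectation-value coordinates, where the increments on $\Omega^c$ are $\pm1$ for $a\in\Omega_v\triangle\Omega_w$, whereas the theorem's formula $w_a=v_a+(-1)^{\varphi(a)}/2$ is stated in the additive (probability) convention the paper adopts just before the theorem; the two conventions need to be reconciled before item (2) can be ``read off.'' In short, the proposal is a plausible roadmap that is consistent with the paper's framework, but the theorem's substance is deferred rather than established.
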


 \begin{figure}[h!] 
  \centering
  \includegraphics[width=.8\linewidth]{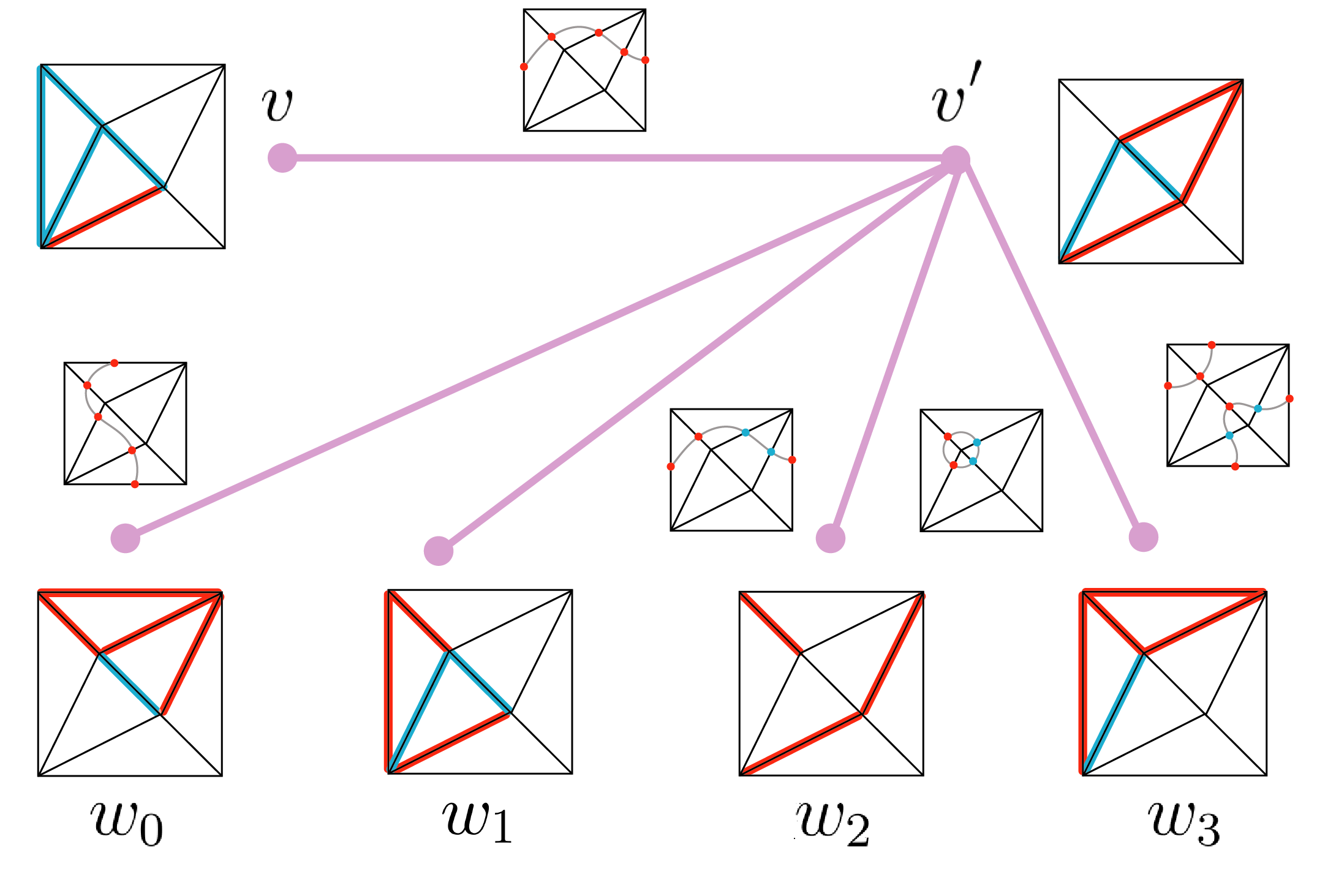} 
\caption{ 
Loops that connect the neighbor vertices.
}
\label{fig:mermin-neighbors}
\end{figure}  
 
{
\Cor{\label{cor:intersection two elements and outcome assignments coincide}
Let $v$ be {a} $\bar T_2$ vertex. For a neighbor $w$ of $v$ we have that $|\partial\Omega_v\cap \Omega_w|=2$. Moreover, ${\gamma}_v$ and ${\gamma}_w$ coincide on $\partial\Omega_v\cap \Omega_w$.    
}
\Proof{
This follows from Theorem \ref{thm:MP1 neighbors}. More specifically, Fig.~(18) in \cite{okay2022mermin} gives a list of the neighbors of the canonical $\bar T_2$ vertex.
} 
}

\Cor{\label{cor:2 neighbors}
The vertices $v$ and $w_i$, where $i=0,1,2,3$, {given in Eq.~(\ref{eq:canonical-vertex}) and Eqns.~(\ref{eq:u0})-(\ref{eq:u3})} of $\MP$ are {not neighbors, but they are common neighbors of a vertex {$v'$}; see Fig.~(\ref{fig:mermin-neighbors}).}}
\Proof{
By Theorem \ref{thm:MP1 neighbors} neighbors are described by loops. {We observe that i}t is not possible to find $\varphi$ satisfying the specified conditions since there exists a point in $\Omega_v\cap \Omega_{w_i}$ at which the values of ${\gamma}_v$ and ${\gamma}_{w_i}$ differ.
}

\subsection{Joint rank}

Consider the matrix ${N}=\bar M^\nloc$
%\comm{$\bar M$ is used for $\overline\MP$}
~whose rows are indexed by non-local stabilizer subgroups and columns are indexed by non-local Pauli operators; see Eq.~(\ref{eq:MP1 polytope}). Since we can parametrize a stabilizer subgroup by a pair $(I,{\gamma})$ consisting of a maximal isotropic subspace $I$ and an outcome assignment ${\gamma}:I\to \ZZ_2$, we will think of the rows being indexed by such pairs. 
%The rows are indexed by $E^\nloc$. 
For $x\in \MP$ the set $\zZ_x$ consists of the {tight} inequalities, i.e., the rows for which the corresponding inequality is an equality:
$$
h_{I,{\gamma}} \cdot x =0
$$
where $h_{I,{\gamma}}\in H(\MP)$. Our goal is to compute
$$
\rank(v,w) = {N}[\zZ_v\cap \zZ_w]
$$
for two vertices $v,w$ of $\MP$. We will follow a graph-theoretic approach. Consider the complete bipartite graph $K_{3,3}$. Its vertices are partitioned into two types: (1) maximal non-local isotopic subspaces and (2) the elements of $E^\nloc$. The edges indicate the containment relation, that is, there is an edge $e_{(a,I)}$ between $a$ and $I$ whenever $a\in I${; see Fig.~(\ref{fig:k33})}

 \begin{figure}[h!] 
  \centering
\includegraphics[width=.25\linewidth]{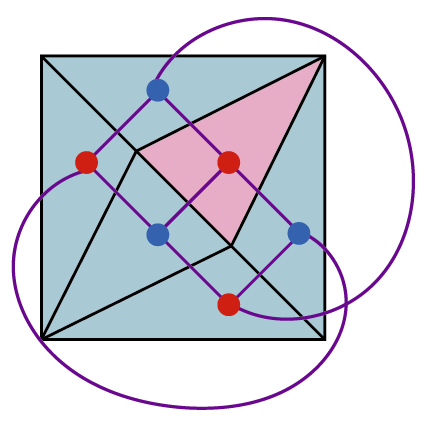} 
\caption{  
}
\label{fig:k33}
\end{figure}  

{The f}irst step in computing the joint rank is to observe that given $v,w$ the joint rank can be computed as the rank of ${N}[\zZ_u]$ where $u=pv+(1-p)w$ {and} $p\in (0,1)$. 
Let  $E_u$ denote the set of $a\in E^\nloc$ such that $u_a\in \set{\pm 1}$. We refer to such elements as deterministic edges.
{Next}, we construct a subgraph $G_u$:
\begin{itemize}
\item The vertex set is given by $V(G_u)=\hat\zZ_u$ where $\hat \zZ_u\subset \zZ_u$ consisting of 
\begin{enumerate}
\item the single inequality for each $I$ with no deterministic edges,
\item choice of one of the inequalities for each $I$ with a single deterministic edge,
\item no inequalities if $I$ involves three deterministic edges.  
\end{enumerate} 

\item The edge set $E(G_u)$ is determined by the subgraph requirement.
\end{itemize}
{Note that if two of the edges of a triangle is deterministic then the third is also deterministic \cite[Lemma 3.9]{okay2022mermin}. Therefore we can only have the three cases in the description of the edges.}

\Lem{\label{lem:separating deterministic part}
We have
$$
\rank({N}[\zZ_u]) = |E_u| + \rank({N}[\hat\zZ_u]).
$$
}
\Proof{
{Let $I=\set{0,a,b,a+b}$ be a maximal non-local isotropic subspace and $C=\set{a,b,a+b}$ the corresponding triangle.
The result follows from the observation that in a triangle $C$ with a single deterministic edge  we have $|\zZ_u\cap C|=2$ and 
$$
\begin{bmatrix}
(-1)^r & (-1)^s & (-1)^{r+s{+}\beta(a,b)}   \\
(-1)^{r+1} & (-1)^{s+1} & (-1)^{r+s{+}\beta(a,b)}
\end{bmatrix} \sim
\begin{bmatrix}
(-1)^r & (-1)^s & (-1)^{r+s{+}\beta(a,b)}   \\
0 & 0& 2(-1)^{r+s{+}\beta(a,b)}
\end{bmatrix}.
$$
Therefore the rank is given by $2$, i.e., one deterministic edge plus one tight inequality.
When all three edges in $C$ are deterministic then a similar calculation shows that the rank is $3$, i.e., three deterministic edges and zero tight inequalities.
}
}

{To conclude our joint rank computation w}e will use some graph-theoretic notions from \cite{zaslavsky2013matrices}.
{We can associate} $G_u$  a bidirection:
$$
\eta(a,h) = h_a
$$
where $h=h_{(I,{\gamma})}\in \hat\zZ_u$ and $a\in {C}$. The bidirection {provides a}  a canonical way to {associate} a sign {with} the edges:
$$
\sigma(a) = -\eta(a,h)\eta(a,h') 
$$
where $h'=h'_{I',{\gamma}}\in \hat\zZ_u$ and $a\in{C\cap C'}$. {Here $C'=I'-\set{0}$.}
This gives a signed graph $G_u^\sigma$. A $2$-regular connected {subgraph} of a graph is called a {\it circle}. If the initial graph is signed then the sign of the circle is obtained by multiplying the sign of its edges.
A signed graph is called {\it balanced} if every circle is positive.  Let $b(G_u)$ denote the number of connected components that are balanced.

\Lem{\label{lem:balanced}
We have
$$
\rank({N}[\hat\zZ_u]) = |\hat \zZ_u|-b(G_u).
$$
}
\Proof{
This follows from \cite[Theorem 4.1]{zaslavsky2013matrices}.
}

\Proof{[{\bf Proof of Proposition \ref{pro:wi}}]
{Corollary} \ref{cor:2 neighbors} shows that $v$ and $w_i$ {are not neighbors, but {both} are  neighbors of a vertex $v'$.}
% are $2$-neighbors. 
The joint rank computation can be done using the following formula based on Lemma \ref{lem:separating deterministic part} and \ref{lem:balanced}:
$$
\rank(v,w_i) =  |{E_{u_i}}|+|\hat \zZ_{u_i}|-b(G_{u_i})
$$
where $u_i=pv+(1-p)w_i$ for $p\in (0,1)$.
In Fig.~(\ref{fig:ranks}) we observe that $b(G_{u_i})=0$ for $i=0,1,2,3$. Therefore the rank is $7$ for $i=0$, and otherwise $6$.
}

\begin{figure}[h!]
\centering
\begin{subfigure}{.25\textwidth}
  \centering
   \includegraphics[width=.8\linewidth]{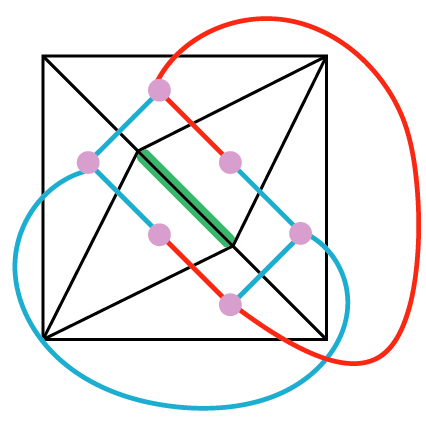}
  \caption{$u_{0}$}
  \label{fig:ranks-u0}
\end{subfigure}%
\begin{subfigure}{.25\textwidth}
  \centering
   \includegraphics[width=.8\linewidth]{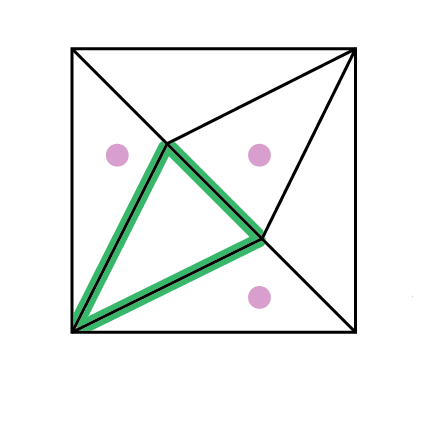}
  \caption{$u_{1}$}
  \label{fig:ranks-u1}
\end{subfigure}%
\begin{subfigure}{.25\textwidth}
  \centering
   \includegraphics[width=.8\linewidth]{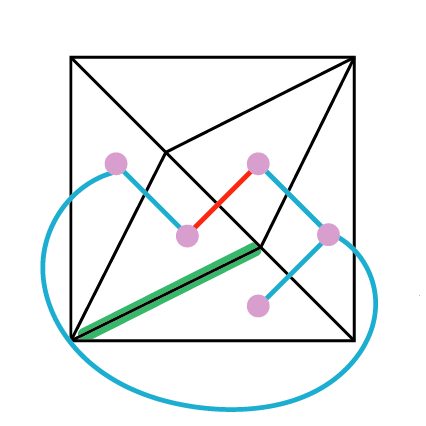}
 \caption{$u_{2}$}
  \label{fig:ranks-u2}
\end{subfigure}%
\begin{subfigure}{.25\textwidth}
  \centering
   \includegraphics[width=.8\linewidth]{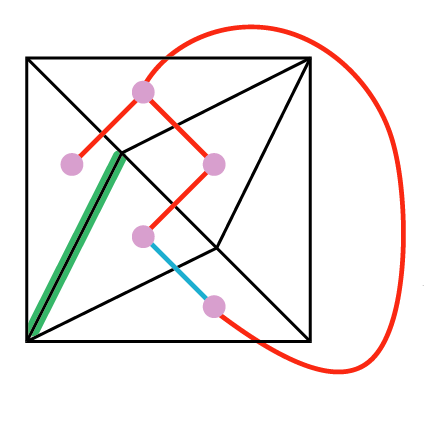}
  \caption{$u_{3}$}
  \label{fig:ranks-u2}
\end{subfigure}
\caption{The signed graph $G_{u_i}^\sigma$ for $i=0,1,2,3$. Colors on the edges of the graph indicate the sign: blue for $1$ and red for $-1$. Green color indicates the deterministic edges in the torus.  
}
\label{fig:ranks}
\end{figure}

\section{Proof of Lemma~\ref{lem:tight-ui}}\label{sec:proof-Lem-tight}
 
We seek all $h\in H_{2}$ such that $h\cdot u_{i} = 0$ ($i=0,1,2,3$). Table~\ref{tab:tight-cube-ui} can be obtained straightforwardly by reading off the components of $u_{i}$ whose values are $\pm 1$. To obtain the tight CHSH inequalities we use the equation $h_{(\partial\Omega,\gamma)}\cdot u_{i} = 0$ and run over all possible pairs $(\partial\Omega,\gamma)$. Note that for all $u_{i}$ there exists some $a\in E^{\nloc}$ such that $(u_{i})_{a} = \pm 1$. Thus it is useful to distinguish the two cases (i) $a\in \partial\Omega$ and (ii) $a\notin \partial\Omega$. Representatives of these two cases for $u_{0}$ are shown below:
\begin{eqnarray}
\includegraphics[width=0.6\linewidth]{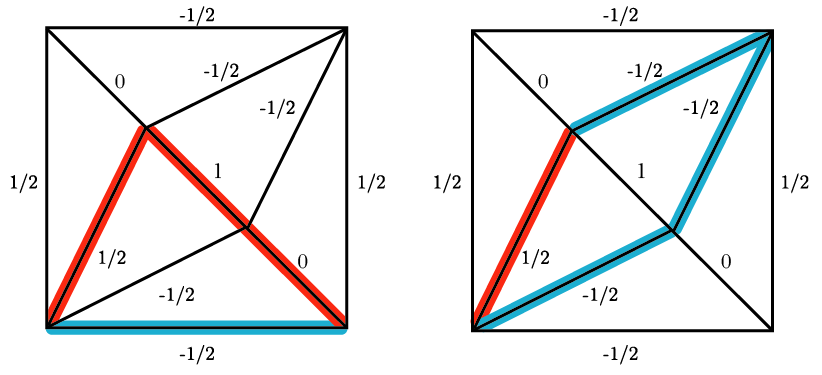}
\end{eqnarray}
Let us sketch how we obtain these inequalities. If $a \in \partial\Omega$ then there are four possible sets $\partial\Omega$ and $\gamma$ must be such that $\gamma(a) = 1$ and the remaining terms must equal $-1$. For $u_{0}$ this can be arranged for all four sets. For the case where $a\notin \partial\Omega$ we need $\partial \Omega$ such that all components $(u_{i})_{b} = \pm 1/2$, where $b\in \partial\Omega$. There are two such instances. The calculations for the other vertices $u_{i}$ follow analogously. Note that by construction all $u_{i}$ must be tight at $h_{v}\cdot x\geq 0$, where $v\in \MP$ is the canonical vertex in Eq.~(\ref{eq:canonical-vertex}).

\section{Proof of Proposition~\ref{pro:decomposition}}\label{sec:proof-Pro-decomp}

Satisfying the NN inequality with equality forces all {$\delta^\xi$} in Eq.~(\ref{eq:convex-bar_C}) to have $({\delta^\xi})_{a} = \pm 1$ whenever $(u_{i})_{a} = \pm 1$. Every such NN inequality divides the number of prospective {$\delta^\xi$}'s in half.\footnote{For instance, $u_{1}$ satisfies 
{$(u_1)_{ZZ}=(u_1)_{XX}=-(u_1)_{YY}=1$,}
%$\Span{ZZ}=\Span{XX}=-\Span{YY}=1$, 
yielding precisely $4 = 2^{5}/2^{3}$ possible $\delta^\xi$'s. The remaining CHSH inequalities impose no further restrictions.} The remaining $m_{i}\in \mathbb{N}$ CHSH inequalities can be organized into a matrix ${\bar N^{(i)}}\in \RR^{m_{i}\times 10}$ whose rows are given by the vectors $h\in \RR^{10}$ that define a CHSH inequality. Our goal is to find all $\delta^\xi \in V_{2}$ that satisfy ${\bar N^{(i)}}\delta^\xi = \zero_{m_{i}\times 1}$. Since $V_{2}$ is known this amounts to the straightforward (but tedious) combinatorial task of checking which $\delta^\xi$'s satisfy these conditions.\footnote{This task is equivalent to computing the (partial) face lattice (see e.g., \cite{ziegler2012lectures}) of $\overline \CL$.} This yields Table~\ref{tab:tight-vertices}.

We demonstrate how to obtain the non-negative parameters for $u_{0}$. The cases $u_{1}$ and $u_{3}$ follow analogously. We will discuss $u_{2}$ at the end. We seek a solution to Eq.~(\ref{eq:convex-decomposition-a}) where $R$ is given in Table~\ref{tab:R-u0}. There are four unknowns, which we denote as $q_{1},\cdots,q_{4}$, in ten equations and it suffices to isolate a subset of four linearly independent equations from this set:
\begin{eqnarray}
1 &=& q_{1}+q_{2}+q_{3}+q_{4}\notag\\
0 &=& q_{1}+q_{2}-q_{3}-q_{4}\notag\\
0 &=& q_{1}-q_{2}+q_{3}-q_{4}\notag\\
\frac{1}{2} &=& q_{1}+q_{2}+q_{3}+q_{4}\notag.
\end{eqnarray}
These come from the constraints Eq.~(\ref{eq:convex-decomposition-a}) with Pauli components $II$, $XY$, $XZ$, and $YX$. Solving these equations yields $q_{1}=q_{2}=q_{3}=q_{4}=1/4$. Should the solution not have been non-negative (which is generally not guaranteed) we could have implemented a two-phase linear programming approach \cite{luenberger1984linear} to find such a solution. For our purposes it, in fact, suffices to find {any} decomposition at all. One can check that the coefficients given in Eq.~(\ref{eq:decomp-u2}) with $R_{2}^{(2)}$ yield $u_{2}$.

\end{document}